\renewcommand{\vv}{d}
\newcommand{\G}{\mathcal{G}}
\newcommand{\picunit}{3cm}
\newcommand{\halfunit}{1.1cm}
\newcommand{\transitions}{{\mathcal T}}
\newcommand{\trans}{\longrightarrow_{\transitions}}
\newenvironment{slexample}
    {\begin{example}\rm}
    {\hfill $\triangleleft$ \end{example}}
\newenvironment{sldefinition}
    {\begin{definition}\rm}
    {\hfill $\triangleleft$ \end{definition}}
\newcommand{\CP}{\text{\sc C}[\WW]}
\newcommand{\VS}{\mathcal{V}}
\newcommand{\enc}[1]{\text{\sc enc}(#1)}
\newcommand{\lin}[1]{\text{\sc Lin}(#1)}
\newcommand{\otu}[2]{#1^{#2}}
\newcommand{\pfin}[1]{\mathcal{P}_\text{\tiny fin}(#1)}
\newcommand{\degpol}[2]{\text{\sc deg}_{#1}\, #2}
\newcommand{\degmon}[1]{\text{\sc deg}\, #1}
\newcommand{\wqo}{\text{\sc wqo}\xspace}
\newcommand{\age}[1]{\text{\sc Age}(#1)}
\newcommand{\agepar}[2]{\text{\sc Age}(#1, #2)}
\newcommand{\id}[1]{\textrm{Id}_{#1}}
\newcommand{\eqidealgen}[1]{\langle #1 \rangle}
\newcommand{\size}[1]{|#1|}
\newcommand{\prof}[3]{(#2,#3)\textsc{-prof}(#1)}
\newcommand{\restr}[2]{{#1}{|#2}}
\renewcommand{\a}{\alpha}
\renewcommand{\c}{c}
\renewcommand{\d}{c'}
\newcommand{\Z}{\mathbb{Z}}
\renewcommand{\b}{\beta}
\newcommand{\N}{\mathbb{N}}
\newcommand{\Qf}{\mathbb{Q}}
\newcommand{\Q}{Q}
\newcommand{\QQ}{\mathcal{Q}}
\newcommand{\w}{a}
\renewcommand{\v}{b}
\newcommand{\ww}{c}
\newcommand{\game}[2]{\mathcal{G}(#1,#2)}
\newcommand{\Spoiler}{\text{\sc Spoiler}\xspace}
\newcommand{\Reducer}{\text{\sc Reducer}\xspace}
\newcommand{\gamered}{\ll}
\newcommand{\B}{\mathcal{B}}
\newcommand{\Sof}[1]{\mathcal{S}(#1)}
\newcommand{\Ssub}[1]{\mathcal{S}_{#1}}
\newcommand{\Pres}[1]{\mathcal{P}_{#1}}
\newcommand{\I}{\mathcal{I}}
\newcommand{\J}{\mathcal{J}}
\newcommand{\R}{\mathbb{K}}
\newcommand{\Rf}{\mathbb{R}}
\newcommand{\WW}{\mathcal{A}}
\newcommand{\VV}{\mathcal{B}}
\newcommand{\W}{A}
\newcommand{\Vs}{B}
\newcommand{\Bs}{B}
\newcommand{\Us}{C}
\newcommand{\E}{{\embed{\WW}}}
\newcommand{\M}{\mathcal{M}}
\newcommand{\embed}[1]{\text{\sc Emb}_{#1}}
\newcommand{\Gr}{Gr{\"o}bner\xspace}
\newcommand{\Fr}{Fra{\.i}ss{\'e}\xspace}
\newcommand{\Bu}{Buchberger\xspace}
\newcommand{\vars}[1]{\text{\sc vars}(#1)}
\newcommand{\red}[3]{#1 \longrightarrow_#2 #3} 
\newcommand{\redstar}[3]{#1 \longrightarrow^*_#2 #3}
\newcommand{\finitedomain}[1]{\langle#1\rangle}
\newcommand{\lcm}[2]{\text{\sc lcm}(#1,#2)}
\newcommand{\lc}[1]{\text{\sc lc}(#1)}
\newcommand{\lm}[1]{\text{\sc lm}(#1)}
\newcommand{\cc}[1]{\text{\sc cp}(#1)}
\newcommand{\ltsymb}{\text{\sc lt}}
\newcommand{\lt}[1]{\ltsymb(#1)}
\newcommand{\polyring}[2]{#1[#2]}
\newcommand{\monomials}[1]{\text{\sc Mon}[#1]}
\newcommand{\set}[1]{\left\{#1\right\}}
\newcommand{\setof}[2]{\left\{#1 \, \middle\vert \, #2\right\}}
\newcommand{\zeromon}{\mathbf{1}}
\newcommand{\zeropol}{\mathbf{0}}
\newcommand{\subseteqfin}{\subseteq_\text{fin}}
\newcommand{\setfromto}[2]{\set{#1\ldots #2}}
\newcommand{\setto}[1]{\setfromto 0 {#1}}
\newcommand{\domprod}{\otimes}
\newcommand{\domexp}[2]{{#1}^{(#2)}}
\newcommand{\lexorder}{\wo_\text{\tiny lex}}
\newcommand{\lexorderstrict}{\wostrict_\text{\tiny lex}}
\newcommand{\lexorderstrictinv}{\wostrictinv_\text{\tiny lex}}
\newcommand{\lexorderinv}{\woinv_\text{\tiny lex}}
\newcommand{\divorder}{\mid}
\newcommand{\wpo}{\sqsubseteq}
\newcommand{\wo}{\preceq}
\newcommand{\wostrict}{\prec}
\newcommand{\woinv}{\succeq}
\newcommand{\wostrictinv}{\succ}
\newcommand{\detorder}{\rightsquigarrow}
\newcommand{\embset}[1]{\restr \E #1}
\newcommand{\embsetprod}[2]{#1\boxtimes #2}
\newcommand{\prettyexists}[2]{\exists #1 : #2}
\newcommand{\defeq}{\stackrel{\text{\tiny def}}{=}}
\newtheorem{property}[theorem]{Property}
\newtheorem{definition}[theorem]{Definition}
\newtheorem{corollary}[theorem]{Corollary}
\newtheorem{example}[theorem]{Example}
\newtheorem{remark}[theorem]{Remark}
\newenvironment{claimproofnew}{\paragraph{\textnormal{Proof:}}}{
\medskip
\hfill $\triangleleft$ %$\square$
%\medskip
}
\newcommand{\para}[1]{\paragraph{\rm \bf #1.}}
\newcommand{\probname}[1]{{\sc #1}}
\begin{document}

%%
%% The "title" command has an optional parameter,
%% allowing the author to define a "short title" to be used in page headers.

\title{Equivariant ideals of polynomials}

%%
%% The "author" command and its associated commands are used to define
%% the authors and their affiliations.
%% Of note is the shared affiliation of the first two authors, and the
%% "authornote" and "authornotemark" commands
%% used to denote shared contribution to the research.
%
\author{Arka Ghosh}
\authornote{Partially supported by NCN grants 2019/35/B/ST6/02322 and 2022/45/N/ST6/03242.}
\orcid{0000-0003-3839-8459}
\affiliation{%
  \institution{University of Warsaw}
  \country{Poland}}
\author{S{\l}awomir Lasota}
\authornote{Partially supported by NCN grant 2021/41/B/ST6/00535 and by ERC
Starting grant INFSYS, agreement no. 950398.}
%\authornotemark[1]
\orcid{0000-0001-8674-4470}
\affiliation{%
  \institution{University of Warsaw}
  \country{Poland}}

%%
%% By default, the full list of authors will be used in the page
%% headers. Often, this list is too long, and will overlap
%% other information printed in the page headers. This command allows
%% the author to define a more concise list
%% of authors' names for this purpose.

%\renewcommand{\shortauthors}{Ghosh, Lasota}

%%
%% The abstract is a short summary of the work to be presented in the
%% article.
\begin{abstract}
We study existence and computability of finite bases for ideals of polynomials over infinitely many variables. In our setting, variables come from a countable logical structure A, and embeddings from A to A act on polynomials by renaming variables. First, we give a sufficient and necessary condition for A to guarantee the following generalisation of Hilbert's Basis Theorem: every polynomial ideal which is equivariant, i.e. invariant under renaming of variables, is finitely generated. Second, we develop an extension of classical Buchberger's algorithm to compute a Gr\"obner basis of a given equivariant ideal. This implies decidability of the membership problem for equivariant ideals. Finally, we sketch upon various applications of these results to register automata, Petri nets with data, orbit-finitely generated vector spaces, and orbit-finite systems of linear equations.
%We study existence and computability of 
%finite bases for ideals of polynomials over infinitely many variables.
%In our setting, 
%variables come from a countable logical structure $A$, and embeddings $A\to A$
%act on polynomials by renaming variables.
%First,
%we give sufficient and necessary condition for $A$ to guarantee the following
%generalisation of Hilbert's Basis Theorem: 
%every polynomial ideal which is equivariant, i.e.~invariant under renaming of variables,
%is finitely generated.
%Second, we develop an extension of classical
%\Bu's algorithm to compute a \Gr basis of a given equivariant ideal.
%This implies decidability of the membership problem for equivariant ideals.
%Finally, we sketch upon various applications of these results in algorithms for register automata,
%Petri nets with data, orbit-finitely generated vector spaces, and orbit-finite system of %linear equations.
\end{abstract}

%%
%% The code below is generated by the tool at http://dl.acm.org/ccs.cfm.
%% Please copy and paste the code instead of the example below.
%%

%\begin{CCSXML}
%<ccs2012>
%<concept>
%<concept_id>10003752.10003766.10003770</concept_id>
%<concept_desc>Theory of computation~Automata over infinite objects</concept_desc>
%<concept_significance>500</concept_significance>
%</concept>
%</ccs2012>
%\end{CCSXML}
%
%\ccsdesc[500]{Theory of computation~Automata over infinite objects}

%%
%% Keywords. The author(s) should pick words that accurately describe
%% the work being presented. Separate the keywords with commas.
\keywords{Hilbert's Basis Theorem, polynomial ring, ideal, ideal membership problem, 
equivariant sets, sets with atoms, orbit-finite sets, register automata, Petri nets with data}

%\received{26 Jan 2024}
%\received[revised]{}
%\received[accepted]{}

%%
%% This command processes the author and affiliation and title
%% information and builds the first part of the formatted document.
\maketitle

% !TEX root = equivariant_ideals.tex

\section{Introduction}

A ring is \emph{Noetherian} if each of its ideals
%has a finite basis, i.e.~
is finitely-generated.
According to Hilbert's Basis Theorem,
the ring of polynomials over a finite set of variables,
with coefficients from a given Noetherian ring, is Noetherian.
In particular, every polynomial ideal can be represented by its finite basis, and hence can be given as input to algorithms.
Given such a finite basis, \Bu's algorithm computes a \Gr basis that can be used to decide
the
membership problem for polynomial ideals: determine, whether a given polynomial belongs to the
ideal generated by a given set of polynomials.
Hilbert's Basis Theorem and \Gr bases have important algorithmic applications, for instance
to decide zeroness of polynomial automata (weighted automata with
polynomial updates) \cite{Ben17},
reachability of reversible Petri nets \cite{MayrMeyer82}, or
bisimulation equivalence checking \cite{Stribrna97}.
Other applications of \Bu's algorithm are overviewed in \cite{Bu-app98}.

We generalise the above classical results to the ring of polynomials 
over an infinite set of variables.
Clearly,
Hilbert's Basis Theorem is not generalisable literally, since every polynomial belonging to 
a finitely-generated ideal may use at least one of variables used by the generators.
In this paper we investigate the following setting:
\begin{itemize}
\item
Variables are elements of a countable relational structure $\WW$
(which we call \emph{domain of variables}).
\item
The monoid of embeddings $\WW \to \WW$ acts on polynomials by renaming variables.
\item 
We restrict to polynomial ideals which are invariant under renaming (such ideals we call
\emph{equivariant}).
Accordingly, we adapt the definition of basis, enforcing equivariance of a generated ideal.
\end{itemize}

\begin{slexample}  \label{ex:pol1}
For illustration, consider \emph{equality} domain of variables, namely
the structure $\WW = (\W, =)$ consisting of a countable infinite set 
$\W$ % = \set{\w_1, \w_2, \ldots}$ 
and the equality relation.
Embeddings are injective mappings $\W\to\W$.
Consider the ring $\polyring \Rf \WW$ of polynomials with real coefficients over variables $\W$.
The subset $\I\subseteq\polyring \Rf \WW$ of polynomials whose coefficients sum up to 0, is an equivariant ideal.
It is the least ideal containing 
%including the \emph{orbit-finite} set $\B$ of 
all univariate polynomials
$p_\w(\w) = \w - 1$, where $\w\in\W$.
For instance, the polynomial $p(a,b,c) = ab^2 - bc$ with variables
$a,b,c\in\W$ belongs to $\I$, and decomposes as:
\[
ab^2 - bc \ = \  
%(b-1)(ab - c) + ab -c =
%(b-1)(ab - c) + (b-1)a + a - c =
(b-1)(ab + a - c) + (a-1) - (c-1).
\]
%Indeed, up to renaming $\B$ is finite, namely it contains just one element $p_{\w}$, 
%for some arbitrarily chosen $\w \in\W$.
$\I$ is also the least \emph{equivariant} ideal including  $\set{p_\w}$, 
for a fixed $\w\in\W$,
hence finitely-generated according to our adapted definition.
\end{slexample}

\para{Contribution}
We investigate equivariant ideals of polynomials over domains of variables
which are \emph{totally ordered}, namely admit a total order as one of its relations, and
\emph{well structured}, namely admit a naturally defined
\emph{well quasi order} (as defined in Section \ref{sec:prelim}).
A prototypical example of such a domain is rational order $\QQ=(\Q, \leq_\QQ)$.
%As a warmup, 
For such domains $\WW$ we prove, in Section \ref{sec:equiv noeth}, 
the following generalisation of Hilbert's Basis Theorem:
for every Noetherian ring $\R$, every equivariant ideal in
the polynomial ring $\polyring \R \WW$ 
%is \emph{equivariantly Noetherian}, by which we mean that 
has a finite basis.
We transfer this property to lexicographic products of totally ordered, well structured domains,
and to reducts thereof,
 %other, non-well ordered, domains of variables, 
including the equality domain. % $\WW = (\W, =)$.
% and the rational order $\QQ = (\Q, \leq)$.
We also show that well structure is necessary for the generalisation of Hilbert's Basis Theorem to hold.

For further investigations we restrict to domains of variables which are
\emph{well ordered}, well structured, and  
\emph{computable} (as formalized in Section \ref{sec:comp}).
%As our core technical contribution, in Section \ref{sec:Gr}
We develop a generalisation of \Bu's algorithm to compute a \Gr basis of an
equivariant ideal, given as a finite set of generators.
%The procedure works uniformly for every well ordered, well structured, computable domain.
As a corollary, in Section \ref{sec:decid} we obtain
%the main technical result of this paper: 
decidability of the membership problem for equivariant ideals.
Similarly as our generalisation of Hilbert's Basis Theorem, we transfer
the decidability result to a wide range of other domains, 
including equality domain, rational order, and lexicographic products thereof.

Our results are of fundamental nature, with various potential applications.
We sketch upon some of them.
In Section \ref{sec:lin} we apply our generalisation of Hilbert's Basis Theorem to prove
that orbit-finitely generated vector spaces have finite length, i.e., admit no infinite increasing chains
of equivariant subspaces.
This generalises a crucial result of \cite{BKM21}
% to all totally ordered, well structured domains
(however, without explicit upper bound).
As a consequence, we derive decidability of
the zeroness problem for weighted register automata, again 
generalising a result of \cite{BKM21}.
We also apply, in Section \ref{sec:pn}, the ideal membership problem 
to decide reachability in reversible Petri nets with data, for a wide range of data domains.
This includes rational order domain, where the problem is known to be undecidable 
for general non-reversible Petri nets \cite{LNORW07}.
Finally, 
in Section \ref{sec:eq}, we apply the ideal membership problem 
to solving orbit-finite systems of linear equations.
This generalises a core result of \cite{GHL22} to a wide range of domains
(again, without complexity upper bound).
All our proofs of generalisations of \cite{BKM21} and \cite{GHL22}
are remarkably simple, compared to the original ones.

\para{Structure of the paper}
The paper has two parts: the first part concentrates on \emph{existence} of finite bases,
while the second one investigates \emph{computability} thereof.
In the first part, Sections \ref{sec:prelim}--\ref{sec:lin},
we set up preliminaries, prove our generalisation of Hilbert's Basis Theorem,
and sketch upon its applications.
Then, in Sections \ref{sec:comp}--\ref{sec:eq}
we generalise \Bu algorithm, use it for decidabillity of the ideal membership problem, 
and show applications thereof.
Missing proofs are to be found in the full version of this paper \cite{GL24-arxiv}.
% delegated to Appendix.

\para{Related research}
Our motivations, together with the most fundamental concepts such as equivariance and orbit-finiteness, 
come from the theory of sets with atoms \cite{BKL14,atombook}.
The concept of well structured domain, crucial for development of this paper,
turns out to be essentially the same as \emph{preservation of well quasi order}, 
introduced in \cite{Lasota16} and conjectured to characterise data domains
rendering decidability of a range of decision problems for Petri nets.
Relationship between the two settings calls for further investigation.

There have been 
numerous earlier contributions regarding existence and computability of finite bases of equivariant ideals.
To the best of our knowledge, all of them considered 
(using our terminology) either equality or well order domains.
First, \cite{Cohen67} proves finite basis property for domain $\omega = (\N, \leq)$, then
\cite{Cohen87} and \cite{Emmott87} extend to larger ordinals.
These two papers also give sufficient condition for computability of \Gr bases,
which is shown to hold in $\omega$ in \cite{HKL18}.
Similar results were independently obtained in \cite{AH07,AH08,BD11}.
In \cite{HS12} the authors use the aforementioned results to prove the Independent Set Conjecture in algebraic statistics.

There are a number of recent results concerning finite generation of vector spaces:
\cite{BKM21} shows that every equivariant subspace of orbit-finitely generated vector space is finitely generated
(this follows from our results in Section \ref{sec:equiv noeth}, as argued in Section \ref{sec:lin});
furthermore, \cite{GHL22} shows that dual vector spaces are finitely generated.

%
%After setting up our framework in Sections \ref{sec:prelim}
%and \ref{sec:well structured}, in Section \ref{sec:equiv noeth} we state and prove 
%our generalisation of Hilbert's Basis Theorem.
%Then, after formalising our assumption of computability of domains of variables
%in Section \ref{sec:comp}, in Section \ref{sec:Gr} we provide an extension of \Bu's algorithm.
%Then Section \ref{sec:decid} discusses the main result: decidability of the ideal membership 
%problem, for various domains of variables.
%In Section \ref{sec:pn} we apply this decidability to the reachability problem of reversible Petri nets.
%% with data.
% !TEX root = equivariant_ideals.tex

\section{Preliminaries}  \label{sec:prelim}

%\para{Relational structures}

A relational structure $\WW = (\W, r_1, \ldots, r_n)$ 
is a set $\W$ together with finitely many relations 
$r_1\subseteq \W^{s_1}, \ldots, r_n\subseteq \W^{s_n}$ of finite arities $s_1, \ldots, s_n\in\N$, 
respectively.
In the sequel we restrict to countable structures only.
Every relation $r\subseteq \W^m$ which is expressible as a Boolean combination of relations
$r_1, \ldots, r_n$, we call \emph{definable}.
We always assume that equality $=$ is definable.
An \emph{induced substructure} of $\WW$ is the restriction of $\WW$ to some subset of $\W$.

An \emph{embedding} of $\WW$ is a map $\iota:\W\to\W$ that preserves and reflects all relations,
which means that for every relation $r_i$ and
$a_1, \ldots, a_{s_i}\in\W$, the following equivalence holds:
%\arka{maybe $s_i$ instead of $s$}:
\[
r_i(a_1, \ldots, a_{s_i}) \iff r_i(\iota(a_1), \ldots, \iota(a_{s_i})).
\]
In other words, an embedding $\iota$ is an isomorphism of $\WW$ and its substructure 
induced by $\iota(\W)$.
Clearly, embeddings preserve and reflect all definable relations, and hence each such relation
can be considered w.l.o.g.~ to be among $r_1, \ldots, r_n$.
The set of all embeddings of $\WW$ we denote as $\embed \WW$.
We say that $\WW$ is \emph{totally ordered} (resp.~\emph{well ordered})
if a total order 
(resp.~a well order) %, i.e.~a total order which is well-founded) 
is definable in  $\WW$.

Speaking intuitively, a \emph{reduct} of $\WW$ is a structure obtained by removing 
some relations from $\WW$. 
Formally,
a structure $\WW' = (\W, r_1, \ldots, r_n)$ is a reduct of $\WW = (\W, \ldots)$, with the same carrier 
set $\W$,
if all relations $r_1, \ldots, r_n$ are definable in $\WW$.
%if there is a bijection $f:\W\to\Vs$ such that
%for every $\iota \in \E$ there is $\kappa\in\embed \VV$ such that
%$f\circ \iota = \kappa \circ f$.
%Thus, modulo the bijection $f$, $\E$ is included in $\embed \VV$.

%\begin{definition} \rm
%Consider a pair $\WW = (\W, \E)$ consisting of a set $\W$ 
%together with any subset $\E$ of injective functions $\W\to\W$ that contains
%identity on $\W$ and is closed under composition (hence is a monoid).
%Such a pair $\WW$ we call \emph{domain of variables}, or simply a \emph{domain},
%and its elements we call \emph{variables}.
%\end{definition}

\begin{slexample} 
For a countable infinite set $\W$, embeddings of the equality structure $\WW=(\W, =)$, 
with equality as the only relation, are
all injective maps $\W\to\W$.
Embeddings of the rational order  $\QQ = (\Q, \leq_\QQ)$ are all strictly monotonic 
maps $\iota:\Q\to\Q$, i.e.~functions satisfying 
\[
q<_\QQ q'\implies \iota(q)<_\QQ \iota(q').
\] 
Note that equality is definable using $\leq$, namely $q = q' \iff q\leq q' \wedge q' \leq q$.
The equality structure is thus a reduct of  $\QQ$.

Every ordinal $\a$ can be seen as a (well ordered) relational structure, with just one binary relation --
the ordinal order.
Again, its embeddings are strictly monotonic maps $\a\to\a$.
%This structure is \emph{well structured} (c.f.~Section \ref{sec:well structured}).
A special case is $\omega = (\N, \leq) = \set{0 < 1 < \ldots}$;
or $\finitedomain d = \set {0 < 1 < \ldots < d{-}1}$, for $d\in\N$, where the only embedding
is the identity.
In the sequel we identify an ordinal with the corresponding relational structure.
%Clearly, ordinals are examples of well ordered structures.
%(cf.~Section \ref{sec:well structured}).
\end{slexample}

%\section{Results}

\para{Noetherianity}

Recall that an \emph{ideal} in a commutative ring $\R$ is any nonempty subset $\I\subseteq\R$ 
closed under addition and multiplication by any element of $\R$: $\w, \w' \in \I {\implies} \w+\w'\in\I$,
and $\w\in\I, \v\in\R{\implies}\w\cdot\v\in\I$.
A basis of $\I$ is any subset $\B\subseteq\I$ such that every $\w\in\I$ is obtainable from $\B$
by addition and multiplication by arbitrary elements of $\R$;
equivalently, $\w= \w_1 \cdot \v_1 + \ldots + \w_n \cdot \v_n$ for some $\v_1,\ldots, \v_n\in\B$ and
$\w_1, \ldots, \w_n\in\R$.
A commutative ring is \emph{Noetherian} if each of its ideals has a finite basis.
%, i.e.~is finitely-generated.
%
In particular, each field is a Noetherian commutative ring.

\para{Polynomials}

Let $\WW = (\W, r_1,\ldots, r_n)$ be a relational structure,
which we call a \emph{domain of variables}, or simply a \emph{domain}.
We define, in the expected way,
the commutative multiplicative monoid of monomials over variables from $\W$, 
denoted $\monomials \WW$. %and $\monterms \WW$, respectively.
A \emph{monomial} is a formal expression of the form
\begin{align*} %\label{eq:monom}
f(\w_1 \ldots \w_m) = \w_1^{n_1} \cdot  \ \ldots \  \cdot  \w_m^{n_m},
\end{align*}
where $m\geq 0$, $\vars f = \set{\w_1, \ldots, \w_m} \subseteq \W$ are distinct variables, 
and $n_i = \degpol f {\w_i}>0$ is the \emph{degree} of variable $\w_i$ in $f$.
For $\w \notin\vars f$ we set $\degpol f \w = 0$.
The trivial unit monomial, for $m=0$, we denote as $\zeromon$.
Also, given a commutative ring $\R$,
we define the commutative ring of polynomials with coefficients from $\R$ and variables from $\W$, 
denoted $\polyring \R \WW$.
A \emph{term} is a formal expression of the form $r\cdot f$, for $r\in\R$ and $f\in\monomials \WW$,
and a \emph{polynomial} is a finite sum of terms, i.e., a formal expression of the form
\[
f \ = \ r_1 \cdot f_1 + \ldots + r_n \cdot f_n,
\]
$f_1, \ldots, f_n\in\monomials \WW$ are distinct monomials and
$r_1, \ldots, r_n\in\R$ are their \emph{coefficients}.
The zero polynomial, for $n=0$, we denote by $\zeropol$.
We set $\vars f = \vars {f_1} \cup \ldots \cup \vars {f_n}$.
% with coefficients from $\R$  and variables from $\W$.

\para{Equivariant Noetherianity}

Embeddings act on polynomials by renaming variables: 
an embedding $\iota \in \embed {\W}$ maps a polynomial $f$ to the polynomial $\iota(f)$ 
obtained by renaming all
variables in $f$ according to $\iota$, i.e., by replacing each variable $\w$ by $\iota(\w)$%
\footnote{
Clearly, $f$ and $\iota(f)$ define the same functions $\R^n\to\R$, where $n=\size{\vars f}$.
}.
%
%\[
%\iota(f) =  \iota(\w_1)^{n_1} \cdot  \ \ldots \  \cdot  \iota(\w_m)^{n_m}.
%\]
An ideal $\I\subseteq \polyring \R \WW$ is called \emph{equivariant} if it is additionally closed under
the action of $\E$:
%\footnote{
%When $\WW$ is \emph{homogeneous} (c.f.~\cite{Fraisse53}), like the rational order
%$\QQ = (\Q, \leq)$, we could equivalently consider the action of \emph{automorphisms},
%as in theory of sets with atoms \cite{atombook,BKL14}.
%Indeed, by homogeneity, the restriction of an embedding to any finite subset extends to an automorphism.
%} 
whenever $f\in \I$ and $\iota \in \E$ then  $\iota(f) \in \I$.
We adapt the notion of basis by taking additionally into account the action of $\E$.
By a \emph{basis} of an equivariant ideal $\I$ we mean a subset 
$\B\subseteq \I$ such that each element of $\I$ is obtainable from $\B$ by addition, multiplication 
by arbitrary polynomials, and action of $\E$.
This is equivalent to saying that
\begin{align} \label{eq:basis}
\E(\B) \ \defeq \ \setof{\iota(g)}{\iota \in \E, \ g\in\B}
\end{align}
is a basis of $\I$ in classical sense, or to presentability of 
every element $f\in\I$ as
\[
f \ = \ \sum_{i=1}^{n} h_i \cdot \iota_i(g_i),
\]
for $g_1 \ldots g_n \in \B$ and $h_1 \ldots h_n \in \polyring \R \WW$ and $\iota_1 \ldots \iota_n \in \E$
(since the action of $\E$ commutes with addition and multiplication of polynomials).
When $\B$ is a basis of $\I$, 
we also say that $\I$ is the equivariant ideal \emph{generated} by $\B$, and denote it by
$\I = \eqidealgen \B$.
%We call $\I$ \emph{finitely generated} if it has a finite \emph{basis}.
%\sla{Shouldn't we rather speak of orbit-finite bases?}

\begin{slexample} \label{ex:Q}
Consider the rational order $\QQ = (\Q, \leq_\QQ)$ as a domain of variables, 
%whose embeddings are monotonic injections $\Q\to\Q$,
and a polynomial
\[
f_0(\w_0, \v_0, \ww_0) \ = \  \v_0^2 - \w_0 \ww_0
\]
for some arbitrarily chosen, fixed variables $\w_0 < \v_0 < \ww_0$ from $\Q$.
For every other variables $\w < \v < \ww$, the polynomial
$
\w \ww^2 - \v^3
$
belongs to the equivariant ideal generated by $\set{f_0}$, because
\[
\w \ww^2 - \v^3
\ = \ 
\w \cdot (\ww^2 - \v\vv) \ - \ 
\v \cdot  (\v^2 - \w \vv),
\]
for arbitrary variable $\vv > \ww$.
\end{slexample}

The ring $\polyring \R \WW$ we call \emph{equivariantly Noetherian} if each equivariant ideal has a finite basis.
%; equivalently, each equivariant ideal has an \emph{orbit-finite} basis in classical sense, where
%by orbit-finite sets we mean sets of the form \eqref{eq:basis} when $\B$ is finite.%
%\footnote{
%The concept of orbit-finiteness comes from the theory of sets with atoms \cite{atombook,BKL14}.
%}
In Section \ref{sec:equiv noeth} we demonstrate that a wide range of domains $\WW$ satisfy
the following generalisation of Hilbert's Basis Theorem:

\begin{property}%[Hilbert's Basis Property]
\label{prop:noeth}
For every Noetherian commutative ring $\R$, the polynomial ring $\polyring \R \WW$ is equivariantly Noetherian.
\end{property}

\para{Orbit-finite bases}

Equivalently, $\polyring \R \WW$ is equivariantly Noetherian if each equivariant ideal 
has a basis, in classical sense, which is \emph{orbit-finite}, i.e., finite up to embeddings.
Formally, orbit-finite sets are set of the form \eqref{eq:basis} where $\B$ is finite.
A finite set $\B$ is called a \emph{presentation} of an orbit-finite set \eqref{eq:basis}.
The concept of orbit-finiteness comes from the theory of sets with atoms \cite{atombook,BKL14},
where it refers to sets which are infinite but finite up to automorphisms of an underlying structure
of atoms.
We slightly extend this setting here, and consider embeddings instead of just automorphisms.
In case of equality and rational order domains (and any \emph{homogeneous} structure),
the action of embeddings is essentially the same as the action of automorphisms, 
as every automorphism of finite induced substructures extends to an automorphism of the whole domain.

\para{Ideal membership}

For $\WW$ satisfying Property \ref{prop:noeth}, every
equivariant ideal in $\polyring \R \WW$ can be represented by 
its finite basis, and thus can be input to algorithms. 
In Section \ref{sec:decid}, assuming $\R$ to be a fixed computable field (e.g.~rational%
\footnote{
The set of rationals appears in the sequel in two roles: as a totally ordered set
$\QQ$, and 
as a field $\Qf$. 
We use different fonts
to distinguish these two roles.
%, we write
%$\QQ = (\Q, \leq_\QQ)$ for rationals as an ordered set, 
%and $\Qf$ for the field of rationals.
}
field $\Qf$),
we study decidability of the following decision problem
of membership of a given polynomial in a given equivariant ideal, 
for different domains $\WW$:

\begin{description}                                                                                                                                                               
\item[\probname{Ideal-memb in $\polyring \R \WW$}:]           \    
\item[\bf Input: ] \ \ \ \ \ \ \ a polynomial $f\in\polyring \R \WW$ and an equivariant ideal 
\item[] \ \ \ \ \ \ \ \ \ \ \ \ \ \ \ \ $\I\subseteq \polyring \R \WW$.
\item[\bf Question: ] $f\in\I$?
\end{description}

\para{Well structured structures} % \label{sec:well structured}

A quasi order consists of a set $Y$ and a reflexive and transitive
relation $\preceq\,\subseteq Y\times Y$.
$(Y, \preceq)$ is a \emph{well quasi order} (\wqo) 
%\footnote{
%In the literature, it is common to use \emph{well quasi orders}, where antisymmetry is not required.
%For our purposes, well partial orders are sufficient.
%} 
if 
%admits the following domination property:
every infinite sequence $y_1, y_2, \ldots$ of elements of $Y$ contains a dominating pair,
i.e., two elements $y_i$ and $y_j$, for $i<j$, such that $y_i \preceq y_j$.
Clearly, every well order is a \wqo.

A subset $X\subseteq Y$ is \emph{upward-closed} if $x\in X$ and $x\preceq x'$ implies $x'\in X$.
A \emph{basis} of an upward-closed set $X$ 
(not to be confused with a basis of an ideal) is any subset $B\subseteq X$ such that
for every $x\in X$ there is $b\in B$ with $b\preceq x$.
Every \wqo is well-founded and admits no infinite antichains
\cite[Proposition 12.1.1]{diestel}, which implies:

\begin{lemma} \label{lem:wqo}%[\sla{add citation}]
%If $(Y, \preceq)$ is a \wqo then 
Every upward-closed subset of a \wqo has a finite basis.
\end{lemma}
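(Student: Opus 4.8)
The plan is to deduce this directly from the two structural facts about \wqo{}s quoted just before the statement: that a \wqo is well-founded, and that it admits no infinite antichains. Let $(Y,\preceq)$ be a \wqo and let $X\subseteq Y$ be upward-closed. I would take $B$ to be the set of \emph{minimal} elements of $X$, i.e.\ those $x\in X$ such that no $x'\in X$ satisfies $x'\prec x$ (where $x'\prec x$ abbreviates $x'\preceq x$ and $x\not\preceq x'$).

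First I would check that $B$ is a basis of $X$ in the sense defined above: given any $x\in X$, I want some $b\in B$ with $b\preceq x$. Consider the set $D=\setof{x'\in X}{x'\preceq x}$, which is nonempty since it contains $x$. Because $(Y,\preceq)$ is well-founded, $D$ has a $\preceq$-minimal element $b$. I would then argue $b$ is in fact minimal in all of $X$: if $x''\in X$ with $x''\prec b$, then $x''\preceq b\preceq x$ gives $x''\in D$, contradicting minimality of $b$ in $D$. Hence $b\in B$ and $b\preceq x$, so $B$ is a basis.

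Second I would show $B$ is finite. Any two distinct elements $b_1,b_2\in B$ are incomparable: if $b_1\preceq b_2$ then, since $b_2$ is minimal in $X$ and $b_1\in X$, we cannot have $b_1\prec b_2$, so $b_2\preceq b_1$ as well; but then by antisymmetry-up-to-equivalence $b_1$ and $b_2$ would be $\preceq$-equivalent, and one checks that $B$ contains at most one element from each $\preceq$-equivalence class (if $b_1\equiv b_2$ are both minimal and distinct we may simply discard one — or, more cleanly, define $B$ by choosing one representative per equivalence class of minimal elements). Thus $B$ is an antichain, and since a \wqo has no infinite antichains, $B$ is finite.

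I do not expect a serious obstacle here; the only point requiring a little care is the interaction between the quasi-order (not partial-order) setting and the notions of ``minimal'' and ``antichain'', which is why I would fix representatives of $\preceq$-equivalence classes at the outset. Everything else is an immediate application of well-foundedness (to produce a minimal element below a given $x$) and of the no-infinite-antichain property (to bound the size of $B$), both of which are available from \cite[Proposition 12.1.1]{diestel} as already cited.
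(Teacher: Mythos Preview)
Your proposal is correct and is exactly the standard argument the paper has in mind: the paper does not spell out a proof at all, but simply states that well-foundedness together with the absence of infinite antichains (cited from \cite[Proposition 12.1.1]{diestel}) ``implies'' the lemma, and your argument is precisely how one unpacks that implication. The only minor care point you already identified---passing to representatives of $\preceq$-equivalence classes so that $B$ is genuinely an antichain---is the right fix; with that adjustment the proof goes through without issue.
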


Given a relational structure $\WW$ and a quasi order $(Y, \preceq)$, 
by $\age \WW$ we denote the set of all its finite induced substructures.
We do not distinguish between a finite subset $\Vs\subseteqfin \W$ and the substructure 
$\VV$ of $\WW$ induced by $\Vs$, i.e., we equate $\age \WW$ with $\pfin \WW$.
By $\agepar \WW Y$ we denote the set of all finite induced substructures labeled by $Y$,
i.e., functions $\ell : \Vs \to Y$ where $\Vs  \in \age \WW {}$.
The quasi order $(Y, \preceq)$ \emph{determines} a natural quasi order $\detorder$ on
$\agepar \WW Y$, where $(\ell : \Vs \to Y) \detorder (\ell' : \Vs' \to Y)$ if there is an embedding 
$\iota\in\E$
%\sla{local option $e : \VV \to \VV'$ is wrong} 
such that $\iota(B)\subseteq B'$ and
$\ell(\v) \preceq \ell'(\iota(\v))$ for all $\v \in \Vs$.
In most cases we will focus on $(Y, \preceq) = (\N, \leq)$.

\begin{sldefinition}
%A relational structure 
$\WW$ is  \emph{$\omega$-well structured} if
$\omega = (\N, \leq)$ determines a \wqo on $\agepar \WW \N$.
Moreover, $\WW$ is \emph{well structured} if
every \wqo $(Y, \preceq)$ determines a \wqo on $\agepar \WW Y$.%
\footnote{
It is possible that the two conditions are equivalent.
See e.g.~\cite{KS91} for related results.
}
\end{sldefinition}

%\begin{slexample}
For instance, the rational order  $\QQ = (\Q, \leq_\QQ)$ is well structured. 
Indeed,
finite induced substructures of $\QQ$ are finite total orders, and $\agepar \QQ Y$ is a \wqo
by Higman's Lemma \cite{Hig52}.
The equality structure is also well structured, as a reduct of $\QQ$.
\emph{Universal tree}
arising as  \Fr limit of finite trees \cite[Section 7]{atombook},
similarly as $\QQ$ arises as \Fr limit of finite total orders,
is well structured.
%\end{slexample}
%
By Nash-Williams's Theorem \cite[Theorem 2]{nash-williams} 
(cf.~also \cite{Hig52} in the restricted case of $\omega$)
ordinals are well structured too: 
\begin{lemmarep} \label{lem:wpo}
Every ordinal (seen as a relational structure) is well structured.
\end{lemmarep}
\begin{proof}
Let $\a$ be a fixed ordinal and $(Y,\preceq)$ a fixed \wqo. 
We add to $Y$ the least element $*$, namely put $*\prec y$ for all $y\in Y$, and observe
that $(Y\cup \set{*}, \preceq)$ is still a \wqo.
According to Theorem 2 in \cite{nash-williams}, the \wqo
$(Y\cup\set{*},\preceq)$ determines a \wqo on the set $F(Y)$ of image-finite functions 
$\ell:\a\to Y\cup\set{*}$.
Elements of $\agepar \a Y$  may be identified with those functions $\ell:\a\to Y$
which map almost all elements to 0, a subset $F_0(Y) \subseteq F(Y)$.
Furthermore, the quasi orders determined by $(Y\cup\set{*},\preceq)$ on $\agepar \a Y$ and 
$F_0(Y)$ agree.
Therefore $(Y\cup\set{*},\preceq)$ determines a \wqo on $\agepar \a Y$, as required.
%The order $\wpo$ is thus a well partial order on monomials.
\end{proof}

\para{Lexicographic product of structures}  \label{sec:produ}

We will build larger domains from smaller ones, 
using  \emph{lexicographic product} defined below.
The operation, in contrast to Cartesian product
 (as illustrated in Example \ref{ex:Cartesian}), preserves well structure.

Given two relational structures $\WW = (\W, \ldots)$ and $\WW' = (\W', \ldots)$, 
their lexicographic product
$\WW \domprod \WW'$ is defined
as follows.
The carrier set of $\WW \domprod \WW'$ is the Cartesian product $\W\times\W'$,
and the set of relation symbols is the disjoint union of sets of relation symbols of $\WW$ and $\WW'$.
For each $k$-ary relation $r$ of $\WW$, the relation $r$ in $\WW \domprod \WW'$ 
relates those $k$-tuples
\begin{align}\label{eq:tuples}
((\w_1, \w'_1), \ldots, (\w_k, \w'_k))
\end{align}
where $r(\w_1, \ldots, \w_k)$ in $\WW$ (the second coordinate comming from $\WW'$ is ignored). 
Furthermore, 
for each $k$-ary relation $r'$ of $\WW'$, the relation $r'$ in $\WW \domprod \WW'$ relates those tuples
\eqref{eq:tuples}
where $\w_1 = \ldots = \w_k$ and $r'(\w'_1, \ldots, \w'_k)$ in $\WW'$
(the first coordinate comming from $\WW$ is required to be equal).
The product is associative.

\begin{lemmarep}\label{lem:prod embed}
Embeddings of $\WW \domprod \WW'$ are exactly maps of the form
\begin{align} \label{eq:prod embed}
(\w, \w') \ \longmapsto \ (\iota(\w), \iota'_{\w}(\w')),
\end{align}
determined by some $\iota\in\E$ and
a family $\setof{\iota'_{\w}\in\embed{\WW'}}{\w\in\W}$.

\end{lemmarep}
\begin{proof}
In one direction, every mapping $\W\times\W' \to \W\times\W'$ of the form \eqref{eq:prod embed}
preserves and reflects all relations of $\WW\domprod\WW'$, since each relation of $\WW$ is
preserved and reflected by $\iota\in\embed \WW$, and each relation of $\WW'$ is preserved
and reflected by every $\iota'_\w$, for $\w\in\W$.

In the reverse direction, consider any embedding $\kappa \in \embed{\W\domprod\W'}$.
As equality is definable in $\WW$, the mapping $\kappa$ preserves and reflects equality of the first coordinate, which determines an injection $\iota: \W\to\W$
such that the first component of $\kappa(\w,\w')$ equals $\iota(\w)$.
As $\kappa$ preserves and reflects all relations of $\WW$, the injection is an embedding,
$\iota\in\E$.
Furthermore, for each $\w\in\W$ the mapping $\kappa$ restricted to pairs $(\w, \_)$,
preserves and reflects all relations of $\WW'$ on the second coordinate.
Therefore, for each $\w\in\WW$ there is some $\iota'_\w\in\embed{\WW'}$ such that $\kappa$ is equal to
\eqref{eq:prod embed}.
\end{proof}

\begin{lemmarep} \label{lem:prod}
Lexicographic product of two totally ordered (well ordered, well structured, resp.) relational structures is 
totally ordered (well ordered, well structured, resp.).
\end{lemmarep}
\begin{proof}
Consider two relational structures $\WW = (\W, \ldots)$ and $\WW' = (\W', \ldots)$.
Let $\wo$ and $\wo'$ be definable total orders on $\WW$ and $\WW'$, respectively.
The lexicographic product of $\wo$ and $\wo'$ is definable in $\WW \domprod \WW'$,
and is well when $\wo$ and $\wo'$ are well, and therefore
if both $\WW$ and $\WW'$ are totally ordered (resp.~well ordered), so is $\WW \domprod \WW'$.

It thus remains to show that $\WW \domprod \WW'$ is well structured.
This follows immediately, once we prove:

\begin{claim}  \label{claim:wqo emb}
Let $(Y, \prec)$ is a \wqo.
Then $\agepar {\WW \domprod \WW'} Y$ is isomorphic, as a quasi order, 
to $\agepar \WW {\agepar {\WW'} Y}$.
\end{claim}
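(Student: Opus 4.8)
The claim is an isomorphism of quasi-orders between $\agepar {\WW \domprod \WW'} Y$ and $\agepar \WW {\agepar {\WW'} Y}$. I want to exhibit an explicit bijection between the two underlying sets and then check that it preserves and reflects the two determined orders.

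For the bijection: an element of $\agepar {\WW \domprod \WW'} Y$ is a labelling $\ell : V \to Y$ where $V \subseteq_\text{fin} \W \times \W'$. Project $V$ onto the first coordinate to get a finite set $V_1 \subseteq_\text{fin} \W$; for each $\w \in V_1$, let $V'_\w = \{\w' \mid (\w,\w') \in V\}$, a finite nonempty subset of $\W'$, and let $\ell_\w : V'_\w \to Y$ be the restriction $\ell_\w(\w') = \ell(\w,\w')$. Then $\w \mapsto \ell_\w$ is a function from $V_1$ into $\agepar {\WW'} Y$, i.e. an element of $\agepar \WW {\agepar {\WW'} Y}$ — modulo the harmless point that $\agepar{\WW'}Y$ as defined allows the empty labelling, but our $\ell_\w$ are on nonempty domains, which is fine since we only need to land in $\agepar{\WW'}Y$. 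This map is clearly a bijection: from a labelling $\w \mapsto (\ell_\w : V'_\w \to Y)$ on a finite $V_1 \subseteq_\text{fin}\W$ one reconstructs $V = \{(\w,\w') \mid \w \in V_1,\ \w' \in V'_\w\}$ and $\ell(\w,\w') = \ell_\w(\w')$.

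For the order: unfolding the definition of $\detorder$ on $\agepar{\WW}{\agepar{\WW'}Y}$, we have $(\w\mapsto\ell_\w) \detorder (\w\mapsto m_\w)$ iff there is $\iota \in \E$ with $\iota(V_1) \subseteq W_1$ and, for every $\w \in V_1$, $\ell_\w \detorder m_{\iota(\w)}$ in $\agepar{\WW'}Y$ — the latter meaning there is $\iota'_\w \in \embed{\WW'}$ with $\iota'_\w(V'_\w) \subseteq$ (domain of $m_{\iota(\w)}$) and $\ell_\w(\w') \preceq m_{\iota(\w)}(\iota'_\w(\w'))$ for all $\w' \in V'_\w$. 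On the other side, $\ell \detorder \ell'$ in $\agepar{\WW\domprod\WW'}Y$ means there is an embedding $\kappa$ of $\WW\domprod\WW'$ with $\kappa(V) \subseteq V'$ and $\ell(v) \preceq \ell'(\kappa(v))$. By Lemma~\ref{lem:prod embed}, such $\kappa$ is exactly $(\w,\w') \mapsto (\iota(\w), \iota'_\w(\w'))$ for some $\iota \in \E$ and family $(\iota'_\w)_{\w\in\W}$. Substituting this description of $\kappa$ into the condition $\kappa(V)\subseteq V'$ and the inequality on labels, and matching coordinates, gives precisely the unfolded condition for $\detorder$ on the product side — so the two conditions coincide and the bijection is an order-isomorphism.

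The only genuine subtlety — and the step I'd be most careful about — is the interplay between the "for each $\w$" family of second-coordinate embeddings in $\agepar{\WW}{\agepar{\WW'}Y}$ and the single family $(\iota'_\w)_{\w\in\W}$ coming from Lemma~\ref{lem:prod embed}: in the nested expression the embedding $\iota'_\w$ is only required to exist for $\w \in V_1$ and only needs $\iota'_\w(V'_\w)$ to sit inside the domain of $m_{\iota(\w)}$, whereas Lemma~\ref{lem:prod embed} supplies $\iota'_\w$ for all $\w \in \W$; but since embeddings of $\WW'$ always exist (e.g. the identity) one can extend any partial family to a total one without affecting $\kappa$ on $V$, so this matches up cleanly. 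Once the isomorphism in Claim~\ref{claim:wqo emb} is established, well-structuredness of $\WW \domprod \WW'$ follows: given a \wqo $(Y,\preceq)$, well-structuredness of $\WW'$ gives that $\agepar{\WW'}Y$ is a \wqo, then well-structuredness of $\WW$ applied to that \wqo gives that $\agepar{\WW}{\agepar{\WW'}Y}$ is a \wqo, hence so is the isomorphic $\agepar{\WW\domprod\WW'}Y$.
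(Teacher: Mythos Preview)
Your proof is correct and follows essentially the same approach as the paper: you construct the same fibre-wise bijection (project to the first coordinate, collect the second-coordinate slices as labellings), and you verify the order isomorphism by unfolding both determined orders and invoking Lemma~\ref{lem:prod embed} to identify embeddings of $\WW\domprod\WW'$ with pairs $(\iota,(\iota'_\w)_\w)$. Your explicit remark about extending the partial family $(\iota'_\w)_{\w\in V_1}$ to all of $\W$ (using, e.g., the identity) is a detail the paper leaves implicit, and your acknowledgment of the empty-labelling edge case is likewise a minor imprecision shared with the paper's own argument.
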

%
%\begin{claimproof}
For proving the claim,
consider any $\Vs \in \age {\WW \domprod \WW'}$
and its labelling $(\ell : \Vs \to Y) \in \agepar{\WW\domprod\WW'}{Y}$.
Let $\overline\Vs \in\age \WW$ be its projection 
on the first coordinate.
Furthermore, for $\w\in\overline\Vs$ let 
$\Vs_\w \defeq \setof{\w'}{(\w,\w')\in\Vs}\in \age {\WW'}$, and let $\ell_\w : \Vs_\w \to Y$ be the
corresponding restriction of $\ell$, namely $\ell_\w(\w') = \ell(\w,\w')$.
This defines a function
\[
\overline\ell : \w \mapsto \ell_\w, \qquad
\text{ for } \w \in \overline\Vs,
\]
i.e.,
$(\overline \ell : \overline\Vs \to \agepar {\WW'} Y) \in \agepar \WW {\agepar {\WW'} Y}$.
We have thus defined a mapping $\ell \mapsto \overline\ell$.
One easily defines its inverse, and hence it
is a bijection between
$\agepar{\WW\domprod\WW'}{Y}$ and $\agepar \WW {\agepar {\WW'} Y}$.

Given two induced substructures
$\Vs_i \in \age {\WW \domprod \WW'}$
and its labellings
$(\ell_i : \Vs_i \to Y) \in \agepar{\WW\domprod\WW'}{Y}$, for $i = 1, 2$,
using Lemma \ref{lem:prod embed} we notice equivalence of the following two conditions:
\begin{itemize}
\item
$\ell_1 \detorder \ell_2$ are related by the quasi order determined  by $(Y,\preceq)$
on $\agepar{\WW\domprod\WW'}{Y}$;
\item
% exactly when 
there is some $\iota\in\E$ such that
for every $\w\in\overline\Vs$, $(\ell_1)_\w \detorder (\ell_2)_{\iota(\w)}$ are related
by the quasi order determined by $(Y, \preceq)$ on $\agepar {\WW'} Y$.
\end{itemize}
The latter condition says exactly that $\overline\ell_1 \detorder \overline\ell_2$ are related
by the quasi order determined by $(Y, \preceq)$ on $\agepar \WW {\agepar {\WW'} Y}$.
The claim is thus shown.
%\end{claimproof}
\end{proof}

\begin{slexample} \label{ex:Cartesian}
Since the rational order  $\QQ$ is well structured,  
the lexicographic product
$\QQ\domprod \QQ$ is also so, while
the Cartesian product $\QQ\times\QQ$ is not.
We claim that the quasi order $\detorder$
determined by an arbitrary \wqo on $\age{\QQ\times\QQ}$ (the choice of \wqo is irrelevant)
is not a \wqo.
Indeed,
consider the family of 'cycle' subsets $C=(C_n\in\age{\QQ\times\QQ})_{n\in\N}$, where
(cf.~Figure \ref{fig:Cn})
% \subseteq \age{\QQ\times\QQ}$, where
\begin{align*}
C_n = \setof{(x_i, y_i), (x_i, y_{i+1\!\!\!\mod n})}{i\in\setto {n-1}}
%C_n & \ =  \  \setpref{ \ (x_1, y_1), \qquad\, (x_1, y_2), \ \ \ \ \ (x_2, y_2), \ \ (x_2, y_3)  \ \ldots } \\
%& \ \ \, \ldots  \setsuf{\   (x_{n-1}, y_{n-1}), \ (x_{n-1}, y_n), \ (x_n, y_n), \ (x_n, y_1) \ },
%C_n = \set{ (q_1, q_2), (q_2, q_3), \ldots (q_{n-1}, q_n), (q_n, q_1) }
\end{align*}
for arbitrary rationals $x_0< \ldots < x_{n-1}$ and $y_0 < \ldots < y_{n-1}$.
As embeddings of $\QQ\times\QQ$ are induced by pairs of embeddings of $\QQ$,
\[
\embed{\QQ\times\QQ} = \embed \QQ \times \embed \QQ,
\]
a smaller cycle can not embed into a larger one.
Therefore $C_n \detorder C_m$ implies $n=m$, i.e., $C$ is an antichain,
and hence $\detorder$ is not a \wqo.
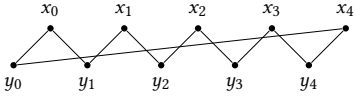
\begin{figure}
\scalebox{.85}{
\begin{tikzpicture}[every node/.style={circle, inner sep=1pt, node distance=20pt, fill=black}]
%\node (x1) [label=above :{$x_1$}] {};
%\node (y1) [below left=of x1, label=below :{$y_1$}] {};
%\node (y2) [below right=of x1, label=below :{$y_2$}] {};
%\node (x2) [above right=of y2, label=above :{$x_2$}] {};
%\node (y3) [below right=of x2, label=below :{$y_3$}] {};
%\node (x3) [above right=of y3, label=above :{$x_3$}] {};
%\node (y4) [below right=of x3, label=below :{$y_4$}] {};
%\node (x4) [above right=of y4, label=above :{$x_4$}] {};
%\node (y5) [below right=of x4, label=below :{$y_5$}] {};
%\node (x5) [above right=of y5, label=above :{$x_5$}] {};

%\draw[-] (x1) -- (y1);
%draw[-] (x1) -- (y2);
%\draw[-] (x2) -- (y2);
%\draw[-] (x2) -- (y3);
%\draw[-] (x3) -- (y3);
%\draw[-] (x3) -- (y4);
%\draw[-] (x4) -- (y4);
%\draw[-] (x4) -- (y5);
%\draw[-] (x5) -- (y5);
%\draw[-] (x5) -- (y1);
%\end{tikzpicture}
\newcommand*{\n}{5} %put the same value in caption
\pgfmathtruncatemacro\nm{\n-1}
\pgfmathtruncatemacro\nmm{\nm-1}
\node (y0) [label=below :{$y_0$}] {};
\foreach \i in {0,...,\nmm}
{
  \node (x\i) [above right=of y\i, label=above:{$x_{\i}$}] {};
  \pgfmathtruncatemacro\ij{\i+1}
  \node (y\ij) [below right=of x\i, label=below:{$y_{\ij}$}] {};
}
\node (x\nm) [above right=of y\nm, label=above:{$x_{\nm}$}] {};
\foreach \i in {0,...,\nm}
{
  \draw[-] (x\i)--(y\i);
  \pgfmathtruncatemacro\ij{\i+1}
  \pgfmathtruncatemacro\ij{Mod(\ij,\n)}
  \draw[-] (x\i) -- (y\ij);
}
\end{tikzpicture}
}
\caption{$C_5\in\age {\QQ\times\QQ}$. \rm Its elements are depicted by edges.}
\label{fig:Cn}
\end{figure}
\end{slexample}

% !TEX root = equivariant_ideals.tex

\section{Equivariant Noetherianity} %Hilbert's Basis Theorem}
\label{sec:equiv noeth}

We prove that every totally ordered and $\omega$-well structured domain of variables satisfies  
Property \ref{prop:noeth}.
Moreover, almost conversely, Property \ref{prop:noeth} implies that domain of variables is 
$\omega$-well structured.

\begin{theorem} \label{thm:wqo}
If the polynomial ring $\polyring \R \WW$ is equivariantly Noetherian for some commutative ring $\R$ then
$\WW$ is $\omega$-well structured.
\end{theorem}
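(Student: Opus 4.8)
The plan is to prove the contrapositive: assuming $\WW$ is \emph{not} $\omega$-well structured, I will construct an equivariant ideal in $\polyring \R \WW$ with no finite basis, for \emph{any} commutative ring $\R$ (in particular for $\R = \Qf$, or even $\R = \Z$, which suffices). By hypothesis there is an infinite sequence of $\N$-labelled finite induced substructures $(\ell_i : \Vs_i \to \N)_{i \in \N}$ in $\agepar \WW \N$ that contains no dominating pair under $\detorder$, i.e.\ for all $i < j$ there is no embedding $\iota \in \E$ with $\iota(\Vs_i) \subseteq \Vs_j$ and $\ell_i(\v) \leq \ell_j(\iota(\v))$ for all $\v \in \Vs_i$. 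The key idea is to encode each labelled structure $(\ell_i, \Vs_i)$ as a \emph{monomial} $f_i \in \monomials \WW$: take $\vars{f_i} = \Vs_i$ and set $\degpol {f_i} \v = \ell_i(\v)$ for each $\v \in \Vs_i$ (adding $1$ to each label if one insists on strictly positive exponents, which does not affect the antichain property). Then the crucial translation is that monomial divisibility up to renaming corresponds exactly to the order $\detorder$: for monomials $f_i, f_j$ and an embedding $\iota$, the monomial $\iota(f_i)$ divides $f_j$ in $\monomials \WW$ if and only if $\iota(\Vs_i) \subseteq \Vs_j$ and $\ell_i(\v) \leq \ell_j(\iota(\v))$ for all $\v$. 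Hence $\set{f_i}$ is an antichain of monomials under the "divisibility up to $\E$" preorder.

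Next I form the equivariant ideal $\I = \eqidealgen{\setof{f_i}{i\in\N}} \subseteq \polyring \R \WW$ generated by these monomials, i.e.\ the ordinary ideal generated by $\E(\set{f_i}) = \setof{\iota(f_i)}{\iota\in\E,\ i\in\N}$. This is a monomial ideal (generated by monomials), so membership has the standard combinatorial characterization: a monomial $g$ lies in $\I$ iff some generator monomial $\iota(f_i)$ divides $g$; and more generally a polynomial lies in $\I$ iff each of its monomials does. I then argue $\I$ has no finite basis. Suppose for contradiction $\B = \set{g_1, \ldots, g_k}$ were a finite basis. Each $g_r \in \I$, so every monomial appearing in $g_r$ is divisible by $\iota(f_{i})$ for some embedding $\iota$ and some index $i$; collecting over the finitely many monomials in $g_1, \ldots, g_k$ we obtain a finite set of indices $S \subseteq \N$ such that every monomial of every $g_r$ is divisible (up to $\E$) by some $f_i$ with $i \in S$. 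Pick any $j \notin S$. Since $f_j \in \I = \eqidealgen \B$, the monomial $f_j$ must be divisible by $\eta(g_r)$ for some $r$ and some $\eta \in \E$ — actually, more carefully, $f_j = \sum_s h_s \cdot \eta_s(g_{r_s})$, and by monomiality one of the terms on the right contributes the monomial $f_j$, so some monomial of some $\eta_s(g_{r_s})$ divides $f_j$; since $\eta_s$ is an embedding, this monomial is $\eta_s$ of a monomial of $g_{r_s}$, which is in turn divisible up to $\E$ by some $f_i$ with $i \in S$. Composing embeddings, $f_i$ divides $f_j$ up to $\E$ with $i \in S$ and $j \notin S$, hence $i \neq j$; this contradicts the fact that $\set{f_i}$ is an antichain (whether $i < j$ or $j < i$, one direction of the non-domination is violated — here I should make sure the no-dominating-pair hypothesis is used symmetrically, or simply observe that an antichain has no two comparable distinct elements, which follows from no-dominating-pair once we note we may also assume, or arrange by passing to a subsequence, that the sequence has the stronger property, or just use that $i\neq j$ with one of $i<j$, $j<i$ holding already gives a dominating pair in the appropriate order).

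The main obstacle I anticipate is precisely this last symmetry point: the \wqo/no-dominating-pair hypothesis a priori only forbids $y_i \detorder y_j$ for $i < j$, not for $i > j$. To be safe, the clean fix is: since $\WW$ is not $\omega$-well structured, there is a bad sequence; by Ramsey / the standard minimal-bad-sequence or subsequence argument one extracts a subsequence that is an \emph{antichain} in the strong sense (no two distinct members comparable in either direction) — or, even more simply, one notes that in the argument above we only ever need "$f_i$ divides $f_j$ up to $\E$ with $i, j$ distinct elements of a fixed infinite bad sequence," and if $i > j$ we can rename to get that $(\ell_j, \Vs_j) \detorder (\ell_i, \Vs_i)$ with $j < i$, which is exactly a dominating pair — contradiction either way. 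The remaining routine verifications are: (i) the encoding of labelled structures as monomials and the divisibility-$\leftrightarrow$-$\detorder$ correspondence, (ii) the standard fact that membership in a monomial ideal is witnessed monomial-by-monomial and divisor-by-divisor, and (iii) that the action of $\E$ commutes with divisibility, all of which are straightforward. If one wants to avoid the case $\R$ not a domain, note $\R \cdot 1 \neq 0$ in any nonzero ring suffices, and the problem statement's hypothesis only asks for \emph{some} commutative ring $\R$, so one may simply take $\R$ a field.
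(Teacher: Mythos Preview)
Your argument is correct and is the contrapositive of the paper's proof; the core content is identical. Both encode $\N$-labelled finite substructures as monomials (so that $\detorder$ becomes the divisibility-up-to-embedding order $\wpo$ on $\monomials\WW$) and then use that a monomial lies in a monomial ideal iff some generating monomial divides it. The paper argues directly: given an arbitrary sequence $f_1,f_2,\ldots$ of monomials, Noetherianity forces the chain $\I_i=\eqidealgen{\{f_1,\ldots,f_{i-1}\}}$ to stabilise, whence some $f_{i+1}\in\I_i$ is divisible (up to $\E$) by an earlier $f_k$ with $k\le i$ --- a dominating pair.

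Your symmetry worry has a one-line fix you overlook: since $S$ is finite, simply choose $j>\max S$ rather than an arbitrary $j\notin S$; then $i\in S$ forces $i<j$, and $f_i\wpo f_j$ is exactly the forbidden dominating pair. Your proposed shortcuts do not work as stated: from $f_i\wpo f_j$ (i.e.\ $\ell_i\detorder\ell_j$) with $i>j$ you cannot ``rename'' to obtain $\ell_j\detorder\ell_i$, since that is the opposite relation and nothing you have shown gives it; and a bad sequence need not be an antichain without a further well-foundedness or Ramsey argument. The Ramsey detour can be made to work but is unnecessary. Note that the paper's direct formulation avoids the issue entirely, because stabilisation of the ascending chain automatically delivers $k<i+1$.
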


\begin{theoremrep}\label{thm:equiv noeth}
For every $\omega$-well structured and totally ordered domain of variables $\WW$ and a
Noetherian commutative ring $\R$, the polynomial ring $\polyring \R \WW$ is equivariantly Noetherian.
\end{theoremrep}
\begin{proof}
Suppose, towards contradiction, that an equivariant ideal $\I\subseteq\polyring\R\WW$ is not
finitely generated.
Then there exists an infinite sequence $f_0,f_1,f_2,\dots$ of polynomials in $\I$ such that
$f_n \in \I\setminus\eqidealgen{\set{f_0,\dots,f_{n-1}}}$ for all $n\in\N$.
Since $\lexorder$ is well-founded we can furthermore assume that $\cc{f_{n}}$ is $\lexorder$-minimal for all $n$,
i.e. there is no $g\in\I\setminus\eqidealgen{\set{f_1,\dots,f_{n-1}}}$ such that $\cc{g}\lexorderstrict \cc{f_n}$.
As $\wpo$ is a well partial order, the sequence $(\cc {f_n})_{n\in\N}$ contains an infinite non-decreasing subsequence:
\[
\cc{f_{n_1}} \ \wpo \ \cc{f_{n_2}} \ \wpo \ \dots
\]
Let $\J_k \subseteq \R$ be the ideal generated by $\{\lc{f_{n_1}},\dots,\lc{f_{n_k}}\}$.
We have thus an increasing chain of ideals in $\R$,
\[
\J_1 \ \subseteq \ \J_2 \ \subseteq \ \dots,
\] 
which necessarily stabilises as $\R$ is Noetherian.
Let $K \in \N$ be such that $\J_n = \J_K$ for all $n \geq K$.
As $\J_{K+1} = \J_K$,
there are $r_1,\dots,r_K \in \R$  such that
\begin{align} \label{eq:lclc}
\lc{f_{n_{K+1}}} = \sum_{i = 1}^K r_i \cdot \lc{f_{n_i}}.
\end{align}
Moreover, since $\cc {f_{n_i}} \wpo \cc {f_{n_{K+1}}}$ for $i = 1 \ldots K$, there are some
monomials $g_1,\dots,g_K \in \monomials \WW$ and 
embeddings $\iota_1,\dots,\iota_K \in \E$ such that
\begin{align*}%\label{eq:vars and mon}
g_i \cdot \iota_i(\lm{f_{n_i}}) & = \lm{f_{n_{K+1}}} \\
%\ \
%\text{and}
%\ \
\iota_i(\vars{f_{n_i}}) & \subseteq \vars{f_{n_{i+1}}} 
\end{align*}
for $i = 1 \ldots K$.
Using the fact that $\lm{\_}$ and $\vars{\_}$ both commute with the action of embeddings, 
and Lemma \ref{lem:ltcdot}, the left-hand side can be rewritten to:
\begin{align}\label{eq:vars and mon}
\lm{g_i \cdot \iota_i(f_{n_i})} & = \lm{f_{n_{K+1}}} \\
%\ \
%\text{and}
%\ \
\label{eq:varsvars}
\vars{\iota_i(f_{n_i})} & \subseteq \vars{f_{n_{i+1}}} 
\end{align}
Since $\lt{f_{n_{i}}}  = 
\lc{f_{n_{i}}} \cdot \lm{f_{n_{i}}}$,
by the equalities \eqref{eq:lclc} and \eqref{eq:vars and mon}  we get:
\[
\lt{f_{n_{K+1}}} \ = \ 
\sum_{i = 1}^K r_i \cdot \lt{g_i \cdot \iota_i(f_{n_i})}.
\]
Using the equality $r \cdot \lt{f} = \lt{r \cdot f}$, 
we rewrite further to:
\begin{align*}%\label{eq:ltpierw}
\lt{f_{n_{K+1}}} \ = \ 
\sum_{i = 1}^K \lt{ r_i \cdot g_i \cdot \iota_i(f_{n_i})}.
\end{align*}
The left-hand side of the above equality contains just one term,
therefore $\lm{f_{n_{K+1}}} = \lm{ r_i \cdot g_i \cdot \iota_i(f_{n_i})}$ for some $i\in\setfromto 1 K$,
and leading monomials different that $\lm{f_{n_{K+1}}}$ cancel out in the right-hand side sum.
Dropping summands with leading monomials different that $\lm{f_{n_{K+1}}}$ yields: 
%\arka{
%I would write this follows since $\lm{f_{n_{K+1}}} = \lm{ r_i \cdot g_i \cdot \iota_i(f_{n_i})}$ does not cancel out in the sum and has coefficient
%\[
%  \sum_{i = 1}^K \lc{r_i \cdot g_i \cdot \iota_i(f_{n_i})}
%= \sum_{i = 1}^K r_i \cdot \lc{f_{n_i}}
%= \lc{f_{n_{K+1}}} \ .
%\]
%}
%
\begin{align}\label{eq:lt}
\lt{f_{n_{K+1}}} \ = \ 
\ltsymb\big(\sum_{i = 1}^K r_i \cdot g_i \cdot \iota_i(f_{n_i})\big).
\end{align}
Let $f'$ be the polynomial appearing on the right-hand side:
\[
f' \ := \ \sum_{i = 1}^K r_i \cdot g_i \cdot \iota_i(f_{n_i}).
\]
As $f' \in \I_{n_K}$ and $f_{n_{K+1}} \in \I_{n_{K+1}} \setminus \I_{n_K}$,
the difference of polynomials
$(f_{n_{K+1}} - f') \in \I_{n_{K+1}} \setminus \I_{n_K}$.
All the leading terms appearing on both sides of
\eqref{eq:lt} cancel out in $f_{n_{K+1}} - f'$.
We thus deduce 
\[
\lm {f_{n_{K+1}} - f'} \lexorderstrict \lm{f_{n_{K+1}}}.
\]
Moreover, by \eqref{eq:vars and mon} and \eqref{eq:varsvars} we deduce
\[
\vars{f_{n_{K+1}} - f'} \subseteq \vars{f_{n_{K+1}}} \ .
\]
Combining the last two statements we obtain
\[
\cc {f_{n_{K+1}} - f'} \lexorderstrict \cc{f_{n_{K+1}}},
\]
a contradiction with $\lexorder$-minimality of $\cc{f_{n_{K+1}}}$, which completes the proof.
\end{proof}

\begin{lemma}\label{lem:equiv noeth}
If $\polyring \R \WW$ is equivariantly Noetherian and $\VV$ is a reduct of $\WW$ then
$\polyring \R \VV$ is also equivariantly Noetherian.
\end{lemma}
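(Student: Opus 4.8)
The plan is to observe that passing to a reduct leaves the polynomial ring unchanged and merely enlarges the acting monoid of embeddings, and that finite generation is inherited along this enlargement.

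Concretely, I would first record two elementary facts. Since the reduct $\VV$ has the same carrier set $\W$ as $\WW$, the two polynomial rings literally coincide, $\polyring \R \VV = \polyring \R \WW$. And since every relation of $\VV$ is definable in $\WW$, and embeddings preserve and reflect all definable relations, every embedding of $\WW$ is also an embedding of $\VV$; hence $\embed \WW \subseteq \embed \VV$.

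Now take an arbitrary equivariant ideal $\I \subseteq \polyring \R \VV$, that is, an ideal closed under the action of $\embed \VV$. By the inclusion above it is then also closed under the action of $\embed \WW$, so $\I$ is an equivariant ideal of $\polyring \R \WW$. Applying the hypothesis, $\I$ has a finite basis $G \subseteq \I$ in $\polyring \R \WW$. I would then verify that the very same finite set $G$ is a basis of $\I$ regarded as an equivariant ideal of $\polyring \R \VV$: the equivariant ideal generated by $G$ in $\polyring \R \VV$ contains the one generated by $G$ in $\polyring \R \WW$, because any presentation $\sum_i h_i \cdot \iota_i(g_i)$ with $\iota_i \in \embed \WW$ is a fortiori a presentation over $\embed \VV$; thus it contains $\I$. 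Conversely it is contained in $\I$, since $G \subseteq \I$ and $\I$ is closed under addition, multiplication by arbitrary polynomials, and the action of $\embed \VV$. Hence $\I$ equals the equivariant ideal generated by the finite set $G$ in $\polyring \R \VV$, so $\polyring \R \VV$ is equivariantly Noetherian.

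I do not anticipate any genuine difficulty; the only point that requires care is the direction of the inclusion — forming a reduct \emph{enlarges} the set of embeddings, so that a basis good for the smaller monoid $\embed \WW$ stays good for the larger monoid $\embed \VV$. (Note the analogous statement with the roles of $\WW$ and $\VV$ swapped need not hold.)
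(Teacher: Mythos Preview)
Your proposal is correct and follows essentially the same argument as the paper: use the inclusion $\embed \WW \subseteq \embed \VV$ to see that an $\embed \VV$-equivariant ideal is also $\embed \WW$-equivariant, extract a finite basis there, and observe that the same finite set remains a basis under the larger monoid. Your write-up is simply more explicit about the two inclusions verifying that $G$ generates $\I$ in $\polyring \R \VV$, and about why the embedding monoids are nested in that direction.
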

\begin{proof}
Since $\E \subseteq \embed \VV$, 
every equivariant ideal $\I$ in $\polyring \R \VV$ is automatically an equivariant ideal
in $\polyring \R \WW$, and therefore
$\I$  is finitely generated in there.
Furthermore, again due to $\E \subseteq \embed \VV$, a finite basis of $\I$ 
in $\polyring \R \WW$ 
is also its basis in $\polyring \R \VV$.
\end{proof}

\begin{remark} \label{rem:appliesto}
\rm
Theorem \ref{thm:equiv noeth}
applies in particular to: every ordinal (by Lemma \ref{lem:wpo}); rational order $\QQ$;
equality domain %$\WW = (\W, =)$ 
(by Lemma \ref{lem:equiv noeth});
lexicographic products thereof (by Lemma \ref{lem:prod}).
All subsequent results in this paper apply to all the above-listed domains too. 
\end{remark}

According to Pouzet's conjecture \cite{Pouzet20},  
necessary and sufficient conditions of Theorems \ref{thm:wqo} and \ref{thm:equiv noeth}
coincide, namely
 every $\omega$-well structured structure may be equipped with a total order
while preserving $\omega$-well structure:
\begin{conjecture}[\cite{Pouzet20}]
Every $\omega$-well structured relational structure $\WW$ is
a reduct of an $\omega$-well structured, totally ordered structure.
\end{conjecture}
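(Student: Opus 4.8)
The plan is to realise the structure promised by the conjecture in the simplest possible shape: given an arbitrary $\omega$-well structured domain $\WW$, I would look for a \emph{single} total order $\wo$ on its carrier such that the expansion $\WW^{+} = (\WW, \wo)$ is still $\omega$-well structured. Then $\WW$ is automatically a reduct of $\WW^{+}$, and $\WW^{+}$ is totally ordered by construction, so the whole content reduces to preserving $\omega$-well structure. The difficulty is genuine and entirely localised here: adjoining $\wo$ can only \emph{shrink} the embedding monoid, $\embed{\WW^{+}} \subseteq \E$, so the determined order $\detorder$ on $\agepar{\WW^{+}}{\N}$ --- which, as a set, coincides with $\agepar{\WW}{\N}$ --- is a \emph{sub-quasi-order} of the given \wqo on $\agepar{\WW}{\N}$; a sub-quasi-order of a \wqo need not be a \wqo, and a careless choice of $\wo$ does destroy the property, exactly as replacing the lexicographic product $\QQ\domprod\QQ$ by the Cartesian product $\QQ\times\QQ$ does in Example~\ref{ex:Cartesian}. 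So one must choose $\wo$ \emph{compatible} with whatever structure is responsible for the \wqo on $\agepar{\WW}{\N}$.

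The approach I would pursue is a decomposition-and-glue argument lifting Lemma~\ref{lem:prod}. \emph{Step 1:} show that every $\omega$-well structured $\WW$ admits a well-founded, transfinite ``layered'' presentation, morally a description of $\WW$ as an iterated lexicographic-style (substitution/modular) combination of $\omega$-well structured pieces each already known to be totally orderable --- ordinals, $\QQ$, the equality domain, the universal tree, and lexicographic products thereof (cf.\ Remark~\ref{rem:appliesto}) being the intended atoms. \emph{Step 2:} from such a presentation define $\wo$ layer by layer as a lexicographic order and prove $\omega$-well structure of $\WW^{+}$ by transfinite induction over the presentation, at each step re-running the Higman/Nash-Williams-style argument behind Claim~\ref{claim:wqo emb}. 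In parallel I would attack the \emph{homogeneous} case directly: when $\WW$ is homogeneous its self-embeddings realise all finite substructure embeddings, so $\omega$-well structure is a property of the age $\age{\WW}$ alone, and the task becomes finding a total \emph{order expansion} of the corresponding \Fr class that preserves this \wqo property --- a question in the spirit of structural Ramsey theory, where generic order expansions are a standard device, with \wqo-preservation to be verified via a labelled Higman/Kruskal-type lemma (this is precisely what underlies the equality domain being a reduct of $\QQ$).

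I expect \emph{Step 1} to be the main obstacle, and it is why the conjecture is open: $\omega$-well structure is a \emph{global, asymptotic} hypothesis --- a statement about arbitrary infinite sequences of labelled finite substructures --- whereas a total order is a \emph{rigid} object, and there is no a~priori reason an arbitrary such structure carries a canonical well-founded decomposition; structures built by infinitely iterated substitution with no minimal component are the obvious danger zone, and any minimal-bad-sequence (better-quasi-order style) induction one sets up needs a decomposition to recurse on. One should also not expect a shortcut via linearisation: the \wqo on $\agepar{\WW}{\N}$ is a \wqo of labelled finite \emph{subsets} of the carrier, not of the carrier itself, so a well-order extending it is not a total order on $\WW$, and the missing ingredient is again a coherent way to glue the local orders into a global one. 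A realistic intermediate milestone is therefore the homogeneous case via order expansions, or structures of bounded decomposition rank; the general case appears to need genuinely new structural input.
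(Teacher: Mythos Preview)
The statement under consideration is a \emph{conjecture}, attributed to Pouzet, that the paper explicitly leaves open; there is no proof in the paper to compare against. Your proposal is not a proof either, and to your credit you say so: you outline a two-step decomposition-and-glue strategy, identify Step~1 (producing a layered presentation of an arbitrary $\omega$-well structured structure) as the main obstruction, and flag the homogeneous case as a plausible intermediate target. That is an honest and reasonable research plan, but it is not a proof and should not be presented as one.

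Concretely, the genuine gap is exactly where you locate it. Nothing in the hypothesis ``$\omega$-well structured'' furnishes the well-founded decomposition you need for Step~1, and you do not supply one; you only list examples (ordinals, $\QQ$, equality, universal tree, lexicographic products) for which such a decomposition is already in hand. Your Step~2 depends entirely on Step~1 succeeding. The parallel attack on the homogeneous case via order expansions is a sensible sub-goal, but even there the \wqo-preservation check is asserted rather than carried out: that a \Fr class admits a generic order expansion does not by itself imply that the expanded age, labelled by $\N$, is still a \wqo under the restricted embedding relation. In short, your write-up is a fair summary of why the conjecture is open and what a proof might look like, not a proof attempt that can be evaluated for correctness.
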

The conjecture, together with Theorems \ref{thm:wqo} and \ref{thm:equiv noeth}, and 
Lemma \ref{lem:equiv noeth},
would imply that 
Property \ref{prop:noeth} is satisfied
exactly by 
reducts of $\omega$-well structured and totally ordered domains of variables.

%\smallskip

\para{Quasi orders on monomials}

We need some definitions before proving Theorems \ref{thm:wqo} and \ref{thm:equiv noeth}.
In the sequel let $\R$ be a fixed Noetherian commutative ring, and let 
$\WW = (\W, \wo, \ldots)$ be a fixed %$\omega$-well structured 
domain of variables, 
totally ordered by  $\wo$.
%
%\para{Quasi orders on monomials}
%
We define three quasi orders on monomials, related by the following refinements:
\[
\xymatrix{
f \lexorder f' \ & \ f  \divorder f' \ \ar@{=>}[r] \ar@{=>}[l] & \ f \wpo f'.
}
\]

The two left ones are standard orders.
Given two monomials $f$ and $f'$,
\emph{division order} is defined by
$f \divorder f'$ if $f \cdot g = f'$ for some $g\in\monomials \WW$.
In order to define \emph{lexicographic order}, 
%\footnote{
%This is an example of a \emph{term order}.
%}:
we consider the $\wo$-largest variable $\w\in\W$ such that $\degpol f \w \neq \degpol {f'} \w$, and
set $f\lexorderstrict f'$ if $\degpol f \w < \degpol {f'} \w$.
%The division order refines $\wpo$, namely $f\divorder f' \implies f\wpo f'$.
Lexicographic order is a \emph{term order}:
%, i.e.~it satisfies the following properties:
%
\begin{lemma} \label{lem:lex is to}
For every monomials $f, f', g\in\monomials \WW$, 
\[
f\divorder f' \implies f \lexorder f' \implies f\cdot g \lexorder f' \cdot g.
\]
%\begin{itemize}
%\item[(i)]
%f\divorder f' \implies f \lexorder f' \implies f\cdot g \lexorder f' \cdot g.
%%$\zeromon \lexorder f$,  \sla{is it used anywhere?}
%\item[(ii)]
%$f \lexorder f' \implies f\cdot g \lexorder f' \cdot g$.
%\end{itemize}
\end{lemma}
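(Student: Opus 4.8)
The plan is to prove the two implications separately, both by direct manipulation of the degree functions $\degpol f \w$.

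\textbf{First implication, $f \divorder f' \implies f \lexorder f'$.}
Suppose $f \divorder f'$, so $f \cdot g = f'$ for some monomial $g$, which means $\degpol f \w + \degpol g \w = \degpol {f'} \w$ for every variable $\w\in\W$, and in particular $\degpol f \w \leq \degpol {f'} \w$ for all $\w$. If $f = f'$ we are done since $\lexorder$ is reflexive. Otherwise there is at least one variable where the degrees differ; let $\w$ be the $\wo$-largest such variable. Since $\degpol f \w \leq \degpol {f'} \w$ and they are unequal, we get $\degpol f \w < \degpol {f'} \w$, which is exactly the defining condition for $f \lexorderstrict f'$, hence $f \lexorder f'$.

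\textbf{Second implication, $f \lexorder f' \implies f\cdot g \lexorder f' \cdot g$.}
Again, if $f = f'$ then $f \cdot g = f' \cdot g$ and we are done. Otherwise $f \lexorderstrict f'$, witnessed by the $\wo$-largest variable $\w$ with $\degpol f \w \neq \degpol {f'} \w$, for which $\degpol f \w < \degpol {f'} \w$. Now observe that for every variable $\v$, $\degpol {f\cdot g} \v = \degpol f \v + \degpol g \v$ and $\degpol {f'\cdot g} \v = \degpol {f'} \v + \degpol g \v$, so $\degpol {f\cdot g} \v \neq \degpol {f'\cdot g} \v$ holds for exactly the same variables $\v$ as $\degpol f \v \neq \degpol {f'} \v$. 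Hence the $\wo$-largest variable where $f\cdot g$ and $f'\cdot g$ differ is again $\w$, and there $\degpol {f\cdot g} \w = \degpol f \w + \degpol g \w < \degpol {f'} \w + \degpol g \w = \degpol {f'\cdot g} \w$, so $f\cdot g \lexorderstrict f'\cdot g$, giving $f\cdot g \lexorder f'\cdot g$ as required.

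\textbf{Main obstacle.} There is essentially no obstacle here: the statement is a routine verification once one unwinds the definition of $\lexorderstrict$ in terms of the $\wo$-largest variable of disagreement and uses additivity of degrees under monomial multiplication. The only point requiring a little care is the bookkeeping of the edge case $f = f'$ (and $g$ possibly $\zeromon$), and noting that multiplication by $g$ neither creates nor destroys disagreements between the two monomials, so the witnessing variable is preserved. The totality of $\wo$ is what guarantees that "the $\wo$-largest variable of disagreement" is well defined whenever the (finite) set of such variables is nonempty.
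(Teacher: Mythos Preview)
Your proof is correct. The paper states this lemma without proof, treating it as a standard fact about lexicographic monomial orders; your argument is exactly the routine verification one would expect.
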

%
%Division order is the same as the pointwise order
%(ideintifying monomials with functions $\W\to\N$ assigning to each variable its degree): 
%%
%\begin{align} \label{eq:divdef}
%f \divorder f' \iff \prettyforall{\w\in\W}{f(\w)\leq f(\w')},
%\end{align}
%i.e., $f\divorder f'$ if for every variable $\w\in\W$, 
%its degree in $f$ is smaller or equal to its degree in $f'$.
%
The quasi order $\wpo$ is defined as relaxation of division order modulo renamings of variables:
\begin{align}  
\begin{aligned}
\label{eq:wpodef}
f \wpo f' & \iff \prettyexists{\iota \in \E}{\iota(f) \divorder f'}.
%\\
%f \domordstrict f' & \iff \prettyexists{\iota \in \E}{\iota(f) \lexorderstrict f' \ \wedge \ \iota(\vars{f})\subseteq \vars {f'}}.
\end{aligned}
\end{align}
%
%Note that in the former definition the embedding $\iota$ forcedly satisfies
%$\iota(\vars{f})\subseteq \vars {f'}$.
%
We argue that \eqref{eq:wpodef} defines a quasi order, i.e., a transitive relation. 
Indeed, as action of embeddings preserves the division order, namely 
\begin{align*}
f\divorder f' & \implies \kappa(f)\divorder\kappa(f') %\\
%f\lexorder f' & \implies \iota(f)\lexorder\iota(f')
\end{align*}
for every $\kappa \in \E$,
we may deduce transitivity of $\wpo$: % (and likewise for $\domord$):
\[
\iota(f) \divorder f' \ \wedge \ 
\kappa(f') \divorder f'' \ \implies \ 
(\kappa\circ\iota)(f) \divorder \kappa(f') \divorder f''.
\]
%The two orders are related by refinement:
%$f \wpo f' \implies f \domord f'$.
%Indeed,
%if $\iota(f) \divorder f'$ for an embedding $\iota$,
%we have $\iota(f)\lexorder f'$ and $\iota(\vars{f}) \subseteq \vars{f'}$,
%i.e., $f\domord f'$.
$\monomials \WW$ may be identified with $\agepar \WW {\N\setminus 0}$, and $\wpo$ with the 
quasi order determined by $\omega$ on 
%$\agepar \WW \N$, restricted to 
$\agepar \WW {\N\setminus 0}$.
Thus 
~%\begin{claim}
$\wpo$ is a \wqo, if $\WW$ is $\omega$-well structured.
%\end{claim}
%In consequence, $\domord$ is a \wqo as well.
%
%
\begin{definition}%[Leading monomial, coefficient and term]
\label{def:lm} 
\rm
Given a non-zero polynomial $f \in \polyring \R \WW$, $f\neq\zeropol$, the \emph{leading monomial} 
$\lm{f} \in \monomials \WW$ of $f$
is the $\lexorder$-largest monomial appearing in $f$;
the \emph{leading coefficient} $\lc f\in \R$ of $f$ is the coeffcient of $\lm f$ in $f$; and
the \emph{leading term} of $f$ is $\lt f = \lc{f} \cdot \lm{f}$.
\end{definition}
\begin{lemmarep} \label{lem:ltcdot}
For every  $f\in\polyring \R \WW$ and $g\in\monomials \R$ it holds
$\lm{g \cdot f} \ = g\cdot \lm f$ and
$\lt{g \cdot f} \ = g\cdot \lt f$.
\end{lemmarep}
%
%\arka{proof can be sent to appendix if we need space}
%
\begin{proof}
Due to Lemma \ref{lem:lex is to}, the $\lexorder$-largest monomial
$\lm f$ in $f$, after mutliplying $f$ by a monomial $g$, is still the $\lexorder$-largest
monomial in $g\cdot f$, namely $\lm {g\cdot f} = g\cdot \lm f$.
As $g$ is a monomial, we have $\lc {g\cdot f} = \lc f$, which implies
$\lt {g\cdot f} = g\cdot \lt f$.
\end{proof}

\para{Characteristic pairs}
Instead of just monomials, we consider monomial-set pairs:
\[
\CP \ = \ \setof{(f,v) \in \monomials \WW \times \pfin {\W}}{\vars f \subseteq v}.
\]
\begin{definition}
\rm
The \emph{characteristic pair} of $f$ is defined as 
$\cc f = (\lm f, \vars f) \in \CP$.
\end{definition}

We extend division and lexicographic orders to $\CP$
by combining them with inclusion (we overload symbols $\divorder$ and $\lexorder$):
\begin{align*}
(f, v) \ \divorder \ (f', v') & \iff f\divorder f' \ \wedge \ v \subseteq v' \\
(f, v)  \lexorderstrict (f', v') & \iff f\lexorderstrict f' \ \wedge \ v \subseteq v'.
\end{align*}
While lexicographic order on monomials does not need to be well founded, its extension to $\CP$ does.
This  is crucial in the sequel.
\begin{lemmarep} \label{lem:lexwf}
The extended $\lexorder$ is well founded.
\end{lemmarep}
\begin{proof}
We need to show that every decreasing chain 
%\[
%(\lm {f_1}, \vars{f_1}) \glex (\lm{f_2}, \vars{f_2}) \glex \ldots
%\]
\[
(f_1, v_1)  \lexorderinv (f_2, v_2) \lexorderinv \ldots
\]
eventually stabilises.
The second component eventually
stabilises, i.e., there is $n\in\N$ and $v\subseteqfin \W$ such that for all $i\geq n$ we have
$v_i = v$.
Since $v$ is finite, lexicographic order is well founded on 
$\setof{f \in \monomials \WW}{\vars f \subseteq v}$, and
therefore $f_n \lexorderinv f_{n+1} \lexorderinv \ldots$ eventually stabilises too.
\end{proof}

Similarly we extend $\wpo$ to $\CP$, by combining relaxation of division order \eqref{eq:wpodef}
with inclusion:
\begin{align} \label{eq:wpodef2}
(m, v) \ \wpo \ (m', v')  \iff 
\prettyexists{\iota \in \E}{\iota(m) \divorder m' \wedge \iota(v) \subseteq v'}.
\end{align}

\begin{lemma} \label{lem:extwqo}
The extended $\wpo$ is a \wqo, if $\WW$ is $\omega$-well structured.
\end{lemma}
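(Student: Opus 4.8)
The plan is to reduce the statement about the extended $\wpo$ on $\CP$ to the already-established fact that $\wpo$ on monomials is a \wqo whenever $\WW$ is $\omega$-well structured, and more generally to the fact that $\WW$ is well structured so that arbitrary \wqo labels can be used. Recall that $\CP = \setof{(f,v)}{\vars f \subseteq v}$, with $(m,v)\wpo(m',v')$ iff some $\iota\in\E$ satisfies $\iota(m)\divorder m'$ and $\iota(v)\subseteq v'$. The key observation is that a pair $(m,v)\in\CP$ is exactly a finite induced substructure $v$ of $\WW$ together with a labelling that records, for each variable $\w\in v$, its exponent $\degpol m \w \in \N$ in the monomial $m$ (here $0$ is allowed, since $v$ may strictly contain $\vars m$). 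In other words, $\CP$ is in natural bijection with $\agepar \WW \N$, the set of finite induced substructures of $\WW$ labelled by $(\N,\leq)$.

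First I would make this identification precise: send $(m,v)$ to the labelling $\ell:v\to\N$ with $\ell(\w)=\degpol m \w$. This is a bijection onto $\agepar \WW \N$, with inverse sending $\ell:v\to\N$ to the pair $(m,v)$ where $m = \prod_{\w\in v}\w^{\ell(\w)}$. Next I would check that under this bijection the extended order $\wpo$ on $\CP$ coincides with the quasi order $\detorder$ determined by $(\N,\leq)$ on $\agepar \WW \N$. Unwinding definitions: $(m,v)\wpo(m',v')$ means there is $\iota\in\E$ with $\iota(v)\subseteq v'$ and $\iota(m)\divorder m'$; the latter says $\degpol{\iota(m)}{\w'} \le \degpol{m'}{\w'}$ for all $\w'$, equivalently $\degpol m \w \le \degpol{m'}{\iota(\w)}$ for all $\w\in v$ (using that $\degpol{\_}{\_}$ commutes with embeddings). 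This is precisely $\ell(\w)\le\ell'(\iota(\w))$ for all $\w\in v$ together with $\iota(v)\subseteq v'$, i.e. $\ell \detorder \ell'$. Hence $(\CP,\wpo) \cong (\agepar \WW \N, \detorder)$ as quasi orders.

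With that isomorphism in hand, the lemma is immediate: $\WW$ being $\omega$-well structured means exactly that $\omega=(\N,\leq)$ determines a \wqo on $\agepar \WW \N$, so $\detorder$ on $\agepar \WW \N$ is a \wqo, and therefore so is $\wpo$ on $\CP$. (If one prefers to lean on the slightly weaker hypothesis actually needed earlier in the text, the same argument works verbatim with $\N\setminus 0$ in place of $\N$ once one notes that $\agepar \WW \N$ embeds into $\agepar \WW {\N\setminus 0}$ by shifting labels up by one, or simply adds a bottom label as in the proof of Lemma \ref{lem:wpo}.) The main obstacle is essentially bookkeeping: being careful that $v$ may properly contain $\vars m$ so that the label $0$ genuinely occurs, and that the divisibility condition translates correctly through the embedding $\iota$, i.e. that $\degpol{\iota(m)}{\iota(\w)} = \degpol{m}{\w}$; once these are verified the result follows with no further work.
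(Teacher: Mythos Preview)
Your proof is correct and follows essentially the same approach as the paper: you encode a pair $(m,v)\in\CP$ as the labelling $\ell:v\to\N$, $\ell(\w)=\degpol m \w$, verify that the extended $\wpo$ coincides with the quasi order $\detorder$ determined by $(\N,\leq)$ on $\agepar \WW \N$, and conclude directly from the definition of $\omega$-well structured. The paper's argument is identical (its case split in defining $\ell_{(f,v)}$ is redundant given that $\degpol f \w = 0$ for $\w\notin\vars f$, which you correctly exploit), and your final parenthetical about $\N\setminus 0$ is unnecessary here since the hypothesis is $\omega$-well structuredness with labels in $\N$.
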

\begin{proof}
Given a pair $(f, v) \in \CP$, define
 $(\ell_{(f,v)} : v\to\N) \in \agepar \WW \N$ as follows:
\[
\ell_{(f,v)}(\w) \ = \ \begin{cases}
\degpol f \w & \text{ if } \w \in \vars f \\
0 & \text{ if } \w \in v\setminus \vars f.
\end{cases}
\]
We observe that $(f, v) \wpo (f', v')$ exactly when 
$\ell_{(f,v)} \detorder \ell_{(f', v')}$ are related by the quasi order determined by $\omega = (\N, \leq)$.
Therefore $\wpo$ is a \wqo, since $\WW$ is $\omega$-well structured.
\end{proof}

%
%\begin{proof}  [Proof of Theorem \ref{thm:equiv noeth}]  % over fields]

\para{Proof of Theorem \ref{thm:equiv noeth}}

Fix a totally ordered and $\omega$-well structured domain $\WW = (\W, \wo, \ldots)$.
For simplicity, we restrict to the case when $\R$ is a \emph{field} --
the general proof for an arbitrary Noetherian commutative ring can be found in the full version
\cite{GL24-arxiv}. 
% is in Appendix.
%
We proceed by a careful adaptation of the classical proof, see e.g.~\cite[Section 2.5]{Cox15}
using $\wpo$ in place of division order, and
characteristic pairs in place of just leading monomials.

Consider an arbitrary ideal $\I \subseteq \polyring \R \WW$, and the set of characteristic pairs
of all polynomials in $\I$:
\[
\cc \I = \setof{\cc f}{f\in\I}.
\]
Let $J\subseteq \CP$ be the upward closure of $\cc \I$ with respect to $\wpo$:
\[
J = \setof{(f,v)\in \CP}{(f',v')\wpo (f,v) \text{ for some } (f',v')\in \cc \I}.
\]
The set $J$ is thus upward-closed and therefore has a finite basis $B\subseteqfin J$
(relying on Lemmas \ref{lem:wqo}  and \ref{lem:extwqo}).
W.l.o.g.~$B\subseteq \cc \I$, as $J$ is the upward closure of $\cc \I$.
Therefore:
\begin{claim} \label{claim:fv}
 For every $(f,v)\in B$, there is a polynomial $g_{f,v}\in\I$  such that
$\cc {g_{f,v}} = (f,v)$.
\end{claim}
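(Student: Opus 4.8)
The claim is an immediate consequence of the way $B$ was chosen, so the proof will be a one-line unfolding of definitions together with the axiom of choice. Recall that $\cc\I = \setof{\cc f}{f\in\I}$ by definition, and that $B$ was selected as a finite basis of the upward-closed set $J$ (finite by Lemmas \ref{lem:wqo} and \ref{lem:extwqo}), and then replaced, without loss of generality, by a subset of $\cc\I$. I would first justify that last step explicitly: since $J$ is the $\wpo$-upward closure of $\cc\I$, for every element $(f,v)\in B$ there is some $(f_0,v_0)\in\cc\I$ with $(f_0,v_0)\wpo (f,v)$; replacing each such $(f,v)$ by the corresponding $(f_0,v_0)$ yields a set $B'\subseteq\cc\I$ that is still a basis of $J$ (a basis of a basis is a basis, as $\wpo$ is transitive). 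Rename $B'$ to $B$; this is the $B$ referred to in the claim.

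\textbf{Main step.} With $B\subseteq\cc\I$ in hand, fix $(f,v)\in B$. By the definition of $\cc\I$ there exists at least one polynomial $h\in\I$ with $\cc h = (f,v)$, i.e.\ $\lm h = f$ and $\vars h = v$. Using the axiom of choice over the finite set $B$, select one such $h$ for each $(f,v)\in B$ and denote it $g_{f,v}$. Then $g_{f,v}\in\I$ and $\cc{g_{f,v}} = (f,v)$ by construction, which is exactly the assertion.

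\textbf{Obstacle.} There is essentially no obstacle here: the only point requiring a word of care is the passage to $B\subseteq\cc\I$, which relies on nothing beyond transitivity of $\wpo$ (already established when $\wpo$ was introduced). Everything else is a direct reading of the definitions of $\cc\I$ and of a characteristic pair. The substantive use of this claim is deferred to the subsequent construction, where the polynomials $g_{f,v}$ will serve as the candidate finite basis of $\I$.
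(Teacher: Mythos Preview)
Your proposal is correct and matches the paper's approach: the paper treats the claim as immediate from the preceding sentence ``W.l.o.g.\ $B\subseteq\cc\I$, as $J$ is the upward closure of $\cc\I$'' together with the definition of $\cc\I$, and does not even supply a separate proof. Your write-up simply makes explicit the justification of that w.l.o.g.\ step and the ensuing choice of witnesses, which is exactly the intended reasoning.
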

Let $\B = \setof{g_{f,v}}{(f,v)\in B}$.
We prove that $\B$ is a basis of $\I$, i.e., $\I = \eqidealgen \B$.

One inclusion is obvious, $\eqidealgen \B \subseteq \I$.
In order to prove the opposite inclusion suppose, towards contradiction, that
$\I \setminus \eqidealgen \B$ is nonempty.
Take any polynomial $g\in \I \setminus \eqidealgen \B$ whose characteristic pair
$(f,v) = \cc g$ is $\lexorder$-minimal, i.e., there is no $g' \in \I\setminus \eqidealgen \B$
with $\cc {g'} \lexorderstrict \cc g$ (cf.~Lemma\ref{lem:lexwf}).
Choose any $(f', v')\in B$ such that 
$(f', v')\wpo (f,v)$, i.e., there is some $h\in\monomials \WW$ and $\iota\in\E$ such that
\[
h\cdot\iota(f') = f \qquad
\iota(v')\subseteq v.
\]
Equivalently, using Claim \ref{claim:fv} and 
taking $g' = g_{f',v'}\in\B$, we have:
% with $\cc {g'} = (f', v')$, to obtain:
\[
h\cdot \iota(\lm {g'}) = \lm g \qquad
\iota(\vars {g'})\subseteq \vars g.
\]
As $\lm{\_}$ and $\vars{\_}$ both commute with the action of embeddings,
the two equalities % \eqref{eq:cmcm}
rewrite first to
\begin{align} \label{eq:cmcmcm}
h \cdot \lm{\iota(g')} \ = \ \lm{g} \qquad
\vars {\iota(g')} \ \subseteq \ \vars{g},
\end{align}
and then,
by Lemma \ref{lem:ltcdot}, the first one rewrites further to
\begin{align} 
\label{eq:lm2}
\lm{h \cdot \iota(g')} \ & = \ \lm{g}.
\end{align}
Let $g''$ be the polynomial appearing on the left-hand side:
$
g'' \ :=  h \cdot \iota(g').
$
By \eqref{eq:lm2}, the leading terms appearing on both sides of
\eqref{eq:lm2} cancel out in $g - g''$.
We thus have
\begin{align} \label{eq:lmlex}
\lm {g - g''} \lexorderstrict \lm{g}.
\end{align}
Furthermore, by the first equality in \eqref{eq:cmcmcm} we get $\vars h \subseteq \vars {g}$,
and then by the second one we
get $\vars {g''} \subseteq \vars{g}$, and hence
\begin{align} \label{eq:varssub}
\vars {g - g''} \subseteq \vars{g}.
\end{align}
By \eqref{eq:lmlex} and \eqref{eq:varssub} we thus derive:
\begin{align} \label{eq:contr1}
\cc {g - g''} \lexorderstrict \cc{g}.
\end{align}
Finally, as both $\eqidealgen \B$ and $\I$ are ideals,
knowing that $g'' \in \eqidealgen \B\subseteq \I$ and $g \in \I\setminus \eqidealgen \B$,
we deduce:
\begin{align} \label{eq:contr2}
(g - g'') \in \I\setminus \eqidealgen \B.
\end{align}
Conditions
\eqref{eq:contr1} and \eqref{eq:contr2} are in contradiction with $\lexorder$-minimality of $\cc{g}$,
which completes the proof
of Theorem \ref{thm:equiv noeth}.
\qed

\begin{slexample} \label{ex:Hilb-thm}
The base constructed in the proof of Theorem \ref{thm:equiv noeth} consists of 
(representatives of) polynomials with $\wpo$-minimal characteristic pairs.
For illustration, 
take ordered nonnegative integers as the domain of variables, $\WW = \set{a_0 < a_1 < ...}$, 
and the rational field $\Qf$. 
Consider an ideal $\I \subseteq \polyring \Qf \WW$ containing those polynomials 
   without constant term, where the sum of coefficients of monomials of even degree 
   is equal to the sum of coefficients of monomials of odd degree.
   An example of a base given by the proof of Theorem \ref{thm:equiv noeth}  is 
   \begin{align*}%\label{eq:B}
   \B=\set{a_0^2 + a_0, a_0 - a_1},
   \end{align*}
   as the following two characteristic pairs    are the $\wpo$-minimal elements in $\cc \I$:
   \[
   c_1 = \cc{a_0^2 + a_0} = (a_0^2, \set{a_0})  \qquad c_2 = \cc{a_0 - a_1} = (a_1, \set{a_0, a_1}).
   \]
   Indeed, consider the leading term of a non-zero polynomial $f\in \I$.
   If the degree of some variable in $\lm f$ is at least $2$, we have $c_1 \wpo \cc f$.
   Otherwise all variables appearing in $\lm f$ are of degree $1$.
   As $f\in\I$, it uses more that one variable, and the largest of them must necessarily appear 
   in $\lm f$.
   Therefore $c_2 \wpo \cc f$.
   
   There are several other possible bases that could be chosen in the proof of Theorem \ref{thm:equiv noeth},
   for instance
   $\B_n=\set{a_0^2 + a_0, a_0^n + (-1)^n a_1}$ for any $n\geq 1$.
\end{slexample}

%\begin{proof}[Proof of Theorem \ref{thm:wqo}]
\para{Proof of Theorem \ref{thm:wqo}}

Let $\WW = (\W, \ldots)$ be a relational structure and $\R$ a commutative ring such that
$\polyring \R \WW$ is equivariantly Noetherian.
It is sufficient to prove that $(\monomials \WW, \wpo)$ is a \wqo,
as it is isomorphic to the quasi order determined by
$\omega = (\N, \leq)$ on $\agepar{\WW} {\N\setminus 0}$, and hence also on $\agepar \WW \N$.
%We rely on Lemma \ref{lem:wqo}.
%Consider any upward-closed subset $I = \set{f_1, f_2, \ldots} \subseteq \monomials \WW$.
%Considering monomials in $I$ as polynomials, let $\I$ be the equivariant ideal 
%$\eqidealgen I \subseteq \polyring \R \WW$ generated by $I$.  
%Let then $\B$ be its finite basis.

Given any infinite sequence of monomials $f_1, f_2, \ldots$, we prove that it contains a dominating pair. % in $\monomials \WW$,
Let $\I_i %=\eqidealgen {\set{f_1, \ldots, f_{i-1}}} 
\subseteq \polyring \R \WW$ 
be the equivariant ideal generated  by the set $\set{f_1, \ldots, f_{i-1}}$.
Clearly, $\I_1 \subseteq \I_2 \subseteq \ldots$.
\begin{claim}
$\I_i = \I_{i+1}$ for some $i\in\N$.
\end{claim}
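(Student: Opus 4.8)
The plan is to pass to the limit of the chain and invoke equivariant Noetherianity. Set $\I\defeq\bigcup_{i\in\N}\I_i$. Since the chain $\I_1\subseteq\I_2\subseteq\dots$ is increasing, $\I$ is an equivariant ideal of $\polyring\R\WW$: any two elements of $\I$ lie in a common $\I_i$, so $\I$ is closed under addition; each $\I_i$ is closed under multiplication by elements of $\polyring\R\WW$ and under the action of $\E$, so $\I$ inherits these closures as well. Hence $\I$ is a genuine equivariant ideal, to which the hypothesis applies.

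Now, because $\polyring\R\WW$ is equivariantly Noetherian, $\I$ has a finite basis $\B=\set{b_1,\dots,b_m}$ (if $\I=\set{\zeropol}$ we may simply take $i=1$). For each $\ell$ pick $j_\ell\in\N$ with $b_\ell\in\I_{j_\ell}$, and let $i\defeq\max\set{j_1,\dots,j_m}$. Then $\B\subseteq\I_i$, and since $\I_i$ is an equivariant ideal containing $\B$ we obtain $\I=\eqidealgen\B\subseteq\I_i$; together with $\I_i\subseteq\I_{i+1}\subseteq\I$ this forces $\I_i=\I_{i+1}$, as claimed.

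I do not expect any genuine obstacle here: this is just the equivariant analogue of the classical equivalence between Noetherianity and the ascending chain condition on ideals, and the only point requiring (routine) care is checking that a directed union of equivariant ideals is again an equivariant ideal. Once the claim is established, the proof of Theorem~\ref{thm:wqo} finishes by unfolding the membership $f_i\in\I_{i+1}=\I_i=\eqidealgen{\set{f_1,\dots,f_{i-1}}}$: writing $f_i=\sum_j h_j\cdot\iota_j(f_{k_j})$ with each $k_j<i$ and comparing leading monomials with those of the single monomial $f_i$ (using Lemma~\ref{lem:ltcdot} and the fact that $\lm{\_}$ and $\vars{\_}$ commute with embeddings), one extracts some $k_j<i$ with $f_{k_j}\wpo f_i$, i.e.\ a dominating pair.
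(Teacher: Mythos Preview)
Your proof is correct and follows essentially the same approach as the paper: both form the limit ideal $\I$, invoke equivariant Noetherianity to obtain a finite basis $\B$, and take $i$ large enough that $\B\subseteq\I_i$. The only cosmetic difference is that the paper defines $\I$ as the equivariant ideal generated by $\set{f_1,f_2,\ldots}$ rather than as the union $\bigcup_i\I_i$, but of course these coincide; your version has the minor advantage of making the step ``choose $i$ with $\B\subseteq\I_i$'' immediately justified.
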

To prove the claim, let $\I \subseteq \polyring \R \WW$  be the equivariant ideal of polynomials generated by 
the infinite set $\set{f_1, f_2, \ldots}$.
By assumption, it has a finite basis $\B$, $\I = \eqidealgen \B$.
Choose $i\in\N$ such that  $\B \subseteq \I_i$.
Then $\I_i = \eqidealgen \B = \I$ and hence $\I_i = \I_{i+1} = \ldots = \I$.

\smallskip

By the claim we have $f_{i+1} \in \I_{i}$, i.e., 
for some $r_1, \ldots, r_{i}\in\R$, embeddings $\iota_1, \ldots, \iota_{i}\in\E$, 
and monomials $h_1, \ldots, h_{i}\in\monomials \R$,
\[
f_{i+1} \ = \ r_1 \cdot h_1 \cdot \iota_1(f_1) + \ldots + r_{i} \cdot h_{i} \cdot \iota_{i}(f_{i}).
\]
Since $f_{i+1}$ is a monomial, all monomials $h_k \cdot \iota_k(f_k)$ different than $f_{i+1}$ 
necessarily
cancel out, and moreover $f_{i+1}$ has to be equal to $h_k \cdot \iota_k(f_k)$ for some $k\leq i$. 
%Thus
%the set $I = \setof{i<j}{r_i \neq 0}$ is clearly nonempty and
%all monomials $h_i \cdot \iota_i(f_i)$, for $i\in I$, are necessarily equal.
%Then $f_{j} = (r_1 + \ldots + r_{j-1}) \cdot h_i \cdot \iota_i(f_i)$ for any $i\in I$, and hence
In other words, we get a dominating pair $f_k \wpo f_{i+1}$,
as required.
\qed
% \end{proof}

% !TEX root = equivariant_ideals.tex

\section{Application: Orbit-finitely generated vector spaces} 
\label{sec:lin}

We use Theorem \ref{thm:equiv noeth} to give a simple proof of 
a generalisation of the key result of \cite{BKM21}:
orbit-finitely generated vector spaces have finite chains of equivariant subspaces.
% (Theorem \ref{thm:lin}).
In the following, $\R$ is a field and $\WW = (\W, \ldots)$ an $\omega$-well structured 
relational structure.

\para{Orbit-finitely generated vector spaces}

For $d\in\N$, consider the vector space $\lin {\otu \W d}$ freely generated by the set $\otu \W d$
of $d$-tuples of elements of $\W$.
Thus $\lin {\otu \W d}$ consists of finite formal sums 
\begin{align} \label{eq:vectors}
r_1 \cdot t_1 + \ldots + r_m \cdot t_m,
\end{align}
where $r_1, \ldots, r_m \in \R$ and $t_1, \ldots, t_m \in \otu \W d$.
Equivalently, $\lin{\otu \W d}$ may be defined as the set of all functions $v : \otu \W d \to \R$
such that $v(t) \neq 0$ for finitely many $t\in\otu\W d$.
Addition of vectors \eqref{eq:vectors} and multiplication by a scalar $r\in\R$ are then defined 
pointwise, as expected.
Embeddings $\iota\in\E$ act 
pointwise on tuples $t = (\w_1, \ldots, \w_d)$, namely
$\iota(t) = (\iota(\w_1), \ldots$, $\iota(\w_d))$, and also pointwise on vectors:
%by renaming on vectors, similarly as in case of polynomials:
\[
\iota(r_1 \cdot t_1 + \ldots + r_m \cdot t_m) \ = \ 
r_1 \cdot \iota(t_1) + \ldots + r_m \cdot \iota(t_m).
\]
A subset of $\lin{\otu \W d}$ is called \emph{equivariant} if it is invariant under 
action of $\E$.

\begin{slexample}
When $d = 2$, generators are ordered pairs $(\w, \w')\in\otu \W 2$.
A vector in $\lin {\otu \W 2}$ can be presented as a finite directed graph, 
possibly containing self-loops,
whose nodes are elements $\w\in\W$ and edges
are labelled by elements of the field $\R$.
Addition of vectors amounts to merging two such graphs, and adding labels of common edges.
Multiplication by a scalar $r\in\R$ amount to multiplying labels of all edges by $r$.
The set of those graphs where the sum of labels is 0 is an equivariant subspace of $\lin {\otu \W 2}$.
\end{slexample}

%\para{Encoding}
We encode vectors as polynomials as follows.
%Let $p_1, p_2, \ldots, p_d$ be any $d$ prime numbers.
A single generator $t = (\w_1, \w_2, \ldots, \w_d) \in \otu \W d$ is encoded as the monomial
\[
\enc t \ = \ {\w_1}^{1} \cdot {\w_2}^{2} \cdot \ldots \cdot {\w_d}^{2^d}.
\]
%i.e., degree of a variable $a_i$ is equal to its position in $t$.
As repetitions $\w_i = \w_j$ are allowed, the
degree of a variable $\w$ is thus equal to the sum of powers $2^i$, ranging over 
positions $i$ such that $\w_i = \w$.
This guarantees injectivity of $\enc{\_}$, namely different tuples $t$ have different encodings.
Then a vector $v = r_1 \cdot t_1 + \ldots + r_m \cdot t_m$ is encoded as the polynomial
\[
\enc v \ = \ r_1 \cdot \enc{t_1} + \ldots + r_m \cdot \enc{t_m},
\]
which yields a linear map $\enc \_ : \lin{\otu \W d} \to \polyring \R \WW$.

%\para{Orbit-finiteness of $\otu \WW d$}

Vector space $\lin{\otu \W d}$ is indeed \emph{orbit-finitely generated} once $\WW$ is assumed to be
$\omega$-well structured, due to the following lemma:
\begin{lemma}[oligomorphicity%
\footnote{
If we restrict to automorphisms, instead of all embeddings, this is a characterisation of
oligomorphic automophism groups \cite{Cameron}, and $\omega$-categorical structures \cite{H97}.
Furthermore, as mentioned in Section \ref{sec:prelim},
when $\WW$ is homogeneous,
action of embeddings is essentially the same as action of automorphisms.
}]
 \label{lem:of}
%Let $\WW = (\W, \ldots)$ be an $\omega$-well structured structure.
%For every $d\in\N$,
$\W^d$ is orbit-finite, namely 
$\W^d = \E(\Bs) \defeq \setof{\iota(t)}{\iota\in\E, t\in\Bs}$ for some  $\Bs\subseteqfin \otu \W d$.
\end{lemma}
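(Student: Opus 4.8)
The plan is to show that $\W^d$ has finitely many orbits under the action of $\E$, by reducing the orbit of a tuple $t=(\w_1,\dots,\w_d)$ to a finite amount of data: the ``pattern'' of $t$, i.e.\ which coordinates are equal and the isomorphism type of the induced substructure on $\set{\w_1,\dots,\w_d}$ together with the way the coordinates of $t$ list its elements. Concretely, I would proceed as follows.

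\textbf{Step 1 (reduce to finite substructures of $\WW$).} For a tuple $t=(\w_1,\dots,\w_d)$, let $\Vs_t = \set{\w_1,\dots,\w_d} \in \age \WW$, a finite induced substructure of size at most $d$. Observe that if $\iota\in\E$ then $\iota$ restricted to $\Vs_t$ is an isomorphism onto the induced substructure $\iota(\Vs_t)$, and $\iota(t)$ is obtained from $t$ by applying that isomorphism coordinatewise. Conversely, I claim that if $t,t'\in\W^d$ satisfy $\Vs_t \cong \Vs_{t'}$ via an isomorphism $h$ with $h(\w_i) = \w'_i$ for all $i$, then $t$ and $t'$ lie in the same orbit. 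This is the heart of the argument and is where I expect to use a Fra\"iss\'e/homogeneity-style extension property: an isomorphism between two finite induced substructures of $\WW$ extends to an embedding $\WW\to\WW$. The paper already remarks (in the ``orbit-finite bases'' paragraph) that for equality domains, rational order, universal tree, and homogeneous structures in general, automorphisms of finite induced substructures extend to the whole domain; I would invoke exactly this, and more generally the fact that $\WW$ being $\omega$-well structured (or at least the concrete domains of Remark~\ref{rem:appliesto}) supplies such an extension. Applying the extension to $h$ gives $\iota\in\E$ with $\iota(\w_i)=\w'_i$, hence $\iota(t)=t'$.

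\textbf{Step 2 (finitely many patterns).} It remains to bound the number of orbits. Each orbit is determined by (i) the equality pattern on $\set{1,\dots,d}$, i.e.\ the partition recording which $\w_i$ coincide — finitely many such partitions; and (ii) for a fixed pattern with $k\le d$ distinct values, the isomorphism type of the induced substructure on those $k$ values, which is a structure on a $k$-element set over the finitely many relations $r_1,\dots,r_n$ of $\WW$ — hence finitely many types since $k\le d$ and the signature is finite. So altogether there are finitely many orbits. Picking one representative tuple $t$ from each orbit (each representative using at most $d$ elements of $\W$, so $\Bs\subseteqfin\W^d$) yields $\W^d = \E(\Bs)$, as required. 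Since $d$ is fixed this is a finite set $\Bs$.

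\textbf{Main obstacle.} The only nontrivial point is Step~1's extension property: that an isomorphism between finite induced substructures of $\WW$ extends to an embedding of $\WW$ into itself. For the homogeneous domains in Remark~\ref{rem:appliesto} this is immediate from homogeneity; for general $\omega$-well structured $\WW$ one must argue that $\omega$-well structure (which bounds $\age \WW$ in a wqo sense) still provides enough amalgamation/extension to realise every finite partial isomorphism by a global embedding. I would either restrict the statement to the homogeneous domains actually used in Sections~\ref{sec:lin}--\ref{sec:eq}, or note that $\omega$-well structure of $\WW$ forces $\age \WW$ to have finitely many isomorphism types in each size (which is all Step~2 needs) together with the requisite extension property. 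Everything else — the partition count, the type count, coordinatewise action of embeddings — is routine bookkeeping.
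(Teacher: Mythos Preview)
There is a genuine gap, and you have correctly located it but not closed it. Your Step~1 extension claim --- that a coordinate-respecting isomorphism between the induced substructures on $\{\w_1,\dots,\w_d\}$ and $\{\w'_1,\dots,\w'_d\}$ extends to an embedding of $\WW$ --- fails already for $\omega=(\N,\leq)$: the map $\{5,10\}\to\{0,1\}$ sending $5\mapsto 0$, $10\mapsto 1$ is an isomorphism of finite induced substructures but extends to no embedding, since every embedding of $\omega$ satisfies $\iota(n)\geq n$. More fundamentally, because $\E$ is only a monoid, the sets $\E(t)$ are not equivalence classes, so ``$t$ and $t'$ lie in the same orbit'' is not a well-defined symmetric relation; orbit-finiteness asks merely that finitely many sets $\E(b)$ \emph{cover} $\W^d$, and your Step~2 count of isomorphism types does not by itself tell you which representatives $b$ make this happen. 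For $\omega$ one can rescue the argument by always choosing representatives on $\{0,\dots,k-1\}$, but $\omega$-well structure alone supplies no such minimal elements in general. Your proposed fixes in the final paragraph do not work: restricting to homogeneous $\WW$ excludes all ordinals, and $\omega$-well structure does \emph{not} imply the extension property, as $\omega$ itself witnesses.

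The paper's proof bypasses extension entirely by using the \wqo directly. It encodes each $t\in\W^d$ as a monomial $\enc t$ in which the exponent of $\w$ records (via powers of $2$) the set of coordinates where $\w$ occurs; crucially, all encodings have the \emph{same} total degree $D$. Assuming $\W^d$ orbit-infinite, one picks $t_1,t_2,\ldots$ with $t_j\notin\E(\{t_1,\ldots,t_{j-1}\})$. Since $(\monomials\WW,\wpo)$ is a \wqo (this is exactly what $\omega$-well structure provides), some pair satisfies $\iota(\enc{t_i})\divorder\enc{t_j}$ for an embedding $\iota\in\E$; equal total degree forces $\iota(\enc{t_i})=\enc{t_j}$, and injectivity of the encoding then gives $\iota(t_i)=t_j$, a contradiction. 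The point is that the relation $\wpo$ already has the existential quantifier over $\E$ built in, so one never needs to promote an abstract isomorphism to a global embedding.
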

Any such finite set $\Bs$ we call a \emph{presentation} of  $\W^d$.
\begin{proof}
Suppose $\otu \W d$ is orbit-infinite.
In such case
there is an infinite sequence $t_1, t_2, \ldots \in \otu \W d$ such that
$\otu \W d = \E(\set{t_1,t_2, \ldots})$ and
$(*)$ there is no $i,j\in\N$ and embedding $\iota\in\E$ such that $\iota(t_i) = t_j$.
We know that $(\monomials \R, \wpo)$ is a \wqo and hence
the infinite sequence of monomials
\[
\enc {t_1}, \ \enc{t_2}, \ \ldots
\]
in $\monomials \R$
admits a domination, namely $\iota(\enc{t_i}) \divorder \enc{t_j}$ for some $i<j$ and $\iota \in \E$.
A crucial fact is that all monomials used in encodings 
have the same degree $D=1 + 2 + \ldots + 2^d = 2^{d+1}-1$,
understood as the sum of degrees of all variables.
Therefore $\iota(\enc{t_i}) \divorder \enc{t_j}$ implies $\iota(\enc{t_i}) = \enc{t_j}$.
The encoding % is equivariant, i.e., it 
commutes with action of embeddings:
$\iota(\enc t) = \enc{\iota(t)}$.
This implies $\enc{\iota(t_i)} = \enc{t_j}$, and by injectivity of $\enc \_$ also
$\iota(t_i) = t_j$, thus contradicting $(*)$.
\end{proof}

\para{Finite length}

Theorem \ref{thm:lin} generalises \cite[Lemma IV.8]{BKM21} which only considers
$\WW = \QQ = (\Q, \leq_\QQ)$. 
On the other hand, \cite{BKM21} provides explicite upper bound on the length.
% of chains of equivariant subspaces.
%
\begin{theorem}[Finite length] 
\label{thm:lin}
Let $\R$ be any field.
If $\WW$ is a reduct of an $\omega$-well structured and totally ordered structure then
%For every totally ordered, $\omega$-well structured structure $\WW$, and a  field $\R$,
strictly increasing chains of equivariant subspaces of $\lin {\otu \WW d}$ are finite.
\end{theorem}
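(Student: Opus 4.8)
The plan is to transfer the statement to equivariant ideals of polynomials through the encoding $\enc{\_} : \lin{\otu \W d} \to \polyring \R \WW$, and then apply Theorem~\ref{thm:equiv noeth}. First, since $\WW$ is a reduct of an $\omega$-well structured, totally ordered structure $\WW^+$, Theorem~\ref{thm:equiv noeth} yields that $\polyring \R {\WW^+}$ is equivariantly Noetherian, and Lemma~\ref{lem:equiv noeth} transfers this to $\polyring \R \WW$. So it suffices to embed, in an order-reflecting way, the poset of equivariant subspaces of $\lin{\otu \W d}$ into the poset of equivariant ideals of $\polyring \R \WW$: any strictly increasing chain of equivariant ideals stabilises by equivariant Noetherianity (its union has a finite basis, which already sits in one member of the chain).

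To this end I would assign to each equivariant subspace $V \subseteq \lin{\otu \W d}$ the equivariant ideal $I_V \defeq \eqidealgen{\enc V}$, where $\enc V = \setof{\enc v}{v \in V}$; this set is equivariant because $\enc{\_}$ commutes with the action of $\E$, so $I_V$ is indeed an equivariant ideal, and $V \mapsto I_V$ is obviously monotone. The heart of the argument is to show it is \emph{injective}: once this is established, a strictly increasing chain $V_0 \subsetneq V_1 \subsetneq \cdots$ of equivariant subspaces produces a strictly increasing chain $I_{V_0} \subsetneq I_{V_1} \subsetneq \cdots$ of equivariant ideals, which must be finite, hence the original chain is finite.

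For injectivity I would recover $V$ from $I_V$ by reading off a single homogeneous component. Recall (as in Lemma~\ref{lem:of}) that every generator encoding $\enc t$ has the same total degree $D = 2^{d+1}-1$, so $\enc v$ is homogeneous of degree $D$ for every $v \in \lin{\otu \W d}$. I claim the set of degree-$D$ homogeneous components of the polynomials in $I_V$ is exactly $\enc V$. For the inclusion into $\enc V$, write an arbitrary $f \in I_V$ as $f = \sum_i h_i \cdot \iota_i(\enc{v_i})$ with $v_i \in V$, $\iota_i \in \E$, $h_i \in \polyring \R \WW$; since $\iota_i(\enc{v_i}) = \enc{\iota_i(v_i)}$ is homogeneous of degree $D$, the degree-$D$ component of $h_i \cdot \iota_i(\enc{v_i})$ equals $c_i \cdot \enc{\iota_i(v_i)}$, where $c_i \in \R$ is the constant term of $h_i$; summing and using linearity of $\enc{\_}$, the degree-$D$ component of $f$ is $\enc{w}$ with $w = \sum_i c_i\,\iota_i(v_i)$, and $w \in V$ because $V$ is an equivariant subspace. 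The reverse inclusion is immediate, since each $\enc v$ with $v \in V$ lies in $I_V$ and is its own degree-$D$ component. Thus $I_V$ determines $\enc V$, and hence, by injectivity of $\enc{\_}$, determines $V$.

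The only genuinely new point — and the place where care is needed — is this degree bookkeeping: one must check that multiplying a degree-$D$ homogeneous polynomial by a non-constant polynomial contributes nothing in degree $D$, and that the degree-$D$ component of any element of $I_V$ stays supported on encoding monomials (which holds because each $\iota_i(\enc{v_i})$ does). Everything else is a routine invocation of Theorem~\ref{thm:equiv noeth} via Lemma~\ref{lem:equiv noeth}.
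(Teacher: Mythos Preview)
Your proposal is correct and follows essentially the same route as the paper: map each equivariant subspace $V$ to the equivariant ideal generated by $\enc V$, invoke equivariant Noetherianity of $\polyring \R \WW$ (via Theorem~\ref{thm:equiv noeth} and Lemma~\ref{lem:equiv noeth}), and use the fact that all encoding monomials share the same total degree $D$ to recover $V$ from its ideal. The paper phrases the recovery step as a cancellation argument (terms with $\deg h_i>0$ contribute only in degree ${>}D$ and hence cancel), while you phrase it as extracting the degree-$D$ homogeneous component; these are the same computation.
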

\begin{proof}%[Proof of Theorem \ref{thm:lin}]
Consider any increasing chain of equivariant subspaces
\[
\VS_1 \subseteq 
\VS_2 \subseteq \ldots 
\]
and let $\I_i$ be the ideal in $\polyring \R \WW$ generated
(in the classical sense) by $\enc{\VS_i}$, the encodings
of vectors from $\VS_i$.
As action of embeddings commutes with encoding,
equivariance of subspaces $\VS_i$ implies equivariance of the sets $\enc{\VS_i}$.
As action of embeddings commutes with addition and multiplication of polynomials, 
ideals $\I_i$ are equivariant.
By Theorem \ref{thm:equiv noeth} and Lemma \ref{lem:equiv noeth}, the chain 
$
\I_1 \subseteq \I_2 \subseteq \ldots
$
stabilises, namely, $\I_i = \I_j$ for all sufficiently large $i,j$.
We claim:
\begin{claim}
$\I_i = \I_j$ implies $\enc{\VS_i} = \enc{\VS_j}$.
\end{claim}
\noindent
Having the claim we immediately complete the proof, as 
$\enc{\VS_i} = \enc{\VS_j}$ implies $\VS_i = \VS_j$, by injectivity of $\enc{\_}$.

For demonstrating the claim it is sufficient to prove the inclusion
$\enc{\VS_i} \supseteq \enc{\VS_j}$. 
Given an arbitrary polynomial $f \in \enc{\VS_j}$, we thus prove $f\in\enc{\VS_i}$.
As $\I_i = \I_j$, we have $f\in \I_i$, i.e.,
\begin{align} \label{eq:f}
f \ = \ r_1 \cdot h_1 \cdot \enc{v_1} + \ldots + r_n \cdot h_n \cdot \enc{v_n}
\end{align}
for some vectors $v_1, \ldots, v_n \in \VS_i$, monomials
$h_1, \ldots, h_n \in \monomials \R$ and ring elements $r_1, \ldots, r_n \in \R$.
As all monomials used in encodings have the same degree $D$,
all monomials $g$ appearing in
$\enc{v_1}, \ldots, \enc{v_n}$ have degree $\degmon g = D$, as well as all monomials appearing
in $f$.
This means that all the monomials of degree larger than $D$, 
obtained as a multiplication $h_i \cdot g$, forcedly cancel out in the final sum
\eqref{eq:f}.
Formally ($\zeropol$ is the zero polynomial):
\[
\sum_{i\in I} r_i \cdot h_i \cdot \enc{v_i} \ = \ \zeropol,
\]
where the sum ranges over those $i$ where $h_i$ has positive degree:
$
I = \setof{i\in\setfromto 1 n}{\degmon {h_i} > 0}.
$
All polynomials $r_i \cdot h_i \cdot \enc{v_i}$, for $i\in I$, can be thus safely removed.
As $\degmon {h_i} = 0$ exactly when $h_i$ is a unit monomial, we end up with
\[
f \ = \ \sum_{i\notin I} r_i \cdot \enc{v_i} \ = \ \enc{\sum_{i\notin I} r_i \cdot v_i},
\]
the latter equality following from linearity of $\enc \_$.
Therefore $f \in  \enc{\VS_i}$, as required.
\end{proof}

\para{Weighted Register Automata} \label{sec:zeroness}
Using Theorem \ref{thm:lin}
we obtain a generalisation
of another result of \cite{BKM21}:
decidability of zeroness of weighted register automata.
%(Theorem \ref{thm:zeroness}).
%
%Moreover, we assume that $\WW$ admits computability of the condition of
%Claim \ref{claim:zeroness} in the proof of Theorem \ref{thm:zeroness}.
%For simplicity we refrain from formalising this assumption.
%All $\omega$-well structured domains mentioned so far satisfy this criterion.

%is \emph{computably oligomorphic},
%i.e., its elements are finitely represented, and given an $n\in\N$, one can \emph{compute}
%a presentation $\Bs_n$ of $\otu \W n$ (c.f.~Lemma \ref{lem:of}).
%All $\omega$-well structured domains mentioned so far are computably oligomorphic.

For $d\in\N$, let $Q_d = \otu \W d \cup\{\bot\}$.
The auxiliary element $\bot$ is assumed to be equivariant, namely 
$\iota(\bot) = \bot$ for all $\iota \in \E$.
The vector space $\lin{Q_d}$ is defined like $\lin{\otu \W d}$, 
as the set of finite formal sums of elements of $Q_d$.

Classically, transition function of a weighted automaton with finite state-space $Q$
maps every input symbol and state to a vector over states:
$\Sigma\times Q\to \lin Q$.
Our definition of weighted $d$-register automaton 
(equivalent to the definition of \cite{BKM21}) is along the same lines,
except that input alphabet is $\Sigma=\W$,
state-space is orbit-finite $Q = Q_d$, 
and transition function is equivariant.

A weighted $d$-register automaton consists of a \emph{transition} function
$\delta : \W\times Q_d\to\lin {Q_d}$ and a \emph{final} function
$F : Q_d\to\R$, both equivariant, namely for every $\w\in\W$, $q\in Q_d$,  
and $\iota\in\E$:
\[
\delta(\iota(\w), \iota(q)) \ = \ \iota(\delta(\w, q))
\qquad
F(\iota(\w)) \ = \ F(\w).
\]
$F$ extends uniquely to a linear map $F : \lin {Q_d}\to \R$.
Likewise, for every $\w\in\W$, the function $\delta(\w, \_) : Q_d\to\lin {Q_d}$ extends 
uniquely to a linear map $\widetilde\delta(\w) : \lin {Q_d}\to \lin {Q_d}$.
%It is convenient to treat
%the transition function as a function that maps every input symbol $\w\in\Sigma$ to an infinite 
%$\W^k_\bot\times\W^k_\bot$ matrix $M(\w)$,
%where the row indexed by $q$ is $\delta(q,a)$.
We overload $\bot$ and use it for vector defined by 
the singleton formal sum $\bot\in\lin{Q_d}$, the \emph{initial} vector.
The semantics of a weighted register automaton 
$(d, \delta, F)$
maps every word $w = \w_1 \ldots \w_\ell\in\W^*$ to the \emph{final} vector
\[
\widetilde\delta(w) \ = \ \widetilde\delta(\w_\ell) \circ \ldots \circ \widetilde\delta(\w_1)(\bot) \ \in \ \lin {Q_d},
\]
on which $F$ is evaluated to get the \emph{value} $F(\widetilde\delta(w))\in\R$ 
assigned by the automaton to $w$.
By equivariance of $\bot$, $\delta$ and $F$, the function
$\widetilde\delta$ is equivariant and
the value is stable under the 
pointwise action of $\E$ on words,
$\iota(a_1 \ldots a_\ell) = \iota(a_1)\ldots \iota(a_\ell)$.
\begin{lemma} \label{lem:value}
$F(\widetilde\delta(\iota(w))) = F(\widetilde\delta(w))$ for every $\iota \in \E$.
\end{lemma}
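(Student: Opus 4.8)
The plan is to derive this lemma directly from the equivariance of the data defining the automaton, namely that $\widetilde\delta$ and $F$ commute with the action of $\E$. The statement $F(\widetilde\delta(\iota(w))) = F(\widetilde\delta(w))$ is really two composable facts: first that $\widetilde\delta(\iota(w)) = \iota(\widetilde\delta(w))$ as vectors in $\lin{Q_d}$, and second that $F(\iota(v)) = F(v)$ for every vector $v\in\lin{Q_d}$, in particular for $v = \widetilde\delta(w)$.

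For the first fact I would proceed by induction on the length $\ell$ of $w = \w_1\ldots\w_\ell$. The base case $\ell = 0$ holds because $\widetilde\delta(\varepsilon) = \bot$ and $\bot$ is equivariant by assumption, so $\iota(\bot) = \bot = \widetilde\delta(\iota(\varepsilon))$. For the inductive step, write $w = w'\w_\ell$ and use that $\widetilde\delta(w) = \widetilde\delta(\w_\ell)(\widetilde\delta(w'))$; applying equivariance of $\widetilde\delta$ (established just before this lemma from equivariance of $\delta$) gives
\[
\widetilde\delta(\iota(w)) \ = \ \widetilde\delta(\iota(\w_\ell))\big(\widetilde\delta(\iota(w'))\big) \ = \ \widetilde\delta(\iota(\w_\ell))\big(\iota(\widetilde\delta(w'))\big) \ = \ \iota\big(\widetilde\delta(\w_\ell)(\widetilde\delta(w'))\big) \ = \ \iota(\widetilde\delta(w)),
\]
where the second equality is the induction hypothesis and the third is equivariance of the linear map $\widetilde\delta(\w_\ell)$ together with the fact that $\iota(\w_\ell)$ is the renamed symbol. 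The second fact, $F(\iota(v)) = F(v)$, follows from $F$-equivariance on the basis $Q_d$ (which says $F(\iota(q)) = F(q)$ for all $q\in Q_d$, noting $\iota(\bot) = \bot$) extended by linearity: writing $v = \sum_k r_k q_k$ we get $F(\iota(v)) = F(\sum_k r_k \iota(q_k)) = \sum_k r_k F(\iota(q_k)) = \sum_k r_k F(q_k) = F(v)$.

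Combining the two facts: $F(\widetilde\delta(\iota(w))) = F(\iota(\widetilde\delta(w))) = F(\widetilde\delta(w))$, which is exactly the claim. I do not expect any real obstacle here; this is a routine unwinding of definitions, and in fact the surrounding text already remarks that "the function $\widetilde\delta$ is equivariant and the value is stable under the pointwise action of $\E$ on words", so the lemma is essentially a restatement of an observation already made. The only mild care needed is to be explicit that extending an equivariant map on a basis to a linear map preserves equivariance — which is immediate since $\iota$ acts linearly on vectors — and to keep the induction on word length clean.
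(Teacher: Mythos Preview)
Your proposal is correct and follows exactly the reasoning the paper indicates: the paper does not give a formal proof of this lemma at all, only the sentence immediately preceding it (``By equivariance of $\bot$, $\delta$ and $F$, the function $\widetilde\delta$ is equivariant and the value is stable under the pointwise action of $\E$ on words''), and your argument is precisely the routine unwinding of that sentence into the two facts $\widetilde\delta(\iota(w)) = \iota(\widetilde\delta(w))$ and $F(\iota(v)) = F(v)$.
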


The \emph{zeroness} problem asks, given a weighted register automaton $(d, \delta, F)$,
whether it is \emph{zeroing}, i.e., whether it assigns value 0 to all input words.
When an automaton is input to an algorithm, 
each of the functions $\delta$, $F$ is given by its values on a
(naturally defined) presentation of its domain.
We prove that zeroness is characterised by a finite testing set:
\begin{theorem}\label{thm:zeroness}
Let $\R$ be any field and
$\WW$ be a reduct of an $\omega$-well structured and totally ordered structure.
For every weighted register automaton there is a finite set $T\subseteq \W^*$ 
such that the automaton is zeroing if and only if
it assigs value 0 to all words in $T$.
%$F(\widetilde\delta(w)) = 0$ for all $w\in T$.
\end{theorem}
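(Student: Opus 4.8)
The plan is to reduce zeroness of a weighted register automaton to a stabilisation argument about an increasing chain of equivariant subspaces, and then invoke Theorem~\ref{thm:lin}. First I would consider, for each length $\ell \in \N$, the subspace $\VS_\ell \subseteq \lin{Q_d}$ spanned by all reachable final vectors up to length $\ell$, namely the linear span of $\setof{\widetilde\delta(w)}{w\in\W^{\leq \ell}}$. Since $\bot$ is equivariant and $\widetilde\delta$ is equivariant by Lemma~\ref{lem:value}'s underlying observation, and since the span of an equivariant set is equivariant (the action of $\E$ commutes with linear combinations), each $\VS_\ell$ is an equivariant subspace, and clearly $\VS_0 \subseteq \VS_1 \subseteq \ldots$. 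By Theorem~\ref{thm:lin} this chain stabilises: there is some $N$ with $\VS_N = \VS_{N+1} = \ldots$.

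Next I would argue that stabilisation at level $N$ implies $\VS_N$ contains \emph{all} reachable final vectors $\widetilde\delta(w)$, for words $w$ of arbitrary length. This is the standard reachability-saturation step: if $\widetilde\delta(w) \in \VS_N$ for all $w$ of length $\leq \ell$, then any word of length $\ell+1$ is $w' = w\w$ with $|w| = \ell$, so $\widetilde\delta(w') = \widetilde\delta(\w)(\widetilde\delta(w))$; writing $\widetilde\delta(w)$ as a linear combination of vectors $\widetilde\delta(u)$ with $|u|\leq N$ and applying linearity of $\widetilde\delta(\w)$, we get $\widetilde\delta(w')$ as a linear combination of vectors $\widetilde\delta(\w u) = \widetilde\delta(\w u)$ with $|\w u| \leq N+1$, hence in $\VS_{N+1} = \VS_N$. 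Induction then shows all reachable final vectors lie in $\VS_N$. Consequently, the automaton is zeroing if and only if $F$ vanishes on the span of $\setof{\widetilde\delta(w)}{w\in\W^{\leq N}}$, equivalently on every $\widetilde\delta(w)$ with $|w|\leq N$; so $T = \W^{\leq N}$ would already be a testing set, except that it is infinite.

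To cut $T$ down to a finite set I would use equivariance together with oligomorphicity. By Lemma~\ref{lem:value}, $F(\widetilde\delta(\iota(w))) = F(\widetilde\delta(w))$ for every $\iota\in\E$, so the value assigned to a word depends only on its orbit under the pointwise action of $\E$ on $\W^*$. Since $\W^{\leq N} = \bigcup_{\ell\leq N} \W^{\ell}$ and each $\W^{\ell}$ is orbit-finite by Lemma~\ref{lem:of} (oligomorphicity), the set $\W^{\leq N}$ decomposes into finitely many orbits; picking one representative word from each orbit yields a finite $T\subseteq\W^*$ with the property that the automaton is zeroing iff it assigns value $0$ to every word in $T$. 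This establishes the theorem.

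The main obstacle is the saturation step linking stabilisation of the \emph{span} chain $(\VS_\ell)$ to the claim that $\VS_N$ absorbs all future reachable vectors: one must be careful that $\VS_\ell$ is defined as the span of reachable vectors of length \emph{at most} $\ell$ (not exactly $\ell$), and that closure under a single transition step $\widetilde\delta(\w)$ maps $\VS_\ell$ into $\VS_{\ell+1}$ — this is where linearity of $\widetilde\delta(\w)$ is essential. A secondary point requiring care is confirming that each $\VS_\ell$ is genuinely equivariant: this needs that for $\iota\in\E$ and a word $w$ one has $\iota(\widetilde\delta(w)) = \widetilde\delta(\iota(w))$, which follows from equivariance of $\bot$ and $\delta$ by induction on $|w|$, and that $\iota(\W^{\leq\ell}) = \W^{\leq\ell}$ as a set (true since embeddings act on words letterwise and preserve length). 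Everything else is routine linear algebra plus the already-established machinery.
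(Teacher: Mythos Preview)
Your argument is essentially the same as the paper's: define the chain $\VS_\ell$ of spans of reachable vectors, use equivariance and the finite-length theorem to stabilise, then cut down to finitely many orbit representatives via Lemma~\ref{lem:of} and Lemma~\ref{lem:value}. One technical point you glide over is that Theorem~\ref{thm:lin} is stated for $\lin{\otu \W d}$ while your chain lives in $\lin{Q_d} = \lin{\otu \W d \cup \{\bot\}}$; the paper handles this via a short separate claim (essentially $\lin{Q_d}\cong\lin{\otu\W d}\oplus\R$), and you should do likewise. Also, your remark that $\iota(\W^{\leq\ell}) = \W^{\leq\ell}$ is too strong---embeddings need not be surjective---but only the inclusion $\subseteq$ is needed, and that is what you actually use.
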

%
%
%\begin{theorem}\label{thm:zeroness}
%Zeroness of weighted automata over $\WW$ is decidable,
%if $\WW$ is a % computably oligomorphic 
%reduct of an $\omega$-well structured and totally ordered structure.
%\end{theorem}
%
\begin{proof}%[Proof of Theorem \ref{thm:zeroness}]
%\label{proof:thm:zeroness}
Consider a weighted register automaton $(d, \delta, F)$.
%Since $\WW$ is computably oligomorphic,
%for every $n\in\N$ we can compute $X_n \subseteqfin \W^n$ such that $\W^n = \E(X_n)$.
%Let $X_{\leq n} = X_0 \cup \dots \cup X_n$.
%
For $i \in \N$, let $\otu \W {\leq i} = \otu \W 0 \cup \ldots \cup \otu \W i$, and
$\VS_i\subseteq\lin{Q_d}$ be the vector space spanned by 
$V_i = \widetilde\delta(\otu \W {\leq i}) \defeq \setof{\widetilde\delta(w)}{w\in\otu \W {\leq i}}$.
Let $\Bs_i$ be a presentation of $\otu \W i$, and let
$\Bs_{\leq i} = \Bs_0 \cup \ldots \cup \Bs_i$.
Therefore
\begin{align} \label{eq:VVi}
\otu \W {\leq i} = \E(\Bs_{\leq i}), 
\end{align}
and
by equivariance of $\widetilde\delta$ we have:
\begin{align} \label{eq:Vi}
V_i = \E(\widetilde\delta(\Bs_{\leq i})).
\end{align}
Thus  the sets $V_i$ are equivariant,  and so are the subspaces $\VS_i$.
As an immediate consequence of Theorem \ref{thm:lin} we obtain:
\begin{claimrep} 
\label{claim:linbot}
Strictly increasing chains of equivariant subspaces of $\lin {Q_d}$ are finite.
\end{claimrep}
\begin{proof}
%Every equivariant subspace $\VS\subseteq \lin{Q_d}$ is isomorphic, as a vector space, to 
%$\VS'$ or $\VS' \times\R$, where
%$\VS'\subseteq\lin{\otu \W d}$ is the projection of $\VS$ to $\lin{\otu \W d}$.
%
We claim that subspaces of $\lin{Q_d}$ split into two types defined below.
Consider an equivariant subspace $\VS\subseteq \lin{Q_d}$ and
its projection $\VS'\subseteq\lin{\otu \W d}$ to $\lin{\otu \W d}$.
If $\VS$ contains a vector $r\cdot\bot$ for some $r\neq 0$ then 
$\VS$ contains vectors $r\cdot\bot$ for all $r\in\R$ and therefore
$\VS = \VS'\times\R$.
Otherwise, every vector $v' \in \VS'$ extends uniquely to a vector in $\VS$.
%and therefore $\VS$ is isomorphic to $\VS'$.
In both cases, whenever two subspaces $\VS_1, \VS_2\subseteq \lin{Q_d}$ are related by strict inclusion,
their projections $\VS'_1, \VS'_2$ to $\lin{\otu \W d}$ also are:
\begin{align} \label{eq:VSiff}
\VS_1\subset\VS_2 \implies \VS'_1\subset\VS'_2.
\end{align}
Furthermore, a subspace of the first type is never included in a subspace of the second type.
Therefore every strictly increasing chain $\VS_1 \subset \VS_2 \subset \ldots$ 
of equivariant subspaces of $\lin{Q_d}$ splits into two subchains:
a chain of subspaces of the second type, followed by chain of subspaces of the first type.
By \eqref{eq:VSiff} each of the subchains corresponds to a strictly increasing chain
of subspaces of $\lin{\otu \W d}$, and hence is finite by Theorem \ref{thm:lin}.
In consequence, the whole chain is necessarily finite.
\end{proof}
By the claim, the chain 
$
\VS_0  \subseteq  \VS_1  \subseteq \ldots,
$
stabilises at some $n$:
\begin{align} \label{eq:n}
\VS_n  =  \VS_{n+1}  =  \ldots.
\end{align}
Since the transformation $\VS_i \mapsto \VS_{i+1}$ does not depend on $i$, the
stabilisation \eqref{eq:n} is guaranteed once $\VS_n = \VS_{n+1}$.
Zeroness of $(d, \delta, F)$ is then equivalent to
$F(\widetilde\delta(w)) = 0$ for all $w\in\W^{\leq n}$.
Using \eqref{eq:VVi} and Lemma \ref{lem:value} we refine the characterisation:
\begin{claim} \label{claim:zeroness}
Zeroness of $(d, \delta, F)$ is equivalent to
$F(\widetilde\delta(w)) = 0$ for all $w\in\Bs_{\leq n}$.
\end{claim}
Hence $T = \Bs_{\leq n}$ satisfies 
Theorem \ref{thm:zeroness}, which is thus proved.
\end{proof}

\begin{remark} \label{rem:zeroness}
\rm
Under mild computability assumptions on $\R$ and $\WW$ (omitted here for simplicity, but satisfied
by all $\omega$-well structured domains mentioned so far),
this characterisation yields decidability of the zeroness problem.
This generalises \cite[Lemma VIII.1]{BKM21}, but
without providing complexity upper bound.
Indeed, a stabilisation point $n$ \eqref{eq:n} can be computed as follows:
enumerate in parallel, for all $n\in\N$, linear combinations of vectors from $V_n$,
and stop when 
\[
\widetilde\delta(\Bs_{\leq n+1}) \ \subseteq \ V_n.
\]
Indeed, %by equivariance of $M$ we have $\E(V_i) = V_i$, and by
by \eqref{eq:Vi}, this implies $V_{n+1} \subseteq V_n$, and hence
$\VS_{n+1}\subseteq\VS_n$.
Once a stabilisation point $n$ is computed, zeroness of an automaton is 
decided by computing the value assigned to all words in $\Bs_{\leq n}$.
\end{remark}

% !TEX root = equivariant_ideals.tex

\section{Computability assumptions} \label{sec:comp}

We now formulate computability assumptions on a field $\R$ and structure $\WW$ 
which are necessary in subsequent sections.

A field $\R$ is \emph{computable}, i.e., its elements are finitely represented,
and field operations can be computed using this representation.
So is, for instance, the rational field $\Qf$, as well as every finite field.

The rest of this section is devoted to spelling out computability assumptions on
an $\omega$-well structured structure $\WW = (\W, \ldots)$, in
Definition \ref{def:comp} below.
Given $\Vs\in\age\WW$,
any restriction $\restr\iota \Vs$ of an embedding to $\Vs$ 
we call \emph{local embedding} of $\Vs$.
It is legitimate to use local embeddings instead of embeddings, 
as their action on a polynomial depends only on restriction to its variables:
\begin{lemma} \label{lem:restr}
$\restr \iota {\vars f} = \restr {\iota'} {\vars f} \implies \iota(f) = \iota'(f)$.
\end{lemma}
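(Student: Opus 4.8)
The plan is to unwind the definitions and observe that the action of an embedding $\iota$ on a polynomial $f$ only ever inspects $\iota$ at the variables occurring in $f$. Recall that $\iota(f)$ is defined by replacing each variable $\w$ occurring in $f$ by $\iota(\w)$, extended additively over terms and monomials; nothing in this construction refers to the value of $\iota$ on any $\w \notin \vars f$. So the natural approach is a straightforward structural argument on the shape of a polynomial.

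Concretely, I would first reduce to monomials. Write $f = r_1 \cdot f_1 + \ldots + r_n \cdot f_n$ with $f_1, \ldots, f_n \in \monomials\WW$ distinct monomials; since $\iota(f) = r_1 \cdot \iota(f_1) + \ldots + r_n \cdot \iota(f_n)$ and likewise for $\iota'$, and since $\vars{f_j} \subseteq \vars f$ for each $j$, it suffices to prove the claim for a single monomial $f_j$. For a monomial $f_j(\w_1 \ldots \w_m) = \w_1^{k_1} \cdots \w_m^{k_m}$ with $\vars{f_j} = \set{\w_1, \ldots, \w_m}$, we have $\iota(f_j) = \iota(\w_1)^{k_1} \cdots \iota(\w_m)^{k_m}$, and this expression only mentions $\iota(\w_1), \ldots, \iota(\w_m)$, i.e.\ the restriction $\restr\iota{\vars{f_j}}$. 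From the hypothesis $\restr\iota{\vars f} = \restr{\iota'}{\vars f}$ and $\vars{f_j}\subseteq\vars f$ we get $\restr\iota{\vars{f_j}} = \restr{\iota'}{\vars{f_j}}$, hence $\iota(\w_i) = \iota'(\w_i)$ for all $i$, hence $\iota(f_j) = \iota'(f_j)$. Summing over $j$ with the coefficients $r_j$ gives $\iota(f) = \iota'(f)$.

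There is really no substantive obstacle here — the statement is essentially a sanity check that the action of embeddings on polynomials is local, which in turn is what legitimises the passage to local embeddings in the computability sections that follow. The only mild care needed is bookkeeping: making explicit that $\vars{f_j}\subseteq\vars f$ (immediate from $\vars f = \vars{f_1}\cup\ldots\cup\vars{f_n}$ as defined in the Preliminaries), and that renaming commutes with the formal sum and product structure, which is built into the definition of the action. So I expect the proof to be a short paragraph; if anything, the ``hard'' part is simply deciding how much of the trivial structural recursion to spell out versus leave to the reader.
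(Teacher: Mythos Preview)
Your proposal is correct; the paper itself states this lemma without proof, treating it as immediate from the definition of the action of embeddings on polynomials. Your structural unwinding is exactly the natural (and only reasonable) way to justify it.
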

\noindent
By $\embset \Vs$ we denote the set of all local embeddings of $\Vs$:
\[
\embset \Vs \ = \ \setof{\restr \iota \Vs}{\iota\in\E}.
\] 
For $\Vs, \Vs'\in\age \WW$, we consider the Cartesian product
\[
\embsetprod \Vs {\Vs'} \ \defeq \ (\embset \Vs) \times (\embset {\Vs'}).
\]
We extend to this set the action of embeddings:
for $\kappa \in \E$ and $(\pi,\pi') \in \embsetprod \Vs {\Vs'}$, we put
$\kappa(\pi,\pi') = (\kappa\circ\pi, \kappa\circ\pi')$. 
Clearly, $\embsetprod \Vs  {\Vs'}$ is closed under this action, i.e., it is equivariant.

\begin{lemma} \label{lem:embsetprod}
For $\Vs, \Vs'\in\age \WW$,  $\embsetprod \Vs  {\Vs'}$ is orbit-finite, namely
$\embsetprod \Vs  {\Vs'} = \E(F) \defeq \setof{(\kappa\circ\pi,\kappa\circ\pi')}{\kappa\in\E, (\pi, \pi')\in F}$ for some finite  $F\subseteq \embsetprod \Vs {\Vs'}$.
\end{lemma}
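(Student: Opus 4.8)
The plan is to prove Lemma~\ref{lem:embsetprod} by reducing orbit-finiteness of $\embsetprod \Vs {\Vs'}$ to orbit-finiteness of a set of tuples over $\W$, and then invoking oligomorphicity (Lemma~\ref{lem:of}). First I would fix enumerations $\Vs = \set{\v_1, \ldots, \v_k}$ and $\Vs' = \set{\v'_1, \ldots, \v'_m}$, so that a pair $(\pi, \pi') \in \embsetprod \Vs {\Vs'}$ is completely determined by the tuple of its values $(\pi(\v_1), \ldots, \pi(\v_k), \pi'(\v'_1), \ldots, \pi'(\v'_m)) \in \W^{k+m}$. This defines an injective map $\embsetprod \Vs {\Vs'} \to \W^{k+m}$ which, moreover, commutes with the action of $\E$: applying $\kappa$ to $(\pi,\pi')$ corresponds to applying $\kappa$ pointwise to the tuple. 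Let $S \subseteq \W^{k+m}$ be the image of this map; then $S$ is equivariant, being the image of an equivariant set under an equivariance-preserving injection.

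Next I would observe that $\W^{k+m}$ is orbit-finite by Lemma~\ref{lem:of} (with $d = k+m$), so it has a finite presentation $\Bs \subseteqfin \W^{k+m}$, i.e. $\W^{k+m} = \E(\Bs)$. The point is that an equivariant subset of an orbit-finite set is again orbit-finite: from $\Bs$ one extracts the finite subset $\Bs_S = \setof{t \in \Bs}{\text{the orbit of } t \text{ meets } S}$ — equivalently $\Bs_S$ is obtained by replacing each representative $t\in\Bs$ whose orbit is contained in $S$ (which holds whenever the orbit meets $S$, by equivariance of $S$) — and then $S = \E(\Bs_S \cap S)$. Actually it is cleaner to argue: for each of the finitely many orbits $O$ of $\W^{k+m}$, either $O \subseteq S$ or $O \cap S = \emptyset$ by equivariance of $S$; pick one representative from each orbit contained in $S$, collect these into a finite set $F_0 \subseteq S$, and then $S = \E(F_0)$. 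Pulling $F_0$ back through the injective correspondence yields a finite $F \subseteq \embsetprod \Vs {\Vs'}$ with $\embsetprod \Vs {\Vs'} = \E(F)$, which is exactly the claim.

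The main obstacle, such as it is, lies in making precise the step ``an equivariant subset of an orbit-finite set is orbit-finite'': one must check that the orbits of $\W^{k+m}$ under the embedding monoid $\E$ behave like orbits under a group, namely that an equivariant set is a union of orbits, so that it cannot cut an orbit in half. For a monoid action this requires a small argument: if $t' = \kappa(t)$ for some $\kappa \in \E$ and $t \in S$, then $t' \in S$ directly by equivariance; the only subtlety is whether every element in the ``orbit'' of a representative $t \in \Bs$ is of this form, which is true by the definition $\W^{k+m} = \E(\Bs)$. So the finite presentation $\Bs$ of $\W^{k+m}$ partitions it into finitely many sets $\E(\set{t})$ for $t \in \Bs$ (possibly overlapping, but that is harmless), and $S$ being equivariant is a union of some of them. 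This is routine but worth spelling out since the ambient action is by a monoid of embeddings rather than a group of automorphisms. Everything else — the bijection with tuples, commutation with the action — is bookkeeping.
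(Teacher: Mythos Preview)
Your overall strategy---inject $\embsetprod \Vs {\Vs'}$ equivariantly into $\W^{k+m}$ via the tuple of values, then argue that the image $S$ is orbit-finite as an equivariant subset, then pull back---is exactly the paper's approach, and the bookkeeping (injectivity, commutation with the $\E$-action, recovering a finite $F$ from a finite presentation of $S$) is correct.

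The gap is in your justification of the step ``an equivariant subset of an orbit-finite set is orbit-finite.'' You rightly flag this as the main obstacle and rightly note that monoid actions differ from group actions, but the argument you give does not work. You assert that an equivariant $S$ must be a union of forward orbits $\E(\set t)$ for $t$ ranging over the fixed presentation $\Bs$ of $\W^{k+m}$, equivalently that each such orbit is either contained in $S$ or disjoint from it. This fails already for $\WW=\omega$ and $k+m=1$: here $\W=\N$, the single representative $0$ has $\E(\set 0)=\N$, and $S=\setof{n}{n\geq 5}$ is equivariant (strictly increasing maps only push elements upward) yet properly cuts this orbit. Nothing in equivariance of $S$ forces $t\in S$ merely because $\E(\set t)$ meets $S$; for monoids the implication goes only one way. (And the general claim is genuinely false: there are monoid actions with a single-orbit ambient set and equivariant subsets that are not orbit-finite.)

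The statement you need is nonetheless true here, but its proof uses the $\omega$-well-structured hypothesis on $\WW$, not merely orbit-finiteness of $\W^{k+m}$. The paper handles it by observing that the \emph{proof} of Lemma~\ref{lem:of} (not just its statement) applies verbatim to any equivariant $S\subseteq\W^{k+m}$: were $S$ orbit-infinite, one could extract $t_1,t_2,\ldots\in S$ with $t_j\notin\E(\set{t_1,\ldots,t_{j-1}})$, encode them as equal-degree monomials, and derive a dominating pair from the \wqo $\wpo$, contradicting the choice of the $t_j$. Equivalently, the relation $t\preceq t'\iff\prettyexists{\iota\in\E}{\iota(t)=t'}$ on $\W^{k+m}$ is a \wqo, equivariant subsets are precisely its upward-closed sets, and those have finite bases by Lemma~\ref{lem:wqo}. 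Either formulation closes your gap.
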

Any such finite set $F$ we call a \emph{presentation} of  $\embsetprod {\Vs}{\Vs'}$.
\begin{proof}
%Given $\pi \in \restr\E \Vs$ and $\pi' \in \restr \E {\Vs'}$,
%we define their \emph{equality type} as:
%\[
%\setof{(\v,\v')\in\Vs\times\Vs'}{\pi(\v)=\pi'(v')}.
%\]
%Clearly, there are only finitely many equality types once $\Vs, \Vs'$ are fixed.
%Therefore, since orbit-finite sets are closed under finite sums, it is enough to prove orbit-finiteness
%of each subset of $E_t\subseteq\embsetprod \Vs  {\Vs'}$ containing all pairs of local embeddings having
%the same equality type $t$.
%
%Let's fix the equality type $t$  from now on.
%We observe that $E_t$ is in bijection with $\E(\Vs\cup\Vs')$.
%
%Let $n=\size{\Vs\cup\Vs'}$, and let $\v_1, \ldots, \v_n$ be
%any enumeration of elements of $\Vs\cup\Vs'$.
%
Let $n=\size \Vs$ and $n' = \size{\Vs'}$.
The set $\embsetprod \Vs {\Vs'}$ is essentially a subset of $\otu \WW {n+n'}$.
Formally, having fixed enumerations $\Vs = \set{\v_1, \ldots, \v_n}$,
$\Vs' = \set{\v'_1, \ldots, \v'_{n'}}$, we define an injective mapping
$
m : \embsetprod \Vs {\Vs'} \to \otu \WW {n+n'}
$ as expected:
\[
(\pi, \pi') \ \mapsto \ (\pi(\v_1), \ldots, \pi(\v_n), \pi'(\v'_1), \ldots, \pi'(\v'_{n'})).
\]
The mapping is equivariant, namely commutes with the action of embeddings:
$
m(\iota(\pi, \pi')) \ = \ \iota(m(\pi, \pi')) 
$
for every $\iota\in\E$.
Therefore, equivariance of $\embsetprod \Vs {\Vs'}$ implies equivariance of
$m(\embsetprod \Vs {\Vs'})$. 
By Lemma \ref{lem:of}, the set $\otu \WW {n+n'}$ is orbit-finite.
Importantly, the same proof shows that \emph{every} equivariant subset of $\otu \WW {n+n'}$ is orbit-finite.
Therefore, $m(\embsetprod \Vs {\Vs'}) = \E(F)$ for some finite set $F$.
We use injectivity and equivariance of $m$ to deduce that $m^{-1}(F)$ is a finite presentation of
$\embsetprod \Vs {\Vs'}$.
\end{proof}
%
%The quasi order $\detorder$ defined by $\VV\detorder \VV'$ if $\embset \VV {\VV'} \neq \emptyset$,
%is a \wqo  on $\age \WW$ as it is essentially the quasi order
%determined by the trivial singleton \wqo.
%For $\VV, \VV'\in\age \WW$, let $\minupperb \VV {\VV'}$ denote  the set
%of all upper bounds with respect to $\detorder$:
%\[
%\minupperb \VV {\VV'} \ = \ \setof{\UU\in\age \WW}{\VV\detorder\UU, \ \VV'\detorder\UU}.
%\]
%This set is upward-closed with respect to $\detorder$ and, since $\detorder$ is a \wqo,
%has a finite basis (according to Lemma \ref{lem:wqo}).
%

\begin{slexample}\label{eg:pair of emb}
For illustration,
consider the structure ${\omega = (\N, \leq)}$, 
whose embeddings are strictly monotonic maps $\omega\to\omega$, and 
$\Vs=\set{2}$ and $\Vs'=\set{1,3}$.
A presentation of $\embsetprod \Vs {\Vs'}$ is given by $F$ containing five pairs
$(\pi,\pi')$ of local embeddings, corresponding to five cases: 
$\pi(2)$ is smaller
than $\pi'(1)$, or equal to $\pi'(1)$, or between $\pi'(1)$ and $\pi'(3)$, or equal to $\pi'(3)$,
or larger than $\pi'(3)$:
\begin{align*}
%\embsetprod \VV {\VV'} \ = \ \ & \setpref{
\pi_1 : 2 {\mapsto} 2 & \quad \pi'_1 : (1,3){\mapsto}(3,5) & \quad 
\pi_4 : 2 {\mapsto} 3 & \quad \pi'_4 : (1,3){\mapsto}(1, 3) \\
\pi_2 : 2 {\mapsto} 2 & \quad \pi'_2 : (1,3){\mapsto}(2,4) & \quad
\pi_5 : 2 {\mapsto} 4 & \quad \pi'_5 : (1,3){\mapsto}(1, 3) \\
\pi_3 : 2 {\mapsto} 2 & \quad  \pi'_3 : (1,3){\mapsto}(1,3) 
\end{align*}
The following two pairs are not in $F$, for different reasons:
\[
\sigma_1 : 2 {\mapsto} 4 \quad \sigma'_1 : (1,3){\mapsto}(1,4)
\qquad
\sigma_2 : 2 {\mapsto} 3 \quad \sigma'_2 : (1,3){\mapsto}(2,3).
%\qquad
%(\fun{2 {\mapsto} 4}, \fun{1{\mapsto}2, 3{\mapsto} 3}),
\]
The left pair $(\sigma_1, \sigma'_1)$ belongs to $\embsetprod \Vs  {\Vs'}$ but factorizes through 
$(\pi_4, \pi'_4)$, namely
$(\sigma_1, \sigma'_1) = \kappa(\pi_4, \pi'_4)$ where 
$\kappa(0)=0, \kappa(1)=1$, $\kappa(2)=2$, and $\kappa(n) = n+1$ for $n\geq 3$.
The right pair does not even belong to $\embsetprod \Vs {\Vs'}$, as $\sigma'_2$ does not extend to an embedding.
\end{slexample}

\newcommand{\CA}[1]{$(\textsc{c}_{#1})$}
\begin{sldefinition}\label{def:comp}
We say that an $\omega$-well structured domain $\WW$ 
is \emph{computable} if its elements can be finitely represented,
and the following tasks are computable using this representation:
\begin{enumerate}
\item[\CA 1]
Given $\Vs \in \age \WW$ and $\pi:\Vs\to\W$, decide if  $\pi \in \embset \Vs$.
\item[\CA 2]
Given $\Vs, \Vs' \in \age \WW$, compute a presentation of $\embsetprod {\Vs}{\Vs'}$.
\end{enumerate}
In particular, every finite $\WW$ is computable.
\end{sldefinition}
%
%\begin{remark}
(Our condition \CA 2 is similar to EGB4 in \cite{BD11}.)
%\end{remark}
%
By reusing the proof of \cite[Proposition 3.4]{HKL18} we can show that the ordinal $\omega$, 
seen as a relational structure, is computable.
In fact the proof can be easily extended to ordinals $\alpha$ which are \emph{effective},
by which we mean that
elements of $\alpha$ (i.e., ordinals smaller than $\alpha$) are finitely represented, and
ordinal order $<$ and addition $+$ are computable using this representation.
For example, the ordinals $\omega$, $\omega^k$, $\omega^{\omega}$ and $\varepsilon_0$ 
are effective, using Cantor Normal Form.
%\sla{does CNF apply to $\varepsilon_0$?}
%
\begin{lemmarep} 
Any effective ordinal $\alpha$, seen as a relational structure, is computable.
\end{lemmarep}
\begin{proof}
We start by considering $\omega$;
later we explain how to extend the proof to any effective ordinal.
Let $\Vs\subseteqfin\omega$ be a nonempty subset of $\omega$.
A function $\pi:\Vs\to\omega$
extends to an embedding $\omega\to\omega$ if and only if the following conditions hold:
\begin{align} \label{eq:CA1}
\begin{aligned}
& \v \leq \pi(\v) && \text{ where } \v=\min\Vs \\
& \v_2 - \v_1 \leq \pi(\v_2) - \pi(\v_1)  && \text{ for every } \v_1\leq\v_2.
\end{aligned}
\end{align}
This implies \CA 1.

\smallskip

Concering \CA 2,  given $\Vs, \Vs' \in\age\WW$,
we compute $\embsetprod \Vs{\Vs'}$ by a dynamic algorithm with respect to the orders on 
$\Vs$ and $\Vs'$:
for every two initial segments $\Us \subseteq \Vs$ and
$\Us' \subseteq \Vs'$, we compute
$\embsetprod \Us{\Us'}$, as follows
(the algorithm is similar to \cite[Proposition 3.4]{HKL18}).

%% a finite set
%%$P_{\U, \U'}$ of pairs $(\pi,\pi')$ of finite functions $\pi:\U\to \W$ and $\pi':\U'\to\W$ such that
%\begin{align}\label{eq:dyn}
%P_{\U,\U'} \ = \ \setof{(\restr \iota \U, \restr {\iota'} {\U'})}{(\iota, \iota')\in\E^2 \text{ is a $(\U,\U')$-minimal pair}}.
%\end{align}
%In particular, this yields the set \eqref{eq:CA2} as $P_{\V,\V'}$.

The trivial set $\embsetprod {\emptyset} {\emptyset} = \set{(\emptyset,\emptyset)}$ 
contains just one pair of empty functions.
Consider two initial segments $\Us \subseteq \Vs$ and
$\Us' \subseteq \Vs'$.
Assume $\embsetprod {\overline\Us}{\overline\Us'}$ is already computed for all pairs of initial segments 
smaller than 
$\Us, \Us'$,
i.e. for all pairs
$(\overline\Us,\overline\Us') \neq (\Us,\Us')$ of initial segments $\overline\Us\subseteq \Us$ 
and $\overline\Us'\subseteq \Us'$.
For computing $\embsetprod \Us{\Us'}$ we use $\overline\Us\subseteq \Us$ obtained from $\Us$ 
by removing the largest element $\v$ (assuming $\Us$ is nonempty),
and $\overline\Us'\subseteq \Us'$ obtained from $\Us'$ by removing the largest element $\v'$
(assuming $\Us'$ is nonempty).
We compute $\embsetprod \Us{\Us'}$ by applying the following three procedures
(note that their pre-conditions are not disjoint):
\begin{itemize}
\item
Suppose both $\Us$ and $\Us'$ are nonempty.
For every $(\bar\pi, \bar\pi') \in \embsetprod {\overline\Us}{\overline\Us'}$ we extend $\bar\pi$ and $\bar\pi'$ 
to $\pi : \Us \to \omega$ and $\pi' : \Us'\to\omega$ by mapping both $\v$ and $\v'$ to
the minimal
value $\pi(\v) = \pi'(\v')\in\omega$ satisfying the condition \eqref{eq:CA1} for both $\pi$ and $\pi'$.
We add all so obtained pairs $(\pi, \pi')$ to $\embsetprod \Us{\Us'}$.
\item
Suppose $\Us$ is nonempty (while $\Us'$ is either empty or not).
For every $(\bar\pi, \pi') \in \embsetprod {\overline\Us}{\Us'}$ we extend 
$\bar\pi : \overline\Us \to\omega$ to $\pi : \Us \to \omega$
by mapping $\v$ to the minimal value $\pi(\v)$ outside of the range $\pi'(\Us')$ that satisfies the condition \eqref{eq:CA1}.
We add all so obtained pairs $(\pi, \pi')$ to $\embsetprod \Us{\Us'}$.
\item
Supposing $\Us'$ is nonempty  (while $\Us$ is either empty or not), we proceed symmetrically to the previous case.
\end{itemize}
This completes the procedure to compute $\embsetprod \Vs{\Vs'}$.
We have thus completed the proof for $\omega$.
\smallskip

%\sla{TODO: generalise to any computable ordinal?}
%\para{Extending the proof to computable ordinals}
\label{para:comp ord}
The proof extends to any ordinal $\alpha$, as long as
ordinal subtraction is computable in $\alpha$, where subtraction of
ordinals $\beta - \gamma$, for $\gamma < \beta < \alpha$,
is defined as the unique ordinal $\eta$ such that 
$\beta = \gamma + \eta$ (the order of addition is important since addition of ordinals is not commutative).
For any effective ordinal $\alpha$ and $\gamma<\beta<\alpha$, subtraction $\beta - \gamma$
is computable by enumerating ordinals $\eta<\alpha$ and testing whether $\beta=\gamma + \eta$.
\end{proof}
\begin{lemmarep}\label{lem:lex comp}
Lexicographic product of two computable domains is computable.
\end{lemmarep}
\begin{proof}
Let $\WW = (\W, \ldots)$ and $\VV = (\Vs, \ldots)$ be two computable, $\omega$-well structured domains.
% (we use the same symbol to denote the total orders on $\W$ and $\V$ to make the notation less burdensome).
We will show that $\WW \domprod {\VV}$ is also computable.
%
%\sla{this notation is not good, as it is different from notation in the proof of Lemma 9.
%\ \\
%\ \\
%For any $X \in \age{\WW \domprod {\VV}}$ define $f_X : \W \to \pfin{\Vs}$ and $D_X \subseteqfin \W$ as
%\[
%f_X(a) = \setof{b}{(a,b) \in X}
%\quad
%\text{and}
%\quad
%D_X = \setof{a}{f_X(a) \neq \emptyset} \ .
%\] 
%}
%
%\sla{I've rewritten the proof of \CA 1 in the right notation}

First, assuming that $\WW$ and $\VV$ satisfy \CA{1},
we show that $\WW \domprod {\VV}$ satisfies \CA{1} as well.
For $\Bs \in \age {\WW \domprod {\VV}}$ let $\overline\Bs\in\age\WW$ be its projection
on the first coordinate.
Moreover, for $\w\in\overline\Bs$, let $\Bs_\w = \setof{\w'\in\W'}{(\w, \w')\in\Bs} \in \age{\VV}$.
Given $\Bs\in\age\WW$ and $\pi : \Bs\to\W\times\W'$, 
we need to check if $\pi$ extend to an embedding of $\WW \domprod {\VV}$.
W.l.o.g.~assume that $\pi$ preserves equality in first coordinate since (this necessary condition can be easily checked).
Then, $\pi$ induces a map $\overline \pi : \overline \Bs \to \W$ and maps $\pi_\w : \Bs_\w  \to \W'$ for 
$\w \in \overline\Bs$.
By Lemma \ref{lem:prod embed}, $\pi$ extends to an embedding of $\WW \domprod {\VV}$ 
only if $\overline\pi$ extends to an embedding $\overline\iota$ of $\WW$ and for every 
$\w\in\overline\Bs$, the map $\pi_\w$ extend to an embedding $\iota_a$ of $\VV$.
Moreover if both these conditions are satisfied, 
then $\pi$ does extend to the embedding of $\WW \domprod {\VV}$ determined by $\overline\iota$ and 
the family of embeddings $(\iota_\w \in \embed{\VV})_{\w\in\W}$, where
$\iota_\w$ is arbitrary for $\w\notin\overline\Bs$.
% defined as
%\[
%\iota'_\w =
%\begin{cases}
%\iota_\w & \text{ if } \w \in \overline\Bs, \\
%\id{} & \text{ otherwise.} 
%\end{cases}
%\]
Since $\WW$ and $\VV$ satisfy \CA{1},
the conditions can be checked and hence $\WW \domprod \VV$ satisfies \CA{1}.
\smallskip

Now, assuming further that $\WW$ and $\VV$ satisfy \CA{2},
we show that $\WW \domprod {\VV}$ satisfies \CA{2} as well, i.e.,
given $V,W \in \age{\WW \domprod {\VV}}$, 
we show how to compute a presentation $E$ of $\embsetprod{V}{W}$.
Let $X\in \age \WW$ be the union of projections of $V$ and $W$ to the first coordinate:
\[
X = \overline{V}\cup\overline{W}
= \setof{a\in\W}{(a,b)\in V\cup W\text{ for some }b\in\Vs}.
\]
Likewise, let $Y\in\age \VV$ be the union of projections of $V$ and $W$ to the second coordinate:
\[
Y = \bigcup_{a\in X} (V_a\cup W_a) = \setof{b\in\Vs}{(a,b)\in V\cup W\text{ for some }a\in\W}.
\]
Above we assume $V_a = \emptyset$ when $a\notin\overline{V}$,
and likewise for $W_a$.
Let $Z = X\times Y \in \age{\WW \domprod {\VV}}$.
\begin{claim}\label{clm:embsetprod restr}
If $F$ is a presentation of $\embsetprod{Z}{Z}$,
then
\[
E := \setof{(\restr{f}{V},\restr{g}{W})}{(f,g)\in F}
\]
is a presentation of $\embsetprod{V}{W}$.
\end{claim}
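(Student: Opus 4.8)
The plan is to show that the restriction map gives a presentation by exhibiting, for every pair $(\pi,\pi')\in\embsetprod{V}{W}$, a witness in $\E(E)$ that restricts to it, and conversely checking that every element of $\E(E)$ genuinely lies in $\embsetprod{V}{W}$. The key observation is that $V\subseteq Z$ and $W\subseteq Z$ by construction of $X$, $Y$, and $Z$, so restriction $\embset{Z}\to\embset{V}$ and $\embset{Z}\to\embset{W}$ are well-defined surjections (any local embedding of the larger set restricts to one of the smaller set, and by Lemma~\ref{lem:restr}-style reasoning every local embedding of $V$ extends to one of $Z$ since both come from genuine embeddings of $\WW\domprod\VV$).

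First I would prove the inclusion $\embsetprod{V}{W}\subseteq\E(E)$. Take $(\pi,\pi')\in\embsetprod{V}{W}$, so $\pi=\restr{\iota}{V}$ and $\pi'=\restr{\iota'}{W}$ for some $\iota,\iota'\in\E$. Then $(\restr{\iota}{Z},\restr{\iota'}{Z})\in\embsetprod{Z}{Z}$, so since $F$ presents $\embsetprod{Z}{Z}$ there is $\kappa\in\E$ and $(f,g)\in F$ with $\restr{\iota}{Z}=\restr{\kappa\circ f}{Z}$ and $\restr{\iota'}{Z}=\restr{\kappa\circ g}{Z}$ (wait --- more precisely $(\restr{\iota}{Z},\restr{\iota'}{Z})=\kappa(f,g)=(\kappa\circ f,\kappa\circ g)$ as local embeddings of $Z$). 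Restricting further to $V$ and $W$ and using $V\subseteq Z$, $W\subseteq Z$ gives $\pi=\restr{(\kappa\circ f)}{V}=\kappa\circ\restr{f}{V}$ and $\pi'=\kappa\circ\restr{g}{W}$, i.e.\ $(\pi,\pi')=\kappa(\restr{f}{V},\restr{g}{W})\in\E(E)$.

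For the reverse inclusion $\E(E)\subseteq\embsetprod{V}{W}$, take any $\kappa\in\E$ and $(f,g)\in F$; then $(\restr{f}{V},\restr{g}{W})$ is a pair of local embeddings of $V$ and $W$ respectively (restrictions of local embeddings are local embeddings), and composing with $\kappa$ keeps us inside $\embset{V}\times\embset{W}$ because $\embsetprod{V}{W}$ is equivariant (closed under the action of $\E$, as noted before Lemma~\ref{lem:embsetprod}). Finiteness of $E$ is immediate from finiteness of $F$. Together these two inclusions give $\embsetprod{V}{W}=\E(E)$, which is exactly the claim.

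The main obstacle --- really the only subtle point --- is making sure the restriction maps behave correctly: that every local embedding of $V$ (resp.\ $W$) extends to a local embedding of $Z$, so that computing a presentation of $\embsetprod{Z}{Z}$ genuinely captures all of $\embsetprod{V}{W}$ after restriction, and conversely that restricting never produces something outside $\embset{V}\times\embset{W}$. Both directions follow from the definition of $\embset{\cdot}$ as restrictions of \emph{global} embeddings in $\E$ together with $V,W\subseteq Z$, but it is worth stating explicitly that the containments $V\subseteq Z$ and $W\subseteq Z$ hold by construction of $X=\overline{V}\cup\overline{W}$ and $Y=\bigcup_{a\in X}(V_a\cup W_a)$.
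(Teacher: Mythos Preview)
Your proof is correct and follows essentially the same approach as the paper's own proof. The paper only spells out the direction $\embsetprod{V}{W}\subseteq\E(E)$ (extend to $Z$, use that $F$ presents $\embsetprod{Z}{Z}$, then restrict back), treating the reverse inclusion and the containments $V,W\subseteq Z$ as obvious; you make these points explicit, which is arguably an improvement in rigor but not a different strategy.
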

\begin{claimproofnew}
Consider $(p,q)\in \embsetprod{V}{W}$.
There exists $(p',q')\in \embsetprod{Z}{Z}$ such that $\restr{p'}{V} = p$ and $\restr{q'}{W} = q$.
There exists $(f,g)\in F$ and $\iota \in \embed{\WW \domprod {\VV}}$ such that
\[
\iota \circ f = p' \qquad \text{ and } \qquad \iota \circ g = q' \ .
\]
Which implies
\[
\iota \circ \restr{f}{V} = p \qquad \text{ and } \qquad \iota \circ \restr{g}{W} = q \ .
\]
As $(p,q)\in \embsetprod{V}{W}$ is arbitrary, the claim is proved.
\end{claimproofnew}

It remains to show how to compute a presentation $F$ of $\embsetprod{Z}{Z}$.
As $\WW$ and $\VV$ satisfy \CA{2}, we
can compute presentations $S$ and $T$ of $\embsetprod{X}{X}$ and $\embsetprod{Y}{Y}$ respectively.
Before progressing further we notice that for any set $U$ we have
$\embsetprod{U}{\emptyset} = \set{(\id{U},\id{\emptyset})}$ and
$\embsetprod{\emptyset}{U} = \set{(\id{\emptyset},\id{U})}$.
Let $s = (s_1,s_2)$ be an arbitrary element in $S$.
Let $s(X) := s_1(X)\cup s_2(X)$.
We define a family of finite subsets $I_s$ indexed by $s(X)$ as follows:
\[
I_s(a) =
\begin{cases}
T & \text{when } a\in s_1(X)\cap s_2(X) \\
\set{(\id{Y},\id{\emptyset})} & \text{when } a\in s_1(X)\setminus s_2(X) \\
\set{(\id{\emptyset},\id{Y})} & \text{when } a\in s_2(X)\setminus s_1(X).
\end{cases}
\]
We put:
\[
I_s:= \prod_{a\in s(X)} I_s(a).
\]
For every $t = (t^a_1,t^a_2)_{a\in s(X)} \in I_s$ we define
$f^{s,t} = (f^{s,t}_1,f^{s,t}_2) \in \embsetprod{Z}{Z}$ as:
\[
f^{s,t}_i(a,b) = (s_i(a),t^{s_i(a)}_i(b)).
\]
We now prove that
\[
F := \setof{f^{s,t}}{s\in S,\ t\in I_s}
\]
is a presentation of $\embsetprod{Z}{Z}$.
%
%\begin{claimproofnew}
Towards this aim,
consider an arbitrary $u = (u_1,u_2) \in \embsetprod{Z}{Z}$.
The two local embeddings
$u_1,u_2$ induce maps $\overline{u}_1,\overline{u}_2\in\restr{\embed{\WW}}{X}$ and hence
$(\overline{u}_1,\overline{u}_2)\in\embsetprod{X}{X}$.
Since $S$ is a presentation of $\embsetprod{X}{X}$,
there exists $\overline{j}\in\embed{\WW}$ and $s = (s_1,s_2)\in S$ such that
\[
\overline{j} \circ s_1 = \overline{u}_1
\qquad \text{and} \qquad
\overline{j} \circ s_2 = \overline{u}_2 \ .
\]
This implies the following equalities:
\begin{align*}
\overline{j}(s_1(X)\cap s_2(X)) & =  \overline{u}_1(X)\cap\overline{u}_2(X) \\
\overline{j}(s_1(X)\setminus s_2(X)) & =  \overline{u}_1(X)\setminus\overline{u}_2(X) \\
\overline{j}(s_2(X)\setminus s_1(X)) & =  \overline{u}_2(X)\setminus\overline{u}_1(X).
\end{align*}
For every  $a \in s(X)$, the local embeddings
$u_1$ and $u_2$ induce local embeddings $u^a_1$ and $u^a_2$ in
$\restr{\embed{\VV}}{Y}$ and $\restr{\embed{\VV}}{Y}$,
respectively.
%If $a\in s_1(X)\cap s_2(x)$ then
%\[
%Z_{s^{-1}_1(a)} = Z_{s^{-1}_2(a)} = Y \ . 
%\]
Since $T$ is a presentation of $\embsetprod{Y}{Y}$,
there exists $(p^a_1,p^a_2)\in T$ and $j_a \in \embed{\VV}$ such that
\[
j_a \circ p^a_1 = u^a_1
\qquad \text{and} \qquad
j_a \circ p^a_2 = u^a_2 \ .
\]
Define $t \in I_s$ as
\[
(t^a_1,t^a_2) =
\begin{cases}
(p^a_1,p^a_2) & \text{when } a\in s_1(X)\cap s_2(X) \\
(\id{Y},\id{\emptyset}) & \text{when } a\in s_1(X)\setminus s_2(X) \\
(\id{\emptyset},\id{Y}) & \text{when } a\in s_2(X)\setminus s_1(X) \ .\\
\end{cases}
\]
We finish the proof by showing that there exists $\iota\in\embed{{\WW}\domprod{\VV}}$ such that
\begin{equation}\label{eq:prod factor}
\iota \circ f^{s,t}_1 = u_1
\qquad \text{and} \qquad
\iota \circ f^{s,t}_2 = u_2 \ .
\end{equation}
For $a\in s(X)$ there exists $v^a_1,v^a_2\in\embed{\VV}$ such that
\[
\restr{v^a_1}{X} = u^a_1
\qquad \text{and} \qquad
\restr{v^a_2}{X} = u^a_2 \ .
\]
Define a family of embeddings $\iota_a$ indexed by $a\in\WW$ as
\[
\iota_a =
\begin{cases}
j_a & \text{when } a\in s_1(X)\cap s_2(X) \\
v^a_1 & \text{when } a\in s_1(X)\setminus s_2(X) \\
v^a_2 & \text{when } a\in s_2(X)\setminus s_1(X) \\
\id{Y} & \text{otherwise.}
\end{cases}
\] 
Using Lemma \ref{lem:prod embed} we can determine $\iota\in\embed{{\WW}\domprod{\VV}}$ by $\overline{j}$ and the family of embeddings $\iota_a$.
It is routine to verify that $\iota$ satisfies \eqref{eq:prod factor}.
%\end{claimproofnew}
\end{proof}
%Clearly, $\omega^k$ is a reduct of $k$-fold lexicographic product 
%$\underbrace{\omega \domprod \ldots \domprod \omega}_k$.

% !TEX root = equivariant_ideals.tex

\section{\Gr bases} \label{sec:Gr}

In this section we set up the background for deciding the ideal membership problem
in Section \ref{sec:decid}.
% \probname{Ideal-memb in $\polyring \R \WW$}
%
Let $\R$ be a  computable field. % (as in Section \ref{sec:zeroness}).
Let $\WW=(\W, \wo, \ldots)$ be a computable,
$\omega$-well structured and \emph{well ordered}  domain
(we thus strengthen the total order assumption).
Therefore, the lexicographic order on monomials is well founded.

%Again, we assume $\WW$ to be \emph{computable}, i.e., its elements can be finitely represented
%and,
%given two multilinear 
%
%the following functions are computable using these representations:
%%
%\sla{is this computability sufficient? Can the 2 conditions be merged into one?}
%\begin{itemize}
%\item
%Given an injective function $\kappa : \V\to\W$ from a finite subset $\V\subseteq\W$, decide
%if $\kappa$ extends to an embedding in $\E$ (i.e., if there is some $\iota\in\E$ such that
%$\restr \iota \V = \kappa$).
%\item
%Given an injective function $\kappa : \V\to\W$ from a finite subset $\V\subseteq\W$, 
%and $\w\in\W\setminus\V$,
%compute the least $\w'\in\W$ with respect to $\wo$ such that $\kappa$ extended with $\w\mapsto\w'$ extends to an embedding in $\E$ 
%(or return 'no' if no such $\w'$ exists).
%\end{itemize}
%
%\arka{doesn't the second imply the first?}
%
%
%\arka{We also need a lemma that if two domains satisfy this, then so does their product} 
%
%\arka{I think what we need is : given finite $Y_1,Y_2,Y'_1,Y'_2$,
%$x_i \in X\setminus Y_i$
%, output the minimal $x'$ such that there are embeddings $\kappa_i$ mapping $Y_i \cup \{x_i\}$ to $Y'_i \cup \{x'\}$ } 

%
%\subsection{Division algorithm}

\para{Division step}
%
%\begin{definition}[Reduction]\label{def:red} \rm
Let $\G \subseteq \polyring\R \WW \setminus\{\zeropol\}$ be a  set of nonzero polynomials.
Given polynomials $f,f' \in \polyring \R \WW$,
we write
\begin{align}\label{eq:red}
\red f \G f'
\end{align}
if $f'$ is the remainder of \emph{division} of $f$ by some polynomial $\iota(g)$ from $\E(\G)$.
Formally, we define \eqref{eq:red} to mean that 
\begin{align} \label{eq:fr}
f = r\cdot h \cdot \iota(g) + f'
\end{align}
for some $r\in \R, h \in \monomials \WW$, $\iota \in \E$ and $g \in \G$ such that
the leading monomial 
%\begin{align} \label{eq:head}
$\lm {r\cdot h\cdot \iota(g)}$ (equal to $h\cdot \iota(\lm g)$)
%\ = \ h \cdot \lm{\iota(g)} \ = \ h \cdot \iota(\lm{g})
%\end{align}
appears in $f$ with the same coefficient as in $r\cdot h\cdot \iota(g)$
(namely $r\cdot \lc g$).
%The equalities \eqref{eq:head} follow by Lemmas \ref{lem:ltcdot} and \ref{lem:emb pres lex}, respectively.
The monomial $\lm {r\cdot h\cdot \iota(g)}$ % in \eqref{eq:head}
is called the \emph{head} of the division step \eqref{eq:red}.
The head is the largest monomial that disappears from $f$, i.e., does not appear any more
in $f' = f - r\cdot h\cdot\iota(g)$.
All monomials in $f$ lexicographically larger than the head appear also in $f'$
with the same coefficients, while
some monomials lexicographically smaller than the head 
may be added or removed by a division step.
%
%\end{definition}

Recall that $\eqidealgen\G$ denotes the equivariant ideal generated by $\G$. 
Using equality \eqref{eq:fr} we deduce:
\begin{lemma} \label{lem:red pres G}
Division step preserves membeship in $\eqidealgen\G$: 
if $\red f \G {f'}$ then $f\in\eqidealgen\G \iff f'\in\eqidealgen\G$.
\end{lemma}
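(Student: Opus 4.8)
The plan is to observe that a division step alters $f$ only by a polynomial that manifestly lies in $\eqidealgen\G$, so that $f$ and $f'$ are congruent modulo this ideal. Recall from \eqref{eq:basis} that $\eqidealgen\G$ is precisely the classical ideal generated by $\E(\G) = \setof{\iota(g)}{\iota\in\E,\ g\in\G}$. Unfolding the definition \eqref{eq:fr} of $\red f \G {f'}$, we have
\[
f - f' \ = \ r\cdot h\cdot\iota(g)
\]
for some $r\in\R$, $h\in\monomials\WW$, $\iota\in\E$ and $g\in\G$. Since $\iota(g)\in\E(\G)\subseteq\eqidealgen\G$, and $r\cdot h$ is a term, hence an element of $\polyring\R\WW$, closure of the ideal $\eqidealgen\G$ under multiplication by ring elements gives $r\cdot h\cdot\iota(g)\in\eqidealgen\G$. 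Thus $f - f'\in\eqidealgen\G$.

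To conclude, I would invoke the elementary fact that, for any ideal $\I$ in a commutative ring and any elements $a,b$ with $a-b\in\I$, one has $a\in\I$ iff $b\in\I$ (write $a = (a-b)+b$ and use closure under addition). Applying this with $\I = \eqidealgen\G$, $a = f$, $b = f'$ yields $f\in\eqidealgen\G \iff f'\in\eqidealgen\G$, as required.

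There is essentially no real obstacle here: the statement is immediate once one notes that the perturbation $r\cdot h\cdot\iota(g)$ introduced by a division step is visibly a member of $\eqidealgen\G$. The only points deserving a word of care are that $\iota(g)$ genuinely belongs to the classical generating set $\E(\G)$ of the equivariant ideal (so no use of equivariance beyond \eqref{eq:basis} is needed), and that the multiplier $r\cdot h$ — a coefficient from $\R$ times a monomial — is a legitimate element of $\polyring\R\WW$, under multiplication by which an ideal is closed.
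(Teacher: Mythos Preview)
Your proof is correct and matches the paper's own justification: the paper simply notes that the lemma follows directly from equality \eqref{eq:fr}, which is exactly the observation you spell out in detail.
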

%
%\begin{proof}
%Immediate, due to the equality $f = h \cdot \iota(g) + r$.
%\end{proof}

%
%\begin{enumerate}
%\item $\lm{g'} = \lm{h \pi(g)}$ for all $g \in G$ such that $h_g \neq 0$,
%\item $\lm{g'} \lexeq \lm{f}$, and
%\item the coefficient of $\lm{g'}$ in $f$ is $-\lc{g'}$.
%\arka{Actually this implies the previous condition}.
%\end{enumerate}
%

A polynomial $f$ is said to be \emph{reduced} with respect to $\G$ if there is no 
$f' \in \polyring{\R}\WW$ such that $\red f \G f'$.
\begin{lemma}\label{lem:red terminates}
%Let $f \in \polyring{\R}\WW$ and 
Let 
$\G \subseteq \polyring \R \WW$ be a finite set of nonzero polynomials. 
Every sequence of division steps
\begin{align} \label{eq:inf red}
\red {f_1} \G {\red {f_2} \G {\red {f_3} \G \ldots}}
\end{align}
terminates, i.e., ends in a reduced polynomial.
\end{lemma}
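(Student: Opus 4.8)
The plan is to show that every infinite division sequence leads to a contradiction by exhibiting a strictly decreasing chain in a well-founded order. The natural invariant to track is the leading monomial $\lm{f_i}$ under the lexicographic order $\lexorder$, which is well founded because $\WW$ is assumed to be \emph{well ordered} (stated at the start of Section~\ref{sec:Gr}). However, a single division step $\red {f_i} \G {f_{i+1}}$ does \emph{not} in general decrease $\lm{f_i}$: the head of the step may be strictly smaller than $\lm{f_i}$, in which case $\lm{f_{i+1}} = \lm{f_i}$. So the bare leading monomial is not enough, and I will need a finer invariant.

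First I would observe that, since $\G$ is finite, there is a uniform bound on the complexity of the polynomials $\iota(g)$ with $g\in\G$ that can ever be used in a division step against $f_1$. More precisely, I would argue that all heads of all division steps in the sequence \eqref{eq:inf red}, and in fact all monomials ever appearing in any $f_i$, use only variables from a fixed finite set $v \subseteq \W$: indeed, the first step can only introduce variables from $\vars{f_1}$ together with the variables of the embedded generator $\iota(g)$, but by the definition of a division step the head $h\cdot\iota(\lm g)$ must already appear in $f_1$, so $\vars{\iota(g)} \subseteq \vars{f_1}$ — hence no new variables are ever introduced, and $\vars{f_i} \subseteq \vars{f_1}$ for all $i$. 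This is the key structural point: it confines the whole sequence to the sub-ring $\polyring{\R}{\finitedomain{v}}$ of polynomials over the finite variable set $v = \vars{f_1}$.

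Once we are inside a polynomial ring over finitely many variables with a well-founded (indeed, a term-) monomial order, termination of division is the classical fact: each step strictly decreases, in a well-founded order on finite multisets of monomials (e.g.\ the Dershowitz–Manna multiset ordering over $\lexorder$ restricted to monomials in $v$, which is well founded since $\lexorder$ is well founded on $\set{f\in\monomials\WW : \vars f \subseteq v}$), the multiset of monomials appearing in $f_i$. Concretely: a division step removes the head monomial (which disappears, by the remark following \eqref{eq:red} that the head does not reappear in $f'$) and may only add or modify monomials strictly $\lexorder$-below the head; this is exactly a decrease in the multiset order. An infinite sequence \eqref{eq:inf red} would thus give an infinite strictly decreasing chain in a well-founded order, a contradiction.

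The main obstacle — and the only place requiring care — is the first paragraph's claim that $\vars{\iota(g)} \subseteq \vars{f_i}$ at every step, i.e.\ that division never enlarges the variable set. This follows directly from the \emph{definition} of $\red f \G {f'}$ given just above: equation \eqref{eq:fr} requires that $\lm{r\cdot h\cdot\iota(g)} = h\cdot\iota(\lm g)$ appears in $f$, so every variable of $h\cdot\iota(\lm g)$ is a variable of $f$; but I must additionally check that the \emph{other} monomials of $\iota(g)$ cannot introduce fresh variables — which they cannot, since $\iota$ is an embedding and $\lm{g}$ contains, by definition of the lexicographic order, the $\wo$-largest variables of $g$, while in fact $\vars g = \vars{\lm g}$ is false in general, so I instead argue more simply: the division step is only \emph{permitted} when the head appears in $f$; if $\vars{\iota(g)}\not\subseteq\vars f$ we may still have a legal step, and then $f'$ would contain new variables. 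I would resolve this by noting that although new variables may in principle appear, they can only appear \emph{strictly below} the head in $\lexorder$, and — more robustly — by replacing the crude "finite variable set" bound with the following: fix $v_0 = \vars{f_1}$; a run that is to continue forever must use infinitely many division steps, but the multiset-of-monomials argument shows the head strictly decreases in $\lexorder$, which is well founded; so even allowing bounded growth of the variable set at each step, after finitely many steps the head stabilizes below any given monomial, forcing termination. I expect the cleanest write-up to combine Lemma~\ref{lem:lexwf}-style well-foundedness of $\lexorder$ on $\CP$ (characteristic pairs, which track both the leading monomial and the variable set) with the multiset argument, so that the variable-set component handles the bookkeeping and the lexicographic component supplies well-foundedness.
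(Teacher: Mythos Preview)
Your core intuition---reduce to a well-founded multiset-of-monomials order (Dershowitz--Manna) over $\lexorder$---is the right one, and once cleaned up it is essentially the paper's argument. But the write-up is tangled, and two of the moves you make are actually wrong.

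First, the variable-confinement claim $\vars{f_i}\subseteq\vars{f_1}$ is false, as you yourself notice: only $\vars{h\cdot\iota(\lm g)}$ is forced into $\vars f$, while the non-leading monomials of $\iota(g)$ may carry fresh variables. The attempted repairs do not work either. The claim that ``the head strictly decreases in $\lexorder$'' is simply false: a step with head $m$ leaves all monomials above $m$ untouched, so the next step may well have a larger head. And invoking Lemma~\ref{lem:lexwf} on characteristic pairs cannot help, because that lemma orders $(f,v)\lexorderstrict(f',v')$ only when $v\subseteq v'$, whereas here the variable set may \emph{grow} along the sequence.

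What you are missing is much simpler than any of these fixes: in Section~\ref{sec:Gr} the domain $\WW$ is assumed \emph{well ordered}, so $\lexorder$ is well founded on \emph{all} of $\monomials\WW$, not just on monomials over a fixed finite variable set. No confinement is needed. With that single observation, your multiset argument goes through directly: each division step removes the head from the multiset of monomials of $f_i$ and adds only monomials strictly $\lexorder$-below it, a strict decrease in the Dershowitz--Manna order, which is well founded because $\lexorder$ is. The paper phrases the same idea via K\"onig's lemma: it builds a finitely-branching forest whose roots are the monomials of $f_1$ and where, at each step, the head acquires as children the newly introduced (strictly $\lexorder$-smaller) monomials; an infinite reduction would make the forest infinite, hence by K\"onig yield an infinite strictly $\lexorder$-decreasing path, contradicting well-foundedness. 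This is the multiset argument unwound.
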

%
%For a well order $(\V,\wo)$, we denote by
%$\V^{\wostrictinv}$ the set of all of nonempty finite sequences of elements in $\V$ that are strictly
%decreasing with respect to $\wo$.
%%
%\begin{lemma}\label{lem:dec seq wo}
%For every well order $(\V, \wo)$, the lexicographic order $\lexorderstrict$ on $\V^\wostrictinv$ is a well order.
%\end{lemma}
%%
%\begin{proof}
%Towards contradiction, suppose there exists an infinite strictly decreasing sequence
%\[
%w_1 \lexorderstrictinv  w_2 \lexorderstrictinv \ldots
%\]
%of elements of $\V^\wostrictinv$.
%Hence $x_i = a_i y_i$ for some $a_i \in A$ and $x_i \in (A,<)$.
%Since $A$ is well ordered there exists $N \in \N$ such that $a_i = a_N$ for all $i \geq N$.
%Assume $x_1 \glex x_2 \glex \dots$ to be the minimal bad sequence in terms of length.
%Then $x_1,x_2,\dots,x_{N-1},y_1,y_2$ is not a strictly decreasing sequence.
%Since $x_i \glex x_j$ for all $i < j \in \N$ and $a_i = a_N$ for all $i \geq N$,
%\end{proof}
%
\begin{proof} % [Proof of Lemma \ref{lem:red terminates}]
%We say that a monomial \emph{appears} in a polynomial $g$ if it has non-zero coefficient in $g$.
%
By \emph{reduction ab absurdum}:
assuming an infinite sequence of division steps \eqref{eq:inf red}, we deduce that
the lexicographic order $\lexorder$ on monomials $\monomials \WW$ is not well founded.
%\arka{why does the proof not end here?}
%
%To this aim it is convenient identify a polynomial $g$ with a finite set of monomials that appear in $g$.
%This identifies polynomials with and therefore to identify polynomials with the same monomials).
%
%We observe that each reduction step $\red {h} \G {h'}$ removes one monomial $m$ from the set 
%of monomials appearing in $h$,
%and then adds or removes finitely many monomials that are strictly smaller lexicographically 
%than $m$, thus obtaining the set of monomials appearing to $h'$.

Given an infinite sequence of division steps \eqref{eq:inf red},
we construct an infinite forest $\mathcal F$, whose nodes are labelled by monomials,
as the limit of an increasing sequence of finite forests ${\mathcal F}_1, {\mathcal F}_2, \ldots$.
The construction will maintain the invariant that all monomials appearing in $f_i$ are
among labels of leaves in ${\mathcal F}_i$ (we \emph{mark} these leaves, % are \emph{colored},
for distinguishing them from other leaves).
The forest ${\mathcal F}_1$ consists just of (finitely many) roots, 
labeled by monomials appearing in $f_1$, all of them marked.
Given ${\mathcal F}_i$, we define ${\mathcal F}_{i+1}$ by adding new children to the 
marked leaf $v$ labelled by the head monomial $m$ of the division step $\red {f_1} \G {f_{i+1}}$.
The leaves are labelled by monomials that are added by the division step, and all are marked.
The node $v$ itself becomes unmarked (even if it remains a leaf), and likewise do all other 
leaves labelled by monomials of $f_i$ that do not appear in $f_{i+1}$.

By the construction of $\mathcal F$, every two monomials $m$, $m'$ related by an edge are in 
strict lexicographic order: $m \lexorderstrictinv m'$.
As the sequence \eqref{eq:inf red} is infinite, the forest is also infinite and, by K{\"o}nig's lemma,
forcedly contains an infinite branch. 
This implies existence of an infinite strictly decreasing chain of monomials
with respect to lexicographic order, which is a contradiction.
%
%Let $\mono(\W)^{\glex}$ be the set of finite decreasing sequences of elements in $\mono(\W)$.
%Lemma \ref{lem:dec seq wo} implies $\mono(\W)^{\glex}$ is a well-order w.r.t. the lexicographic order.
%Let $m_i \in \mono(\W)^{\glex}$ be the sequence of monomials appearing in $f_i$ in decreasing order w.r.t. $\lexeq$.
%\arka{Should I prove that $m_i \glex m_{i+1}$}
%Then $m_1 \glex m_2 \glex \dots$ is an infinite decreasing sequence,
%which contradicts that $\mono(\W)^{\glex}$ is a well-ordered set.
\end{proof}

We write $\redstar f \G f'$ if there is some finite sequence of division steps from $f$ to $f'$
and $f'$ is reduced.

\para{\Gr basis}
%\begin{definition}[\Gr basis]\label{def:grobner}  \rm
%Let $\I \subseteq \polyring \R \WW$ be an equivariant ideal.
A set $\G \subseteq \polyring \R \WW$ is said to be a \emph{\Gr basis} %of $\I = \eqidealgen \G$ 
if for every $f \in \eqidealgen \G$ there exists % $\iota\in\E$ and 
$g \in \G$ such that $\lm{g} \wpo \lm{f}$.

\begin{lemma} \label{lem:Gr iff}
If $\G\subseteq \polyring \R \WW$ is a \Gr basis then for every polynomial $f\in\polyring \R \WW$, the following conditions are equivalent: \ 
%\begin{enumerate}
%\item
(1) $f\in \eqidealgen \G$; \ 
%\item
(2) $\redstar f\G \zeropol$; \ 
%\item
(3) every sequence $\redstar f\G f'$ ends in $f'=\zeropol$.
%\end{enumerate}
\end{lemma}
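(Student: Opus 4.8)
The plan is to prove the cycle of implications $(3)\Rightarrow(2)\Rightarrow(1)\Rightarrow(3)$, using Lemma~\ref{lem:red terminates} to guarantee that reduced forms exist and Lemma~\ref{lem:red pres G} to track membership in $\eqidealgen\G$ along division steps.

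First I would observe that $(3)\Rightarrow(2)$ is immediate: by Lemma~\ref{lem:red terminates} at least one sequence $\redstar f\G f'$ exists, and $(3)$ forces its endpoint to be $\zeropol$. Next, $(2)\Rightarrow(1)$ follows directly from Lemma~\ref{lem:red pres G}: a chain of division steps from $f$ to $\zeropol$ preserves membership in $\eqidealgen\G$ at each step, and $\zeropol\in\eqidealgen\G$ trivially, hence $f\in\eqidealgen\G$. (Note these two directions do not even use that $\G$ is a \Gr basis.)

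The substance is $(1)\Rightarrow(3)$, and this is where the \Gr basis property is essential. Suppose $f\in\eqidealgen\G$ and let $\redstar f\G f'$ be an arbitrary terminating sequence of division steps; I must show $f'=\zeropol$. By Lemma~\ref{lem:red pres G}, $f'\in\eqidealgen\G$. If $f'\neq\zeropol$, then by the \Gr basis property there is $g\in\G$ with $\lm g\wpo\lm{f'}$, i.e.\ (unfolding~\eqref{eq:wpodef}) some $\iota\in\E$ and $h\in\monomials\WW$ with $h\cdot\iota(\lm g)=\lm{f'}$. Taking $r=\lc{f'}/\lc{g}$ (here I use that $\R$ is a field, which is assumed in this section), the polynomial $r\cdot h\cdot\iota(g)$ has leading monomial $\lm{f'}$ with coefficient exactly $\lc{f'}$, matching the coefficient of $\lm{f'}$ in $f'$. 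Hence the division step $\red{f'}\G{f''}$ is applicable, contradicting the assumption that $f'$ is reduced. Therefore $f'=\zeropol$.

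The main obstacle, such as it is, is bookkeeping around the definition~\eqref{eq:red}--\eqref{eq:fr} of a division step: I need to check carefully that the head monomial $\lm{r\cdot h\cdot\iota(g)}$ equals $h\cdot\iota(\lm g)$ (which uses Lemma~\ref{lem:ltcdot} together with $\iota$ commuting with $\lm{\_}$) and that it genuinely appears in $f'$ with the required coefficient, so that the step is legitimate and strictly eliminates $\lm{f'}$. Once this is set up, the argument is a clean one-line contradiction. I would also remark explicitly that the field hypothesis on $\R$ is what makes the coefficient $r$ available; over a general Noetherian ring this direction would need a more careful (several-polynomial) reduction, but that is outside the scope of this section.
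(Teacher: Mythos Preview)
Your proof is correct and follows essentially the same cycle $(3)\Rightarrow(2)\Rightarrow(1)\Rightarrow(3)$ as the paper, invoking Lemmas~\ref{lem:red terminates} and~\ref{lem:red pres G} at the same points. If anything, your argument for $(1)\Rightarrow(3)$ is more explicit: the paper simply asserts that ``as $\G$ is a \Gr basis, no nonzero polynomial in $\eqidealgen\G$ is reduced,'' whereas you actually construct the witnessing division step via $\lm g\wpo\lm{f'}$ and the coefficient $r=\lc{f'}/\lc{g}$, and correctly flag that this uses $\R$ being a field.
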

\begin{proof}
The implication (2)$\implies$(1) holds for any $\G$.
Indeed, a sequence $\redstar f\G \zeropol$  of division steps 
%from $f$ 
yields a decomposition of $f$ into a
sum:
\[
f \ = \ r_1 \cdot h_1 \cdot \iota_1(g_1) \ + \ \ldots \ + \ r_n \cdot h_n \cdot \iota_n(g_n),
\]
where $r_1, \ldots, r_n\in\R$, $g_1, \ldots, g_n\in\G$, $h_1, \ldots, h_n\in\polyring \R \WW$ and $\iota_1, \ldots, \iota_n\in \E$,
which proves membership of $f$ in $\eqidealgen\G$.

For (1)$\implies$(3),
%For the converse direction, 
suppose $f\in\eqidealgen\G$, and consider any sequence of division steps starting in $f$.
By Lemma \ref{lem:red terminates} the sequence is necessarily finite, and hence ends in some 
reduced polynomial $f'$.
By Lemma \ref{lem:red pres G}, $f'\in\eqidealgen\G$.
However, as $\G$ is a \Gr basis,
no nonzero polynomial in $\eqidealgen\G$ is reduced, and hence $f'=\zeropol$.
%Therefore $\redstar f \G \zeropol$.
%In fact, this proves that not only \emph{some}, but actually \emph{all} sequences of division steps from $f$ end in $\zeropol$.

The last implication (3)$\implies$(2) holds vacuously, as every polynomial $f$ admits some sequence
of division steps $\redstar f \G f'$.
\end{proof}
%

%\begin{theorem}\label{thm:grobner basis exists}
%Every equivariant ideal $\I \subseteq \polyring \R \WW$ has a finite \Gr basis.
%\end{theorem}
%%
%\begin{proof}
%%
%Let $M$ be the set of $\wpo$-minimal elements in 
%\[
%\lm {\I} = \setof{\lm f}{f\in \I} \subseteq \monomials \WW.
%\]
%The set $M$ is finite since $\wpo$ is a \wqo.
%For every $m \in \M$ we choose some arbitrary $g_m \in \I$ such that $\lm{g_m} = m$, and prove that
%the set $\G = \setof{g_m}{m \in \M}$ is a Gr\"obner basis of $\I$.
%Indeed, by definition of $M$,
%for every polynomial $f \in \I$, there is some $m \in \M$ such that $m \wpo \lm f$, and
%therefore
%$
%\lm{g_m} \wpo \lm f,
%$
%as required.
%%
%\end{proof}
%%
%\begin{remark} \rm
%Theorem \ref{thm:grobner basis exists} reproves
%Theorem \ref{thm:equiv noeth}, in case when $\R$ is a field.
%In the sequel we reprove it once more, by 
%providing a way to \emph{compute} a \Gr basis.
%\sla{Not too much?}
%\end{remark}

\begin{remark} \label{rem:Grob}
\rm
Note that the proof of Theorem \ref{thm:equiv noeth} shows that every equivariant ideal has a finite \Gr basis.
In the sequel we provide a way to \emph{compute} such a basis.
\end{remark}

%\sla{It is still not clear to me what fails in case of total order.
%Thm \ref{thm:equiv noeth} shows existence of finite \Gr basis in the stronger sense, where
%reduction is with respect to $\cc\_$ instead of $\lm\_$.
%Why can we not  compute this basis?}
%

% 
\para{Buchberger's algorithm}

%As $\R$ is assumed to be a field, when constructing bases we can restrict, w.l.o.g., to
%\emph{monic} polynomials $f$, i.e.~those satisfying $\lc{f} = 1$.
By the least common multiple of two monomials $f,f'\in\monomials \WW$,
denoted as $\lcm {f}{f'}$, we mean
the least  (with respect to divisibility) monomial $g$ divisible by both $f$ and $f'$, i.e.~satisfying 
$f \divorder g$ and $f' \divorder g$.

\begin{definition}[$S$-polynomials]\label{def:S-poly}\rm
Given two distinct polynomials $f,g \in \polyring{\R}{\WW}$, $f\neq g$,
we define their \emph{S-polynomial} $\Sof {f,g}$ as
\[
\Sof{f,g} \ := \ \left(\frac{h}{\lt{f}}\cdot f\right) -
\left(\frac{h}{\lt{g}}\cdot g\right),
\]
where $h = \lcm {\lm f} {\lm g}$.
We say that monomial $h$ is \emph{cancelled out} in $\Sof{f,g}$.
For a subset $\G\subseteq{\polyring \R \WW}$ of polynomials, we put:
\begin{align*}
%\Ssub \G \ := \! \! \! \!\!\! 
\Ssub \G \ = \ 
\bigcup_{(f,g)\in\E(\G),f \neq g} \! \! \! \! \!\!\! \Sof{f,g}.
\end{align*}
Expanding $\E(\G) = \setof{\iota(g)}{\iota\in\E,\ g\in\G}$, we get:
\begin{align*}
\Ssub \G \ = \ 
\setof{\Sof{\iota(f), \kappa(g)}}{\iota,\kappa\in\E, \ f,g\in\G, \ \iota(f)\neq\kappa(g)}.
\end{align*}
\end{definition}
%
%Let $\renorder$ denote the order on polynomials induced by the action of $\E$:
%$f\renorder g$ if $\iota(f)=g$ for some $\iota\in\E$.
%Our aim is to replace the infinite set $\Ssub \G$ in \Bu criterion by its finite subset
%$$\min_{\renorder} \Ssub\G$$
%of minimal $S$-polynomials in $\Ssub\G$ with respect to $\renorder$
%(in Lemmas \ref{lem:min finite}, \ref{lem:effective buch cri} below).

As action of embeddings commutes with S-polynomials, namely 
%\begin{align*}%\label{eq:commutes}
$\Sof {\kappa(f), \kappa(g)} \ = \ \kappa(\Sof {f,g})$
%\end{align*}
for every $\kappa\in\E$ and $f,g\in\polyring \R \WW$,
the set $\Ssub \G$ is closed under the action of $\E$, i.e.~is equivariant:
$\Ssub\G \ = \ \E(\Ssub\G)$.
Using assumption \CA 2 we derive the following key fact:
%This implies that $\Ssub \G$ can be represented, modulo action of $\E$,  
%by its minimal elements:
%
%%\begin{lemma} 
%\begin{align}
%\label{eq:min spans}
%\Ssub\G \ = \ \E(\min_{\renorder}\Ssub\G).
%\end{align}
%%\end{lemma}
%
\begin{lemma}[Computable presentation] \label{lem:min finite}
For every finite set of polynomials $\G \subseteq \polyring{\R}{\WW}$, the set $\Ssub\G$ 
is orbit-finite, namely
\[
\Ssub\G \ = \ \E(\Pres\G) \ \defeq \ \setof{\iota(f)}{\iota\in\E, f\in\Pres\G}
\]
for some presentation $\Pres\G \subseteqfin \Ssub\G$, which is moreover computable.
\end{lemma}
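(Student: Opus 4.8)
I want to produce a finite, computable presentation $\Pres\G$ of the equivariant set $\Ssub\G$. Recall that by definition $\Ssub\G = \setof{\Sof{\iota(f),\kappa(g)}}{\iota,\kappa\in\E,\ f,g\in\G,\ \iota(f)\neq\kappa(g)}$. Since $\G$ is finite, it suffices to handle each unordered pair $f,g\in\G$ (including $f=g$) separately and take the union of the resulting finite presentations. So fix $f,g\in\G$ and focus on $\setof{\Sof{\iota(f),\kappa(g)}}{\iota,\kappa\in\E,\ \iota(f)\neq\kappa(g)}$.

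\textbf{Key observation: $S$-polynomials only depend on the variables occurring in $f$ and $g$.} By Lemma \ref{lem:restr}, $\iota(f)$ depends only on $\restr\iota{\vars f}$ and $\kappa(g)$ only on $\restr\kappa{\vars g}$; hence $\Sof{\iota(f),\kappa(g)}$ depends only on the pair of local embeddings $(\restr\iota{\vars f},\restr\kappa{\vars g})\in\embsetprod{\vars f}{\vars g}$. Moreover the construction commutes with the action of embeddings: for $\lambda\in\E$ one has $\lambda(\Sof{\iota(f),\kappa(g)}) = \Sof{\lambda\iota(f),\lambda\kappa(g)}$, which corresponds precisely to the action $\lambda\cdot(\pi,\pi') = (\lambda\circ\pi,\lambda\circ\pi')$ on $\embsetprod{\vars f}{\vars g}$ introduced before Lemma \ref{lem:embsetprod}. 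So the map $(\pi,\pi')\mapsto \Sof{\pi(f),\pi'(g)}$ (defined whenever $\pi(f)\neq\pi'(g)$) is an equivariant surjection from (a subset of) $\embsetprod{\vars f}{\vars g}$ onto $\setof{\Sof{\iota(f),\kappa(g)}}{\iota,\kappa\in\E,\ \iota(f)\neq\kappa(g)}$.

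\textbf{Putting it together.} By assumption \CA 2, since $\WW$ is computable we can compute a presentation $F_{f,g}\subseteqfin\embsetprod{\vars f}{\vars g}$, i.e.\ a finite set with $\embsetprod{\vars f}{\vars g} = \E(F_{f,g})$ (Lemma \ref{lem:embsetprod}). Define
\[
\Pres\G \ = \ \setof{\Sof{\pi(f),\pi'(g)}}{f,g\in\G,\ (\pi,\pi')\in F_{f,g},\ \pi(f)\neq\pi'(g)}.
\]
This is a finite set (finite union of finite sets), and each element is explicitly computable: given $(\pi,\pi')$ we form $\pi(f)$, $\pi'(g)$, compute their leading monomials and the least common multiple, and carry out the two coefficient divisions in the computable field $\R$; the condition $\pi(f)\neq\pi'(g)$ is decidable. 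It remains to check $\Ssub\G = \E(\Pres\G)$. The inclusion $\supseteq$ is immediate since each generator of $\Pres\G$ lies in $\Ssub\G$ and $\Ssub\G$ is equivariant. For $\subseteq$, take any $\Sof{\iota(f),\kappa(g)}\in\Ssub\G$; then $(\restr\iota{\vars f},\restr\kappa{\vars g})\in\embsetprod{\vars f}{\vars g} = \E(F_{f,g})$, so there are $\lambda\in\E$ and $(\pi,\pi')\in F_{f,g}$ with $\lambda\circ\pi = \restr\iota{\vars f}$ and $\lambda\circ\pi' = \restr\kappa{\vars g}$; using Lemma \ref{lem:restr} and the commutation of $\Sof{\cdot,\cdot}$ with embeddings we get $\Sof{\iota(f),\kappa(g)} = \lambda(\Sof{\pi(f),\pi'(g)})\in\E(\Pres\G)$ (note $\pi(f)\neq\pi'(g)$ since $\iota(f)\neq\kappa(g)$). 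This completes the argument.

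\textbf{Main obstacle.} The only genuinely non-trivial input is Lemma \ref{lem:embsetprod} together with computability assumption \CA 2 — that $\embsetprod{\vars f}{\vars g}$ is orbit-finite \emph{and} that a finite presentation of it can be computed. Everything else is bookkeeping: the equivariance and finiteness of the resulting set follow formally, and computability of individual $S$-polynomials is routine once a presentation of the relevant set of embedding-pairs is in hand. A minor point to be careful about is the case $f=g\in\G$, where $\iota(f)$ and $\kappa(f)$ may or may not coincide; the condition $\pi(f)\neq\pi'(g)$ already takes care of discarding the degenerate pairs.
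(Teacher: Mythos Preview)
Your proof is correct and follows essentially the same approach as the paper: decompose $\Ssub\G$ over pairs $f,g\in\G$, use the computable presentation $F_{f,g}$ of $\embsetprod{\vars f}{\vars g}$ supplied by assumption \CA 2 (via Lemma~\ref{lem:embsetprod}), and take $\Pres\G$ to be the finite set of $S$-polynomials $\Sof{\pi(f),\pi'(g)}$ for $(\pi,\pi')\in F_{f,g}$ with $\pi(f)\neq\pi'(g)$. The two inclusions are argued identically, via equivariance of $\Ssub\G$ and commutation of $\Sof{\cdot,\cdot}$ with embeddings.
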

\begin{proof}
We start by showing orbit-finiteness of $\Ssub\G$.
As $\G$ is finite and $\Ssub \G = \bigcup_{g,g'\in\G} \Ssub {g,g'}$, where
\[
\Ssub {g,g'} \ = \  \setof{\Sof{\iota(g),\iota'(g')}}{\iota, \iota' \in \E, \ \iota(g)\neq\iota'(g')},
\]
it is enough to prove that $\Ssub {g,g'}$ is orbit-finite for every two fixed polynomials $g,g'\in\G$.
Let $\Vs = \vars g$ and $\Vs' = \vars {g'}$, and
let $F_{g,g'}$ be a finite presentation of $\embsetprod {\Vs} {\Vs'}$
(recall Lemma \ref{lem:embsetprod}).
%Let $E$ denote the set of all pairs of local embeddings $(\iota, \iota')$ that map $\VV$ and $\VV'$ to arbitrary but
%the same structure $\UU$ in F:
%\[
%E \ = \ \setof{(\iota, \iota')}{\UU\in F, \ \iota\in\embset\VV{\UU}, \ \iota'\in\embset{\VV'}{\UU}}.
%\]
Relying on Lemma \ref{lem:restr},
for a local embedding $\pi\in\embset\Vs$ we write $\pi(g)$ to mean the renaming of $g$
by \emph{any} extension of $\pi$ to an embedding of $\WW$.
Likewise we write $\pi'(g')$, for $\pi'\in\embset{\Vs'}$:
\[
\Ssub {g,g'} \ = \  \setof{\Sof{\pi(g),\pi'(g')}}{(\pi, \pi') \in \embsetprod \Vs {\Vs'}, \ \pi(g)\neq\pi'(g')}.
\]
Finally, we define $\Pres{g,g'}\subseteq \Ssub {g,g'}$ as the subset of $\Ssub {g,g'}$ obtained by restricting 
to local pairs of embeddings $(\pi, \pi')$ belonging to $F_{g,g'}$:
\[
\Pres{g,g'} \ = \ \setof{\Sof {\pi(g), \pi'(g')}}{(\pi,\pi') \in F_{g,g'}, \ \pi(g)\neq\pi'(g')}.
\]
As $F$ is finite, $\Pres{g,g'}$ is finite as well, 
and orbit-finiteness of $\Ssub {g,g'}$ follows once we prove that
$\Pres{g,g'}$ is a presentation of $\Ssub{g,g'}$:
\begin{claim}
$\Ssub {g,g'} = \E(\Pres{g,g'})$.
% \defeq \setof{\kappa(f)}{\kappa\in\E, \, f\in \Pres{\set{g,g'}}}$.
\end{claim}
\begin{claimproofnew}
For proving the inclusion
$\E(\Pres{g,g'}) \subseteq \Ssub {g,g'}$,
we take any polynomial $\kappa(f)\in \E(\Pres{g,g'})$, where $\kappa \in \E$ and $f = \Sof {\pi(g), \pi'(g')} \in \Pres{g,g'}$,
and argue that $\kappa(f) \in \Ssub {g,g'}$.
As action of embeddings commutes with S-polynomials, 
we have
\[
\kappa(\Sof {\pi(g), \pi'(g')}) \ = \ \Sof {\kappa(\pi(g)), \kappa(\pi'(g'))}.
\]
As $\kappa$ is injective, 
knowing $\pi(g)\neq\pi'(g')$, we also know $\kappa(\pi(g))\neq\kappa(\pi'(g'))$,
and therefore $\kappa(f) \in \Ssub {g,g'}$.

The opposite inclusion $\Ssub {g,g'} \subseteq \E(\Pres{g,g'})$ is shown similarly.
We take any polynomial $\Sof {\pi(g), \pi'(g')} \in \Ssub {g,g'}$.
As $F_{g,g'}$ is a presentation of $\embsetprod \Vs{\Vs'}$, there is some 
$(\overline\pi, \overline\pi')\in F_{g,g'}$ and
$\kappa\in\E$ such that
$(\pi,\pi') = (\kappa\circ\overline\pi, \kappa\circ\overline\pi')$.
Knowing $\kappa(\pi(g))\neq\kappa(\pi'(g'))$, we also know $\pi(g)\neq\pi'(g')$,
and therefore $\Sof {\overline\pi(g), \overline\pi'(g')} \in \Pres{g,g'}$.
As action of embeddings commutes with S-polynomials, we get
\[
\Sof {\pi(g), \pi'(g')} \ = \ 
\kappa(\Sof {\overline\pi(g), \overline\pi'(g')}),
\]
and in consequence $\Sof {\pi(g), \pi'(g')}\in\E(\Pres{g,g'})$.
\end{claimproofnew}

%\smallskip

Union of all the sets $\Pres{g,g'}$ yields therefore a presentation of $\Ssub \G$:
\[
\Pres{\G} \ \defeq \ \bigcup_{g,g'\in\G}\Pres{g,g'}.
\]
It is moreover computable:
for every $g,g'\in\G$ the finite set $F_{g,g'}$ is computable, by assumption \CA 2,
and therefore the set $\Pres{g,g'}$ is computable as well.
\end{proof}

\begin{lemma}[Orbit-finite \Bu criterion] \label{lem:equivariant buch cri}
A finite set $\G \subseteq \polyring{\R}{\WW}$ is a \Gr basis 
if and only if for all $h \in \Ssub{\G}$,
\[
\redstar h \G \zeropol.
\]
\end{lemma}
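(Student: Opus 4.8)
The plan is to follow the classical proof of Buchberger's criterion (e.g.~\cite{Cox15}), the only new ingredient being careful bookkeeping of embeddings so that the $S$-polynomials produced along the way are recognised as members of $\Ssub\G$. The forward implication is immediate: if $\G$ is a \Gr basis and $h\in\Ssub\G$, then $h=\Sof{\iota(f),\kappa(g)}$ for some $\iota,\kappa\in\E$, $f,g\in\G$ with $\iota(f)\neq\kappa(g)$, and expanding Definition~\ref{def:S-poly} exhibits $h$ as a combination of $\iota(f),\kappa(g)\in\E(\G)$ with term coefficients, so $h\in\eqidealgen\G$; some sequence $\redstar h\G{f'}$ exists by Lemma~\ref{lem:red terminates}, and $f'=\zeropol$ by Lemma~\ref{lem:Gr iff}, (1)$\Rightarrow$(3).

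For the converse, assume $\redstar h\G\zeropol$ for every $h\in\Ssub\G$ and fix $f\in\eqidealgen\G$ with $f\neq\zeropol$. First I record a consequence of $\redstar{\tilde f}\G\zeropol$ used repeatedly: since each division step $\red{\tilde f_1}\G{\tilde f_2}$ writes $\tilde f_1=r\cdot h\cdot\iota(g)+\tilde f_2$ with head $\lm{r\cdot h\cdot\iota(g)}$ occurring in $\tilde f_1$ (hence $\lexorder\lm{\tilde f_1}$) and with $\lm{\tilde f_2}\lexorder\lm{\tilde f_1}$, summing along a reduction sequence from $\tilde f$ to $\zeropol$ yields a representation $\tilde f=\sum_i r_i\cdot h_i\cdot\iota_i(g_i)$ in which $\lm{h_i\cdot\iota_i(g_i)}\lexorder\lm{\tilde f}$ for all $i$. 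Now write $f=\sum_{i=1}^{n} r_i\cdot h_i\cdot\iota_i(g_i)$ with $r_i\in\R\setminus\{0\}$, $h_i\in\monomials\WW$, $\iota_i\in\E$, $g_i\in\G$; merging summands with equal $h_i\cdot\iota_i(g_i)$ we may assume these products pairwise distinct. Let $m=\max_i\lm{h_i\cdot\iota_i(g_i)}=\max_i h_i\cdot\iota_i(\lm{g_i})$, the maximum w.r.t.\ $\lexorder$. Since $\WW$ is well ordered, $\lexorder$ on $\monomials\WW$ is well founded, so we may pick such a representation of $f$ with $m$ $\lexorder$-minimal. If $m=\lm f$, then the coefficient of $m$ in $f$ is nonzero, hence $\lm f=h_i\cdot\iota_i(\lm{g_i})$ for some $i$, so $\iota_i(\lm{g_i})\divorder\lm f$ and $\lm{g_i}\wpo\lm f$, as desired.

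It remains to rule out $\lm f\lexorderstrict m$. Let $I=\setof{i}{\lm{h_i\cdot\iota_i(g_i)}=m}$; as the coefficient of $m$ in $f$ is then $0$ and equals $\sum_{i\in I}r_i\cdot\lc{g_i}$ (each summand nonzero), we get $\size I\geq2$. Fix an enumeration of $I$. For consecutive $j,k\in I$ the difference $\frac{m}{\lt{\iota_j(g_j)}}\iota_j(g_j)-\frac{m}{\lt{\iota_k(g_k)}}\iota_k(g_k)$ equals $u_{jk}\cdot\Sof{\iota_j(g_j),\iota_k(g_k)}$ for the monomial $u_{jk}$ with $m=u_{jk}\cdot\lcm{\lm{\iota_j(g_j)}}{\lm{\iota_k(g_k)}}$, and by distinctness of the products within $I$ we have $\iota_j(g_j)\neq\iota_k(g_k)$, so $\Sof{\iota_j(g_j),\iota_k(g_k)}\in\Ssub\G$. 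Abel (telescoping) summation rewrites $\sum_{i\in I}r_i\cdot h_i\cdot\iota_i(g_i)$ as an $\R$-linear combination of these differences. Now $\lm{\Sof{\iota_j(g_j),\iota_k(g_k)}}\lexorderstrict\lcm{\lm{\iota_j(g_j)}}{\lm{\iota_k(g_k)}}$ (leading terms cancel in the $S$-polynomial), and by hypothesis $\redstar{\Sof{\iota_j(g_j),\iota_k(g_k)}}\G\zeropol$, so the representation recorded above, multiplied through by $u_{jk}$ (using Lemmas~\ref{lem:lex is to} and~\ref{lem:ltcdot}), rewrites each $u_{jk}\cdot\Sof{\iota_j(g_j),\iota_k(g_k)}$ as a sum of terms times elements of $\E(\G)$ all of whose leading monomials are $\lexorderstrict m$ (if the $S$-polynomial is $\zeropol$ this contributes nothing). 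Substituting back, together with the summands $i\notin I$ (whose leading monomials are already $\lexorderstrict m$), produces a representation of $f$ with maximal leading monomial $\lexorderstrict m$, contradicting minimality of $m$. Hence $m=\lm f$ and we are done.

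The proof is essentially classical; the only points requiring care are that the strengthened \emph{well-ordered} assumption on $\WW$ (rather than merely total order) is exactly what makes $\lexorder$ well founded on $\monomials\WW$, so that a $\lexorder$-minimal representation of $f$ exists, and that one must verify the $S$-polynomials fed to the hypothesis genuinely lie in $\Ssub\G$ — which is why identical summands are merged first, ensuring the two polynomials in each $\Sof{\cdot,\cdot}$ are distinct. I do not expect any further obstacle beyond this bookkeeping.
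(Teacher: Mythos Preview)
Your proof is correct and takes essentially the same approach as the paper. The paper observes that $\G$ is a \Gr basis in its sense iff the orbit-finite set $\E(\G)$ is a \Gr basis in the classical sense, and that $\Ssub\G$ is exactly the set of classical $S$-polynomials of pairs from $\E(\G)$, so it simply invokes the classical Buchberger criterion (which remains valid for infinite generating sets, using well-foundedness of $\lexorder$ from the well-order assumption); you have instead written out that classical argument in full, with the necessary embedding bookkeeping.
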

\begin{proof}
According to our definition, 
a finite set $\G\subseteq \polyring \R \WW$ of polynomials is a \Gr basis
if and only if the \emph{orbit-finite} set $\E(\G)$
is a \Gr basis in the classical sense.
As the set $\Ssub \G$ contains exactly $S$-polynomials of pairs of polynomials from $\E(\G)$,
we reuse the proof of classical \Bu criterion 
(which is still valid for infinite sets,
see e.g.~\cite[Chapter~2]{Cox15})
%by applying it to the orbit-finite set $\E(\G)$ we 
to deduce the lemma.
(This is actually the only step which requires the stengthening of assumptions, i.e.,
well ordered $\WW$.)
\end{proof}

\begin{lemma}[Finite \Bu criterion] \label{lem:effective buch cri}
A finite set $\G \subseteq \polyring{\R}{\WW}$ is a \Gr basis 
if and only if for all $h \in \Pres{\G}$,
\[
\redstar h \G \zeropol.
\]
\end{lemma}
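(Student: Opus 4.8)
The plan is to derive the Finite Buchberger criterion from the Orbit-finite Buchberger criterion (Lemma~\ref{lem:equivariant buch cri}) by showing that the reduction condition over the full set $\Ssub{\G}$ is equivalent to the reduction condition over its finite presentation $\Pres{\G}$. One direction is immediate: since $\Pres{\G} \subseteq \Ssub{\G}$, if every $h \in \Ssub{\G}$ reduces to $\zeropol$, then in particular every $h \in \Pres{\G}$ does. The substance is the converse.

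For the converse, suppose $\redstar h \G \zeropol$ for all $h \in \Pres{\G}$, and take an arbitrary $h \in \Ssub{\G}$. By Lemma~\ref{lem:min finite}, $\Ssub{\G} = \E(\Pres{\G})$, so $h = \kappa(h_0)$ for some $h_0 \in \Pres{\G}$ and $\kappa \in \E$. The key observation is that division steps commute with the action of embeddings: if $\red {f} \G {f'}$ via the identity $f = r \cdot g \cdot \iota(p) + f'$ with $p \in \G$, then applying $\kappa$ gives $\kappa(f) = r \cdot \kappa(g) \cdot (\kappa \circ \iota)(p) + \kappa(f')$, and since $\kappa$ is injective it preserves which monomial is the head and preserves equality of coefficients, so $\red {\kappa(f)} \G {\kappa(f')}$. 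Iterating, a reduction sequence $\redstar {h_0} \G \zeropol$ maps under $\kappa$ to a reduction sequence $\redstar {h} \G {\kappa(\zeropol)} = \redstar h \G \zeropol$ (note $\kappa(\zeropol) = \zeropol$, and $\zeropol$ is reduced). Hence $\redstar h \G \zeropol$ for all $h \in \Ssub{\G}$, and Lemma~\ref{lem:equivariant buch cri} concludes that $\G$ is a \Gr basis.

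I would record the ``division steps commute with embeddings'' fact as the central lemma of the argument — it is the only place where one must check that the technical side-conditions in the definition of $\red{\cdot}\G{\cdot}$ (namely that the head monomial appears in $f$ with the prescribed coefficient) are stable under renaming, which follows because embeddings act injectively on variables and hence bijectively on the monomials occurring in any fixed polynomial, and because $\lm{\cdot}$ and $\vars{\cdot}$ commute with the action of $\E$ (already used repeatedly, cf.\ Lemma~\ref{lem:ltcdot}). The main obstacle, such as it is, is purely bookkeeping: making sure that an \emph{infinite-length} reduction is not accidentally introduced — but this is a non-issue since we only transport the given \emph{finite} reduction sequence $\redstar {h_0} \G \zeropol$ step by step, and finiteness is preserved. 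No appeal to Lemma~\ref{lem:red terminates} is even needed for this direction.
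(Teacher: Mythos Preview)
Your proposal is correct and follows essentially the same route as the paper: both directions are reduced to Lemma~\ref{lem:equivariant buch cri}, with the nontrivial implication handled by writing an arbitrary $h\in\Ssub\G$ as $\kappa(h_0)$ for some $h_0\in\Pres\G$ (via Lemma~\ref{lem:min finite}) and transporting the finite reduction sequence $\redstar{h_0}\G\zeropol$ along $\kappa$ using the observation that division steps are preserved by the action of embeddings. Your added justification for why the side-conditions of $\red{\cdot}\G{\cdot}$ survive under $\kappa$ is a welcome bit of detail that the paper leaves implicit.
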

\begin{proof}
The 'only if' direction follows immediately by Lemma \ref{lem:equivariant buch cri}.
For the 'if' direction, suppose that $\red h \G \zeropol$ for all $h \in \Pres{\G}$.
Consider any $h' \in \Ssub \G$, i.e.~$h' = \iota(h)$ for some $\iota\in\E$ 
and $h\in \Pres \G$.
By assumption, $\redstar h \G \zeropol$.
We observe that the action of $\E$ (i.e.,~renaming of variables) preserves division steps, namely
\[
\red f \G f' \implies \red {\iota(f)} \G {\iota(f')},
\]
and therefore $\redstar {\iota(h)} \G {\iota(\zeropol) = \zeropol}$.
As $h'\in\Ssub\G$ is arbitrary, 
using Lemma \ref{lem:equivariant buch cri} we conclude that
$\G$ is a \Gr basis.
\end{proof}

Algorithm \ref{algo:buch} is an adaptation of the classical \Bu algorithm to our setting.
The adaptation amounts to using only polynomials from 
$\Pres \G$ instead
of all S-polynomials.

%\begin{lemma}\label{lem:S-poly fin gen}
%For any $f,g \in \polyring{\Qf}{\W,d}$ there exists a computable finite set $\sS(f,g)$ such that
%$\bS(f,g) = \inc(\W)\sS(f,g)$.
%\end{lemma}
%%
%\begin{proof}
%\arka{TODO. Basic idea should be interlacing}
%\end{proof}
%

%\vspace{-2mm}

%\begin{minipage}{#1\linewidth}
\begin{algorithm}[H]
\caption{\textsc{Equivariant \Bu algorithm}}%[H]                                                                                                                                                             
\label{algo:buch}
\begin{algorithmic}[0]
\State \textbf{Input:} \ A finite set of polynomials $\G\subseteqfin \polyring \R \WW$.
\medskip
\Repeat
%\State compute $\Pres \G$
\smallskip
\For{$f \in \Pres \G$}
\State compute some reduced $h\in\polyring\R\WW$ such that $\redstar f \G h$ 
%\State \Comment{$r$ is reduced}
\If{$h\neq\zeropol$}
 $\G \gets \G\cup \set{h}$ % \set{\frac 1 {\lc r}\cdot r}$
%\Comment{$\frac 1 {\lc r} \cdot r$ is monic}
\EndIf 
\EndFor
\Until{$\G$ stabilizes}
\medskip
\State \textbf{Output:} \ $\G$
\end{algorithmic}
\end{algorithm}                                                                                                                                                                  
%\end{minipage}

%
%%\begin{algorithm}[Equivariant Buchberger]\label{algo:buch}
%$ $\\
%Input : $H$, a finite set of polynomials in $\polyring{\Qf}{\W,d}$ \\
%Algorithm :
%\begin{enumerate}
%\item[] While $S \neq \emptyset$ do :
%  \begin{enumerate}
%  \item[] $S = \sS(H)$
%  \item[] $S_{temp} = \emptyset$
%  \item[] for $f \in S$ :
%    \begin{enumerate}
%    \item[] reduce $f$ using $H$ to get $f'$
%    \item[] if $f' \neq 0$ :
%      \begin{enumerate}
%      \item[] $S_{temp} = S_{temp}\cup\{f'\}$     
%      \end{enumerate}  
%    \end{enumerate}
%  \item[] $S = S_{temp}$
%  \item[] $H = H\cup S$ 
%  \end{enumerate}
%  Output $H$
%\end{enumerate}
%%\end{algorithm}
%

%\vspace{-3mm}

\noindent
The above description is indeed an algorithm as, relying on our computability assumption \CA 1, we prove that
one can compute some division step from a given polynomial $f$: 
%

%\arka{This lemma does not hold since $\setof{r\in\polyring\R\WW}{\red f \G r}$ can be infinite.
%Fortunately, it is only used in the proof of Theorem \ref{thm:ideal memb decid} where it is not needed}
\begin{lemma}[Computability] \label{lem:alg}
Given $f\in\polyring\R\WW$ and $\G\subseteqfin\polyring\R\WW$,
one can check if $f$ is reduced wrt.~$\G$ and,
if $f$ is not, compute some $f'\in\polyring\R\WW$ such that $\red f \G f'$. 
%compute the set $\setof{r\in\polyring\R\WW}{\red f \G r}$.
\end{lemma}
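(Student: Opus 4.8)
The plan is to reduce the task to finitely many applications of computability assumption \CA{1}. Recall that $\red f \G {f'}$ means $f = r\cdot h\cdot\iota(g) + f'$ where the leading monomial $\lm{r\cdot h\cdot\iota(g)} = h\cdot\iota(\lm g)$ appears in $f$ with coefficient $r\cdot\lc g$. Thus $f$ is \emph{not} reduced with respect to $\G$ precisely when there exist $g\in\G$, an embedding $\iota\in\E$, and a monomial $h\in\monomials\WW$ such that the monomial $h\cdot\iota(\lm g)$ actually appears in $f$ (with some nonzero coefficient $c$); in that case we may take $r = c / \lc g$ and set $f' := f - r\cdot h\cdot\iota(g)$. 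So the whole problem is to decide, for each fixed $g\in\G$ and each of the finitely many monomials $m$ occurring in $f$, whether $m$ is of the form $h\cdot\iota(\lm g)$ for some $h\in\monomials\WW$ and $\iota\in\E$; equivalently, whether $\lm g \wpo m$ in the sense of \eqref{eq:wpodef}. Since $\G$ is finite and $f$ has finitely many monomials, only finitely many such tests are needed, and any witness $(\iota,h)$ yields a division step by the formula above, which is computable since $\R$ is a computable field and the renaming $\iota(g)$ and the product $h\cdot\iota(g)$ are effectively computable once $\iota$ is known on $\vars g$ (using Lemma \ref{lem:restr}).

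Next I would show how to carry out each individual test $\lm g \wpo m$ effectively. Write $u = \vars{\lm g} = \vars g$ (a finite subset of $\W$) and $v = \vars m$ (also finite). If there is a witnessing embedding $\iota$ with $\iota(\lm g)\divorder m$, then necessarily $\iota(u)\subseteq v$, so the restriction $\pi = \restr\iota u$ is a local embedding in $\embset u$ whose image lies inside the finite set $v$. There are only finitely many functions $\pi : u\to v$, and for each one assumption \CA{1} lets us decide whether $\pi\in\embset u$, i.e.\ whether $\pi$ extends to a genuine embedding of $\WW$. For each $\pi\in\embset u$ with $\pi(u)\subseteq v$ we form the monomial $\pi(\lm g)$ (well defined on variables by Lemma \ref{lem:restr}) and check the purely arithmetic condition $\pi(\lm g)\divorder m$, namely $\degpol{\pi(\lm g)}{w} \le \degpol m w$ for every $w\in v$; if this holds we set $h := m / \pi(\lm g)$ and we have found a division step. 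If no $g\in\G$ and no monomial $m$ of $f$ produce such a witness, then $f$ is reduced.

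The main obstacle is purely conceptual rather than technical: one must argue that it suffices to search over \emph{local} embeddings into the finite target set $\vars m$, rather than over the (infinite) set $\E$ of all embeddings of $\WW$. This is exactly what Lemma \ref{lem:restr} and the inclusion $\iota(\vars g)\subseteq\vars m$ give us: the action of $\iota$ on $g$, on $\lm g$, and on the product $h\cdot\iota(g)$ depends only on $\restr\iota{\vars g}$, so enumerating the finitely many candidate functions $u\to v$ and filtering them through \CA{1} is a complete search. Everything else -- the finitely many divisibility checks on exponent vectors, the field arithmetic to compute $r$ and the coefficients of $f' = f - r\cdot h\cdot\iota(g)$ -- is routine and relies only on $\R$ being a computable field and on monomials having an obvious finite representation. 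Hence the procedure terminates and correctly either reports that $f$ is reduced or outputs a valid $f'$ with $\red f \G {f'}$.
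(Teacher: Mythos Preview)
Your overall strategy is the same as the paper's, but there is a genuine gap: the equality $\vars{\lm g} = \vars g$ that you assert is false in general. The leading monomial of $g$ need not use every variable of $g$. For a concrete instance in $\WW=\omega$, take $g = x_1^2 + x_0$ (with $x_0 < x_1$), so $\lm g = x_1^2$, $\vars{\lm g}=\{x_1\}$, $\vars g=\{x_0,x_1\}$, and let $f = x_5^2$. Your procedure enumerates functions $\pi:\{x_0,x_1\}\to\{x_5\}$; there is no injection from a two-element set to a one-element set, so you declare $f$ reduced. But $f$ is \emph{not} reduced: the embedding $\iota$ with $\iota(x_1)=x_5$, $\iota(x_0)=x_3$ gives $\iota(\lm g)=x_5^2\divorder f$, yielding the division step $f' = f - \iota(g) = -x_3$. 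Your claimed inclusion $\iota(\vars g)\subseteq\vars m$ in the final paragraph is therefore unjustified; only $\iota(\vars{\lm g})\subseteq\vars m$ follows from $\iota(\lm g)\divorder m$.

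The paper handles this in two stages. First it searches, exactly as you do, over the finitely many functions $\pi:\vars{\lm g}\to\vars m$, using \CA 1 to test extendability and checking $\pi(\lm g)\divorder m$. This correctly decides whether $f$ is reduced. If some $\pi$ succeeds, a second step is still needed to compute $\iota(g)$: the paper enumerates (countably many) candidate extensions $\pi':\vars g\to\W$ of $\pi$ and tests each with \CA 1, stopping at the first one that extends to an embedding; termination is guaranteed because $\pi$ itself was chosen to extend to some $\iota\in\E$, and $\restr{\iota}{\vars g}$ is then such a $\pi'$. Adding this extension step (and dropping the false identity $\vars{\lm g}=\vars g$) fixes your argument.
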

\begin{proof}
For every polynomial $g\in\G$ and monomial $m$ of $f$,
we compute the set $D_{g,m}$ of those 
functions $\pi : \vars {\lm g} \to \vars m$ 
that extend to an embedding in $\E$
(which is checked using assumption \CA 1) and satisfy $\pi(\lm g) \divorder m$.
If $D_{g,m} = \emptyset$ for all $g$ and $m$, the polynomial $f$ is reduced wrt.~$\G$.
Otherwise we pick up arbitrary function $\pi  \in D_{g,m}$,
and compute some its extension
 $\pi' : \vars g \to \W$ to $\vars g$ that still extends
to an embedding in $\E$.
This is doable by enumerating all (countably many) candidates $\pi':\vars g \to \W$ and checking them
(again relying on assumption \CA 1).
Termination is guaranteed as $\pi$ does extend to an embedding.
Finally, we divide 
$m$ by $\pi'(\lm g)$ which yields a monomial $h$ satisfying
$
m \ = \ h  \cdot \pi'(\lm g),
$
and compute the remainder $f' = f - r\cdot {\lc g}^{-1} \cdot h\cdot \pi'(g)$, where
$r\in\R$ is the coefficient of $m$ in $f$.
\end{proof}

The algorithm is nondeterministic, namely it may choose different 
sequences of division steps $\redstar f \G h$ and hence different reduced polynomials $h$, 
and therefore it may have many different runs for the same input $\G$.
Nevertheless, we prove termination of its every run, and correctness of output.

\begin{lemma}[Termination]\label{lem:buch stops}
Every run of Algorithm \ref{algo:buch} terminates on every input $\G\subseteq\polyring \R \WW$.
\end{lemma}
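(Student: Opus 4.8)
I would argue by \emph{reductio ad absurdum}, deriving from a non-terminating run an infinite sequence of monomials with no dominating pair, which contradicts the fact that $(\monomials\WW,\wpo)$ is a \wqo (this holds because $\WW$ is $\omega$-well structured; see Section~\ref{sec:equiv noeth}, or restrict Lemma~\ref{lem:extwqo} to pairs of the form $(f,\vars f)$). First I would observe that a \emph{single} pass of the \textbf{Repeat} loop always terminates: the set $\Pres\G$ is finite and computable by Lemma~\ref{lem:min finite}, so the inner \textbf{For} loop has finitely many iterations, and by Lemma~\ref{lem:red terminates} each reduction $\redstar f \G h$ computed inside it terminates. Hence a non-terminating run must execute the body of the \textbf{Repeat} loop infinitely often; and since the run never stops, i.e.\ $\G$ never stabilises, each such pass must enlarge $\G$. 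Thus infinitely many polynomials are appended to $\G$; list them $h_1,h_2,h_3,\ldots$ in the order in which they are added. Each $h_k$ is nonzero and (being a genuine addition) distinct from $h_1,\ldots,h_{k-1}$ and from the input polynomials.

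The crux is that each $h_k$ is \emph{reduced} with respect to the value of $\G$ at the moment it is produced, and that this value already contains $h_1,\ldots,h_{k-1}$. Unfolding the definition of a division step: a step $\red{h_k}\G{f'}$ with divisor taken from $\E(\{g\})$, $g\in\G$, exists precisely when some monomial $m$ occurring in $h_k$ satisfies $\iota(\lm g)\divorder m$ for some $\iota\in\E$, that is, when $\lm g \wpo m$. Since $h_k$ is reduced with respect to the current $\G$, this fails for every $g$ in $\G$ and every monomial $m$ of $h_k$; instantiating $g=h_i$ (with $i<k$, so $h_i$ is in $\G$) and $m=\lm{h_k}$ gives
\[
\lm{h_i}\ \not\wpo\ \lm{h_k}\qquad\text{for all } i<k .
\]
Consequently $\lm{h_1},\lm{h_2},\ldots$ is an infinite sequence of monomials containing no dominating pair, contradicting the fact that $\wpo$ is a \wqo on $\monomials\WW$. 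Therefore no run of Algorithm~\ref{algo:buch} is infinite.

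The only mildly delicate point — and the step I expect to need the most care — is the bookkeeping behind the previous paragraph: one must check that every polynomial appended to $\G$ is reduced with respect to \emph{all} previously appended ones, not merely with respect to the original input or to those appended in earlier passes of the \textbf{Repeat} loop. This is immediate once one notes that the instruction ``compute some reduced $h$ such that $\redstar f \G h$'' always refers to the \emph{current} (monotonically growing) value of $\G$, which by construction contains $h_1,\ldots,h_{k-1}$ at the instant $h_k$ is produced; and reducedness with respect to a set is inherited by all of its subsets. Everything else is a direct appeal to Lemmas~\ref{lem:min finite} and \ref{lem:red terminates} and to the \wqo property of $\wpo$.
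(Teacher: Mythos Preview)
Your proposal is correct and follows essentially the same route as the paper: assume non-termination, extract the sequence of leading monomials of the added polynomials, observe that reducedness of each newly added $h$ forbids $\lm{h_i}\wpo\lm{h_k}$ for $i<k$, and contradict the \wqo property of $(\monomials\WW,\wpo)$. The paper's proof is slightly coarser in that it picks only one monomial per outer-loop iteration (so it only needs reducedness with respect to $\G_n$, the value of $\G$ at the start of the iteration), whereas you list all added polynomials and rely on the reading that reduction is taken with respect to the \emph{current} $\G$; both are valid and the underlying idea is the same.
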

\begin{proof}
Towards contradiction, suppose some run of the algorithm is infinite.
%Let $r_1, r_2, \ldots$ be reduced polynomials added in consecutive iterations to $\G$,
%and let $m_1, m_2, \ldots$ be their leading monomials, namely $m_i = \lm {f_i}$.
Let $\G_n \subseteq \polyring \R \WW$ 
be the value of $\G$ after $n$ iterations of the while loop.
Let $\M_n = \setof{\lm{g}}{g \in \G_n} \subseteq \monomials \WW$.
For every $n$ we have $\M_n \subseteq \M_{n+1}$.
As every polynomial $h$ added in $(n+1)$th iteration is reduced with respect to $\G_{n}$,
we have $\lm{h} \notin \M_n$, and hence
$\M_n \subsetneq \M_{n+1}$ for every $n$.
Therefore we can form an infinite sequence by picking up, for every $n$, a monomial
$m_n\in \M_{n+1} \setminus \M_n$.
This sequence is \emph{bad}, i.e.~$m_i \not\wpo m_j$ for every $i<j$, which is in
contradiction with $\wpo$ being a \wqo.
%well quasi order (cf.~Lemma \ref{lem:extwqo}).
\end{proof}

\begin{lemma}[Correctness]\label{lem: buch correct}
Every run of Algorithm \ref{algo:buch} computes, given an input $\G\subseteq\polyring\R\WW$,
a \Gr basis of $\eqidealgen\G$.
%If $G$ is the output of the equivariant buchberger's algorithm (Algorithm \ref{algo:buch}) on input $H$,
%then $G$ is a Gr\"obner basis of the equivariant ideal generated by $H$.
\end{lemma}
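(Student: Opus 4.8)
The plan is to verify two properties of the set $\G$ returned by a (necessarily terminating, by Lemma \ref{lem:buch stops}) run on input $\G_0$: that the generated equivariant ideal is unchanged, $\eqidealgen\G = \eqidealgen{\G_0}$; and that $\G$ is a \Gr basis. The first is an invariant maintained throughout the run, the second follows from the stopping condition via the Finite Buchberger criterion (Lemma \ref{lem:effective buch cri}).

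For the invariant, I would track the successive values $\G_0 \subseteq \G_1 \subseteq \cdots \subseteq \G$ that the working set takes at the successful augmentation steps, and show $\eqidealgen{\G_{i+1}} = \eqidealgen{\G_i}$. Whenever a polynomial $h$ is adjoined to the current set $\G'$ it was obtained as a reduced form $\redstar f {\G'} h$ of some $f \in \Pres{\G'}$. Now $\Pres{\G'} \subseteq \Ssub{\G'} \subseteq \eqidealgen{\G'}$: the first inclusion is Lemma \ref{lem:min finite}, and the second holds because every $S$-polynomial $\Sof{p,q}$ with $p, q \in \E(\G')$ is a polynomial combination of $p$ and $q$, both of which lie in $\eqidealgen{\G'}$. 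Hence $f \in \eqidealgen{\G'}$, and iterating Lemma \ref{lem:red pres G} along the division steps of $\redstar f {\G'} h$ yields $h \in \eqidealgen{\G'}$; adjoining an element already in the ideal does not enlarge it, so $\eqidealgen{\G' \cup \{h\}} = \eqidealgen{\G'}$. Induction along the run then gives $\eqidealgen\G = \eqidealgen{\G_0}$.

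For the second property, the stopping condition---that a whole pass of the loop leaves the working set unchanged---means that for every $f$ in the (now fixed) set $\Pres\G$ the reduced polynomial $h$ computed with $\redstar f \G h$ satisfies $h = \zeropol$: otherwise $h$ would be a strictly new element of $\G$, because a nonzero polynomial reduced with respect to $\G$ cannot belong to $\G$ (any $p \in \G \setminus \{\zeropol\}$ admits the division step $\red p \G \zeropol$, taking $g = p$, multiplier $\zeromon$, and the identity embedding), and it would therefore have been adjoined. Thus $\redstar f \G \zeropol$ for all $f \in \Pres\G$, and Lemma \ref{lem:effective buch cri} concludes that $\G$ is a \Gr basis; together with the invariant, it is a \Gr basis of $\eqidealgen{\G_0}$. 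The delicate point, which I expect to be the only real obstacle, is precisely this last argument: one must reconcile the fact that the presentation $\Pres{\cdot}$ is recomputed from the mutating working set with the requirement that stabilization genuinely certify $\redstar f \G \zeropol$ for the \emph{final} $\G$; the observation that members of $\G$ are never reduced is what makes this work, and it also shows that the nondeterministic choice of reduction strategy is immaterial, since Lemma \ref{lem:effective buch cri} asks only for the existence of a reduction to $\zeropol$.
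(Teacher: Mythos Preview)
Your proposal is correct and follows the same two-step structure as the paper's proof: invariance of $\eqidealgen\G$ under augmentation, followed by an appeal to the finite \Bu criterion (Lemma \ref{lem:effective buch cri}) at termination. Your treatment is in fact more careful than the paper's in two respects: you make explicit that the starting polynomial $f$ of each reduction lies in $\Pres\G \subseteq \Ssub\G \subseteq \eqidealgen\G$ (the paper's phrasing ``obtained from some element of $\G$'' is loose), and you justify why stabilisation really forces $h=\zeropol$ for every $f\in\Pres\G$ by observing that a nonzero reduced polynomial cannot already belong to $\G$---a point the paper leaves implicit.
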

\begin{proof}
Each polynomial $h$ added to $\G$ in the algorithm 
is obtained from some element of $\G$ by a sequence of division steps 
$\red {} \G {}$, and hence belongs to $\eqidealgen\G$. 
Therefore $\eqidealgen\G$ stays invariant during the algorithm.
Once the algorithm terminates, the set $\G$ satisfies the finite \Bu criterion and hence
is a \Gr basis, due to Lemma \ref{lem:effective buch cri}.
%
%$G$ is invariantly 
%$H \subseteq G$ and $G$ is inside the equivariant ideal generated by $G$,
%so $G$ also generated the same equivariant ideal as $H$.
%The fact buchberger's algorithm stopped with output $G$ implies that every element of $\sS(G)$ is reduced to $0$ by elements of $G$.
%Hence by Lemma \ref{lem:equivariant buch cri} we can say that $G$ is a Gr\"obner basis of the equivariant ideal generated by $H$.
\end{proof}

\begin{slexample} \label{ex:Hilb-thm-cont}
Let $\WW = \set{a_0 < a_1 < ...}$.
Recall the ideal $\I \subseteq \polyring \Qf \WW$ from Example \ref{ex:Hilb-thm},
that contains those polynomials 
   without constant term, where the sum of coefficients of monomials of even degree 
   is equal to the sum of coefficients of monomials of odd degree.
   As we mention in Remark \ref{rem:Grob}, the base
   \begin{align}\label{eq:B}
   \B=\set{a_0^2 + a_0, a_0 - a_1}
   \end{align}
   of $\I$,
   constructed in the proof of Theorem \ref{thm:equiv noeth}, 
   is a \Gr basis. 
   Intuitively, the first polynomial in \eqref{eq:B} is sufficient to reduce all univariate polynomials from $\I$,
   while the second one is sufficient to reduce all non-univariate polynomials to univariate ones.
   
   For illustration of the operation of Algorithm \ref{algo:buch}, consider 
   some other base $\B' = \set{a_0 + a_1^2}$ of $\I$.
%    which is not a \Gr one. 
   We explain how Algorithm \ref{algo:buch}, given $\B'$ as its input,  
   can add to $\B'$ both polynomials from $\B$, thus computing a \Gr basis.
   As the first step, the S-polynomial (equal to the second polynomial in $\B$) 
   $$a_0 + a_2^2 - (a_1 + a_2^2) = a_0 - a_1$$ 
   is added to $\B'$, since it is reduced. Notably, an extra variable, for instance $a_2$, 
   must be used to obtain this S-polynomial.
   In particular, we learn that $\B'$ is not a \Gr base, as the newly added polynomial $a_0 - a_1$
    is reduced with respect to $\B'$.
   Then, in the second step, the
   S-polynomial 
   $$a_0 + a_1^2 + a_1(a_0 - a_1) = a_0 + a_0 a_1$$ 
   reduces 
   to the other polynomial $a_0 + a_0^2$ in $\B$,
   with respect to $\B' \cup \set{a_0-a_1}$.
   Indeed, as the remainder of division of this $S$-polynomial by $a_0-a_1$ we get 
   the polynomial
   \[
   a_0 + a_0 a_1 + a_0 (a_0-a_1) = a_0 + a_0^2,
   \]
   which is reduced and hence is added to $\B'$.
\end{slexample}

\begin{remark} \rm
In this section we strengthen the total order assumption, and assume $\WW$ to be well ordered,
so that lexicographic order is well founded.
To this effect, in Section \ref{sec:equiv noeth} we replaced leading
monomials with characteristic pairs.
In this section this is not possible, as the proof of Lemma \ref{lem:equivariant buch cri}
does not adapt.
%\sla{We need a counterexample: reducing with respect to $\cc\_$ instead of $\lm\_$ makes \Gr basis infinite. 
%Can Example \ref{ex:Q} be developed into such a counterexample?}
\end{remark}

%
%\begin{lemma}\label{lem:rank bound of grobner}
%Assuming $\W$ is an ordinal (\arka{I guess this is WLOG}),
%if all the variables appearing in $H$ are $\leq \a \in \W$,
%then all variables appearing the output of algorithm \ref{algo:buch} are less than $\omega\a$ (equivalently, $\leq n\a$ for some $n < \omega$).
%\end{lemma}
%

%\sla{TODO: complexity? lower bound?}
%\arka{For lower bound definitely reversible data VASS reachability.
%Other than that I don't have anything.
%But this should definitely be mentioned as a future research.}

% !TEX root = equivariant_ideals.tex

\section{Decidability of ideal membership} \label{sec:decid}
%

%We denote by \probname{Fin-ideal-memb} a special case of \probname{Ideal-memb} where
%the input basis uses only finite variables.

As an immediate consequence of Section \ref{sec:Gr} we get:
\begin{theorem} \label{thm:ideal memb decid}
\probname{Ideal-memb in $\polyring \R \WW$} is decidable,
for every computable field  $\R$ and $\omega$-well structured, well ordered and computable domain of variables $\WW$.
\end{theorem}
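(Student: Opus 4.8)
The plan is to observe that all the work has already been done in Section \ref{sec:Gr}, so the theorem follows by chaining its lemmas. First I would note that, since $\WW$ is $\omega$-well structured and well ordered (hence totally ordered), Property \ref{prop:noeth} holds by Theorem \ref{thm:equiv noeth}; thus the input equivariant ideal $\I \subseteq \polyring \R \WW$ is presented by a finite basis $\G$, i.e.\ $\I = \eqidealgen\G$, and this finite $\G$ (after discarding $\zeropol$ if present) together with the polynomial $f$ is the actual input to the procedure.

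Next I would run the Equivariant \Bu algorithm (Algorithm \ref{algo:buch}) on $\G$. Each step is effective: a presentation $\Pres\G$ is computable by Lemma \ref{lem:min finite} (using assumption \CA 2), and for each $f \in \Pres\G$ one computes some reduced $h$ with $\redstar f \G h$ by iterating the computable division step of Lemma \ref{lem:alg} (using assumption \CA 1), which terminates by Lemma \ref{lem:red terminates}. By Lemma \ref{lem:buch stops} every run of the algorithm terminates, and by Lemma \ref{lem: buch correct} it outputs a \Gr basis $\G'$ of $\eqidealgen\G = \I$.

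Finally, given the \Gr basis $\G'$, I would compute some reduced polynomial $h$ with $\redstar f {\G'} h$ (again by Lemmas \ref{lem:red terminates} and \ref{lem:alg}) and answer ``$f \in \I$'' precisely when $h = \zeropol$. Correctness is exactly the equivalence (1)$\iff$(3) of Lemma \ref{lem:Gr iff}: since $\G'$ is a \Gr basis, $f \in \eqidealgen{\G'} = \I$ if and only if \emph{every} sequence of division steps from $f$ ends in $\zeropol$, so the particular reduction sequence we computed is conclusive regardless of the nondeterministic choices made.

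There is no real obstacle here — the content is entirely in Section \ref{sec:Gr}. The only points worth stressing are that the computability assumptions \CA 1 and \CA 2 on $\WW$, together with computability of the field $\R$, are precisely what turn the cited lemmas from existence statements into algorithms, and that the strengthening from totally ordered to well ordered $\WW$ is what underlies termination of reduction (Lemma \ref{lem:red terminates}) and the \Bu criterion (Lemma \ref{lem:equivariant buch cri}), hence is genuinely used here.
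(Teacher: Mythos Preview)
Your proposal is correct and follows essentially the same approach as the paper's own proof: run Algorithm~\ref{algo:buch} on the finite basis to obtain a \Gr basis, then reduce $f$ and test for $\zeropol$ using Lemma~\ref{lem:Gr iff}. Your version is simply more explicit about the supporting lemmas (\ref{lem:min finite}, \ref{lem:alg}, \ref{lem:red terminates}, \ref{lem:buch stops}, \ref{lem: buch correct}) and about why the input ideal can be assumed finitely presented, which the paper leaves implicit.
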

\begin{proof}
Given an instance $f\in\polyring\R\WW$ and $\B\subseteqfin\polyring\R\WW$ we compute,
using Algorithm \ref{algo:buch}, a \Gr basis $\G$ of $\eqidealgen\B$.
Then, relying on Lemmas \ref{lem:alg} and \ref{lem:red terminates},  
we compute some reduced polynomial $f'$ obtainable by a sequence of division steps $\redstar f \G f'$.
Relying on Lemma \ref{lem:Gr iff}, we
answer positively if $f'=\zeropol$.
\end{proof}

%Let $\ordorder$ denote the well order of ordinals.
%We denote by an ordinal $\a$ the 
%well structured domain of variables $(\a, \ordorder, \embed {\a,\ordorder})$
%consisting of all ordinals less then $\a$ with all embeddings $\a\to\a$.
%For instance, $\omega^2$ stands for the domain consisting of pairs of natural numbers
%$(a,b)\in\N^2$,
%with all injective functions $\N^2\to\N^2$ preserving the strict lexicographic order.
%% $<$ defined as usual:
%%$(a,b) < (a', b')$ if $a < a'$ or $a = a'$ and $b < b'$.
%As a corollary of Theorem \ref{thm:ideal memb decid} we get:
%%
%\begin{corollary} \label{cor:wo decid}
%For every fixed computable field  $\R$ and a computable ordinal $\a$,
%\probname{Ideal-memb in $\polyring \R \a$} is decidable.
%\sla{define 'computable ordinal'}
%\sla{the corollary is not used later}
%\end{corollary}

In the rest of this section we develop a method of transfering decidability of the ideal membership
problem to other structures $\WW$.
The method is applied to the rational order $\QQ=(\Q, \leq_\QQ)$, which is 
$\omega$-well structured and computable, but not well ordered.

\para{Reduction game}
Given a relational structure $\WW = (\W, \ldots)$ and $S,T\in \age \WW$
(source and target),
by an \emph{$(S,T)$-local embedding} we mean any partial bijection $S\to T$ (i.e.,
a bijection from a subset of $S$ to a subset of $T$) 
that extends to an embedding in $\E$.
The set of all $(S,T)$-local embeddings we call \emph{$(S,T)$-profile} of $\WW$.

\smallskip

We define a game $\game \WW {\WW'}$ induced by
two relational structures $\WW = (\W,\ldots)$ and $\WW' = (\W', \ldots)$.
The game is played by two players, \Spoiler and \Reducer, and consists of two rounds.
% and amounts to coloring
%finitely many elements of $\W$ and $\W'$ by colors from the palette 
%$$\set{\colorsource,\colortargetone,\colortargettwo}.$$
%Elements may be colored by several colors, i.e., we do not assume disjointness
%of any pair of colored sets.

In the first round, \Spoiler chooses $S,T\in\age \WW$,
%and colors elements of $S$ by $\colorsource$ and elements of $T_1$ by $\colortargetone$.
and \Reducer responds by $S',T' \in \age{\WW'}$.
We call $S, S'$ \emph{sources}, and $T, T'$  \emph{targets}.
Source and target may intersect.
\Reducer wins the first round if there are bijections $\sigma : S'\to S$, $\tau : T\to T'$ 
such that
$(S',T')$-profile of $\WW'$ is equal to $(S,T)$-profile of $\WW$, 
modulo $\sigma$ and $\tau$:
\begin{align} \label{eq:win1}
\prof {\WW'} {S'} {T'} \ = \ 
\setof{\tau \circ \pi \circ \sigma}{\pi \in \prof {\WW} {S} {T}}.
\end{align}
Otherwise \Spoiler wins and the play ends.

In the second round, \Spoiler chooses one of the structures $\WW$, $\WW'$,
and 
%and then chooses 
a finite superset of target in this structure.
% and colors its elements by  $\colortargettwo$.
\Reducer responds similarly in the other structure.
This step thus yields two \emph{extended targets}  $\overline T\in\age\WW$ and $\overline{T'}\in\age{\WW'}$ such that
$T\subseteq \overline T$ and $T'\subseteq \overline{T'}$.
\Reducer wins the second round (and thus the whole play) if 
%there exists a bijection $\sigma : S'\to S$ and a color-preserving bijection 
$\tau$ extends to a bijection $\overline\tau : \overline T\to \overline{T'}$ 
%(hence $\tau(T_1)=T'_1$) 
such that:
\begin{align} \label{eq:win2}
\prof {\WW'} {S'} {\overline{T'}} \ = \ 
\setof{\overline\tau \circ \pi \circ \sigma}{\pi \in \prof {\WW} {S} {\overline T}}.
\end{align}
%
%for every partial injection $\pi : S \to T_1\cup T_2$ the following conditions are equivalent:
%%
%\begin{enumerate}
%\item[$(E)$]
%$\pi : S \to T_1\cup T_2$ extends to an embedding in $E$;
%\item[$(E')$]
%%the partial injection 
%$\sigma \circ \pi \circ \sigma^{-1} : S' \to T'_1\cup T'_2$ extends to
%an embedding in $E'$.
%\end{enumerate}
%
%In other words: modulo $\sigma$ and $\tau$, the same source-to-target partial bijections 
%extend to an embedding, in $\WW$ and $\WW'$. 
We write $\WW \gamered \WW'$ if \Reducer has a computable winning strategy in  
$\game {\WW} {\WW'}$.

%\arka{For proving that this game is won by reducer only if the ideal membership problem is equivalent we have to find a system where spanning ensures use of all partial bijections}
%
%\arka{For iff we also need to formally define what we mean by a reduction.
%I propose to have two maps from atoms to well-order,
%one for choosing/instantiating the source and another for choosing/instantiating the target}

\begin{lemma} \label{lem:red}
$\QQ \gamered \omega^2$.
\end{lemma}
\begin{proof}
%For $(a,b)\in\N^2$ define a rational number
%\[
%\embnr a b \ = \ a + \frac 1 4 + \frac 1 8 + \ldots + \frac 1 {2^{b+1}} \ = \ a + \frac 1 2 - \frac 1 {2^{b+1}} 
%\ \in \ \Q
%\]
%And observe that restriction of $\leq_\Q$ to the set
%$A = \setof{\embnr a b}{a,b\in\N}$ is a well order of type $\omega^2$. 
%
A subset $X\subseteq \omega^2$ we call \emph{$k$-separated}, for $k\in\N$, if each two consecutive elements
of $X$ are separated by at least $k$ elements of $\omega^2$, and likewise there are at least
$k$ elements smaller than $\min(X)$. 

Let $S, T \in\age \QQ$ be \Spoiler's choice in the first round.
Then \Reducer's response is the following:
\begin{itemize}
\item
as source $S'$, choose
%$S' \subseteq \set{0}\times\N$ and $T'_1 \subseteq (\N\setminus\set{0})\times\set{0}$ 
the initial fragment of $\omega$, of the same size as $S$ (therefore
$S' \subseteq \omega$);
\item
as target $T'$, choose an arbitrary 
subset of non-zero multiplicities of $\omega$, namely
$T' \subseteq \setof{\omega\cdot b}{b\in\omega, b>0}$,
of the same size as $T$.
\end{itemize}
The unique monotonic bijections $\sigma : S'\to S$ and $\tau : T \to T'$ satisfy
\eqref{eq:win1}, namely (a proof below):
\begin{align} \label{eq:win1Q}
\prof {\omega^2} {S'} {T'} \ = \ 
\setof{\tau \circ \pi \circ \sigma}{\pi \in \prof {\QQ} {S} {T}}.
\end{align}

In the second round, 
if \Spoiler chooses the structure $\omega^2$ and plays extended target
$\overline{T'} \subseteq \omega^2$, \Reducer responds by arbitrary $\overline T\in \age\QQ$ 
so that $\tau$ extends to a monotonic bijection $\overline\tau : \overline T \to \overline{T'}$.
This is always possible by denseness of $\leq_\QQ$.
Symmetrically, if \Spoiler chooses the structure $\QQ$ and plays
$\overline T \in\age \QQ$, \Reducer responds by arbitrary $\overline{T'} \subseteq \N^2$ 
so that $\tau$ again 
extends to a monotonic bijection $\overline\tau : \overline T \to \overline{T'}$.
This is always possible as $T'$ is $k$-separated for every $k\in\N$.
%consecutive elements of $T'_1$ are separated by infinitely many
%elements of $\N^2$, and likewise there are infinitely many elements in $\N^2$
%smaller than (resp.~larger than) the smallest (resp.~the largest) element of $T'_1$.
Furthermore, \Reducer chooses $\overline{T'}$ to be $\size{S'}$-separated.
%so that each two consecutive elements of $T'$
%are separated by at least $\size{S'}$ elements of $\N^2$, and there are at least
%$\size{S'}$ elements in $\N^2$ smaller that the smallest element of $T'$.  
In both cases we get \eqref{eq:win2}, namely
\begin{align} \label{eq:win2Q}
\prof {\omega^2} {S'} {\overline{T'}} \ = \ 
\setof{\overline\tau \circ \pi \circ \sigma}{\pi \in \prof {\QQ} {S} {\overline T}}.
\end{align}
Conditions \eqref{eq:win1Q} and \eqref{eq:win2Q} hold indeed, 
since $\tau$ is a monotonic bijection, and due to the following two observations:
\begin{claim}
Every monotonic partial bijection $S\to \overline T$ extends to an embedding in $\embed {\QQ}$.
\end{claim}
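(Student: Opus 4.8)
The plan is to produce an explicit strictly increasing self-map of $\Q$ that agrees with the given partial bijection, using that embeddings of $\QQ$ are exactly the strictly monotonic maps $\Q\to\Q$. Write the monotonic partial bijection as $p:\{a_1<\cdots<a_k\}\to\{b_1<\cdots<b_k\}$, where $\{a_1,\dots,a_k\}\subseteq S$, $\{b_1,\dots,b_k\}\subseteq\overline T$, and monotonicity forces $p(a_i)=b_i$ for every $i$. (The cases $k=0$ and $k=1$ are degenerate and handled uniformly by the construction below.)

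First I would define $\iota:\Q\to\Q$ piecewise. On the right tail $(a_k,\infty)$ set $\iota(x)=x-a_k+b_k$, and on the left tail $(-\infty,a_1)$ set $\iota(x)=x-a_1+b_1$. On each closed interval $[a_i,a_{i+1}]$, for $1\le i<k$, let $\iota$ be the unique increasing affine bijection onto $[b_i,b_{i+1}]$, namely
\[
\iota(x) \ = \ b_i + \frac{b_{i+1}-b_i}{a_{i+1}-a_i}\,(x-a_i).
\]
The pieces agree at every breakpoint $a_i$ (all of them send $a_i$ to $b_i$), so $\iota$ is well defined; each piece is strictly increasing with strictly increasing image, hence $\iota$ is strictly monotonic on all of $\Q$; and all coefficients appearing are rational, so $\iota$ indeed maps $\Q$ into $\Q$. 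By the characterisation of $\embed\QQ$ recalled earlier, $\iota$ is an embedding of $\QQ$, and $\iota(a_i)=b_i$ shows it extends $p$.

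I do not expect a genuine obstacle here: the only things to verify are that the affine pieces glue consistently at the points $a_i$ and that rationality of the data keeps the image inside $\Q$, both immediate. As an aside, a non-computational proof is also available: $p$ is an isomorphism between the finite induced substructures of $\QQ$ on its domain and range, and since $\QQ$ is the \Fr limit of finite linear orders it is ultrahomogeneous, so $p$ extends to an automorphism of $\QQ$, a fortiori to an element of $\embed\QQ$; I would nonetheless prefer the explicit construction above since the surrounding argument calls for a computable strategy.
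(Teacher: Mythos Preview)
Your proof is correct. The paper's own argument is exactly the one-line appeal you give as an aside: ``The former claim follows by homogeneity of $\QQ$.'' So the paper takes the non-constructive route, while you lead with an explicit piecewise-affine extension and mention homogeneity second.

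The explicit construction is fine and self-contained; the only thing I would drop is the justification that ``the surrounding argument calls for a computable strategy.'' Computability in $\WW\gamered\WW'$ concerns \Reducer's choices of $S',T',\overline{T'}$, not the witnessing extensions of local embeddings: the claim is used only to identify $\prof{\QQ}{S}{\overline T}$ with the set of all monotonic partial bijections, so mere existence of an extension suffices. Either argument is adequate here, and the homogeneity appeal is the shorter one.
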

\begin{claim}
Every monotonic partial bijection $S'\to \overline{T'}$ extends to an embedding in $\embed {\omega^2}$.
\end{claim}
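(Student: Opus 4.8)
The plan is to produce, from a given monotonic (order-preserving) partial bijection $\phi : S' \to \overline{T'}$, one concrete element of $\embed{\omega^2}$ extending it; recall that an embedding of the ordinal $\omega^2$ is simply a strictly increasing total map $\omega^2 \to \omega^2$. Two features of \Reducer's choices are available: $S'$ is the initial $\size{S'}$-element fragment of $\omega^2$, so $S' = \{0, 1, \dots, \size{S'}-1\}$ and in particular $\mathrm{dom}(\phi) \subseteq \{0, \dots, \size{S'}-1\}$; and $\overline{T'}$ is $\size{S'}$-separated, so any two consecutive elements of $\overline{T'}$ leave at least $\size{S'}$ ordinals strictly between them, and at least $\size{S'}$ ordinals lie below $\min \overline{T'}$.

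Write $\mathrm{dom}(\phi) = \{p_1 < \dots < p_l\}$ and $v_r = \phi(p_r)$, so $v_1 < \dots < v_l$ all lie in $\overline{T'}$. I would build the extension $f$ piecewise: set $f(p_r) = v_r$ for each $r$; on the initial block $\{0, \dots, p_1 - 1\}$ let $f$ take $p_1$ distinct ordinals below $v_1$; on each intermediate block $\{p_r + 1, \dots, p_{r+1} - 1\}$ let $f$ take the $p_{r+1} - p_r - 1$ required distinct ordinals strictly between $v_r$ and $v_{r+1}$; and on the final tail $\{\alpha \in \omega^2 : \alpha > p_l\}$, which has order type $\omega^2$, let $f$ be an order isomorphism onto the tail $\{\beta \in \omega^2 : \beta > v_l\}$, which also has order type $\omega^2$ (removing any proper initial segment from $\omega^2$ leaves a set of order type $\omega^2$). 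The only thing to check is that each block can be filled: below $v_1$ there are at least $\size{S'} > p_1$ ordinals by separation of $\overline{T'}$, and between $v_r$ and $v_{r+1}$ there are at least $\size{S'} > p_{r+1} - p_r - 1$ ordinals (indeed already between $v_r$ and its successor in $\overline{T'}$). Hence the interpolation always succeeds; $f$ is strictly increasing by construction, so $f \in \embed{\omega^2}$ and $f$ extends $\phi$.

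The step I expect to need the most care is purely the separation bookkeeping — matching, in each gap, the number of ordinals available against the number of positions of $S'$ that must be placed there — together with the degenerate cases ($\phi$ empty, $\mathrm{dom}(\phi)$ not containing $0$, or $\overline{T'}$ with very few elements), none of which is conceptually hard. The underlying idea is that $\size{S'}$-separation of $\overline{T'}$ is exactly what compensates for the failure of homogeneity of $\omega^2$: an order-preserving image of the short initial segment $S'$ can be placed anywhere inside a sufficiently spread-out set and still be completed to a self-embedding, which is the analogue for $\omega^2$ of the homogeneity of $\QQ$ used in the preceding claim.
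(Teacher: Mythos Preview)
Your proposal is correct and follows essentially the same approach as the paper: use the $\size{S'}$-separation of $\overline{T'}$ to place the finitely many elements of $S'$ into the required gaps, and then extend on the remaining tail by an order-preserving map into the tail of $\omega^2$. The paper phrases the tail extension as a shift $\alpha\mapsto\alpha+\beta$ for a sufficiently large ordinal $\beta$ (it should read $\beta+\alpha$, since left addition is the strictly increasing one), whereas you invoke the fact that both tails have order type $\omega^2$ and take an order isomorphism; these are the same idea, and your formulation is arguably cleaner.
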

\noindent
The former claim follows by homogeneity of $\QQ$.
%\begin{claimproof}
For the latter claim, 
we first extend the mapping to $S'$, which is possible as $\overline{T'}$ is $\size{S'}$-separated, and then
define an extension on remaining arguments $\a\in\N^2\setminus S'$ by $\a\mapsto \a+\b$,
for a sufficiently large ordinal $\b$.
%\end{claimproof}
\end{proof}

Corollary \ref{cor:red d} below may be shown in the same way as Lemma \ref{lem:red},
but it also follows from the general fact that the product of domains preserves the winner 
in the reduction game:
\begin{lemmarep} \label{lem:red prod}
If $\WW_1\,{\gamered}\, \WW'_1$ and $\WW_2\gamered\WW'_2$ then
$\WW_1 {\domprod} \WW_2 \gamered \WW'_1 {\domprod} \WW'_2$.
\end{lemmarep}
\begin{proof}
Assuming  \Reducer's strategies in
games $\game {\WW_1} {\WW'_1}$ and $\game {\WW_2} {\WW'_2}$, we need to provide his strategy in
 $\game {\WW_1\domprod\WW_2} {\WW'_1\domprod\WW'_2}$.
Suppose \Spoiler plays in the first round a source and target 
\[
S, T\in\age{\WW_1\domprod \WW_2}.
\]
Let $S_1, T_1 \in \age {\WW_1}$ and $S_2, T_2\in\age{\WW_2}$ be the projections.
%As $\WW_1\gamered \WW'_1$ and $\WW_2\gamered\WW_2$, 
From \Reducer's responses in $\game {\WW_1} {\WW'_1}$ and
$\game {\WW_2} {\WW'_2}$, we get
$S'_1, T'_1 \in \age {\WW'_1}$, $S'_2, T'_2\in\age{\WW'_2}$  and bijections
\begin{align*}
\sigma_1 : S'_1 \to S_1& \qquad \tau_1 :  T_1 \to T'_1 \\
\sigma_2 : S'_2 \to S_2& \qquad \tau_2 :  T_2 \to T'_2.
\end{align*}
The products of these bijections,
\begin{align*}
(\sigma_1, \sigma_2) : S'_1\times S'_2 \to S_1\times S_2 \qquad 
(\tau_1, \tau_2) :  T_1\times T_2 \to T'_1\times T'_2,
\end{align*}
are used in \Reducer's response 
$S', T'\in\age{\WW'_1\domprod \WW'_2}$:
%in the first round of $\game {\WW_1\domprod\WW_2} {\WW'_1\domprod\WW'_2}$:
\[
S' \ := \ (\sigma_1, \sigma_2)^{-1}(S) \qquad
T' \ := \ (\tau_1, \tau_2)(T).
\]
The restrictions of $(\sigma_1, \sigma_2)$ and $(\tau_1, \tau_2)$ to $S'$ and $T$, respectively,
\[
\sigma = \restr{(\sigma_1, \sigma_2)}{S'} \qquad
\tau = \restr{(\tau_1, \tau_2)}{T},
\]
satisfy \eqref{eq:win1}. Indeed, condition \eqref{eq:win1} follows directly by Lemma \ref{lem:prod embed}, and by 
the same condition for games $\game {\WW_1} {\WW'_1}$ and $\game {\WW_2} {\WW'_2}$.

Similarly we define \Reducer's strategy in the second round of $\game {\WW_1\domprod\WW_2} {\WW'_1\domprod\WW'_2}$.
\end{proof}
\begin{corollary} \label{cor:red d}
$\QQ \domprod \finitedomain d \gamered \omega^2 \domprod \finitedomain d$.
\end{corollary}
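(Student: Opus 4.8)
The plan is to obtain the statement as an instance of Lemma~\ref{lem:red prod}, taking $\WW_1 = \QQ$, $\WW'_1 = \omega^2$ and $\WW_2 = \WW'_2 = \finitedomain d$. One hypothesis of that lemma, $\WW_1 \gamered \WW'_1$, is precisely Lemma~\ref{lem:red}. Hence the only thing left to check is the second hypothesis, $\finitedomain d \gamered \finitedomain d$.

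For this it suffices to observe the trivial general fact that $\WW \gamered \WW$ for every relational structure $\WW$, witnessed by the \emph{copycat} strategy for \Reducer. In the first round, when \Spoiler picks $S, T \in \age{\WW}$, \Reducer answers with $S' = S$ and $T' = T$ and declares $\sigma = \id{S}$, $\tau = \id{T}$; then condition~\eqref{eq:win1} holds tautologically, since both of its sides are literally the set $\prof{\WW}{S}{T}$. In the second round, no matter which of the two identical copies \Spoiler selects nor which finite extension $\overline T$ (resp.\ $\overline{T'}$) of the target it plays there, \Reducer replies with the very same finite set in the other copy and sets $\overline\tau = \id{\overline T}$, which indeed extends $\tau$; condition~\eqref{eq:win2} is then immediate for exactly the same reason. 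This strategy is obviously computable, so $\finitedomain d \gamered \finitedomain d$. (It is even more transparent here, since $\finitedomain d$ has only the identity embedding.)

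Combining the two pieces, Lemma~\ref{lem:red prod} applied to $\QQ \gamered \omega^2$ and $\finitedomain d \gamered \finitedomain d$ gives $\QQ \domprod \finitedomain d \gamered \omega^2 \domprod \finitedomain d$, as required. There is essentially no obstacle: the only point worth a moment's attention is that in the second round of $\game{\finitedomain d}{\finitedomain d}$ the growing extended target is still matched by an identity bijection, but since the two sides of the game are the same structure their local embeddings coincide, so the winning conditions are satisfied for free. An alternative would be to redo the explicit construction from the proof of Lemma~\ref{lem:red} while carrying a second coordinate ranging over $\finitedomain d$, but the route through the product lemma is far shorter.
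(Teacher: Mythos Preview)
Your proposal is correct and follows exactly the route the paper indicates: derive the corollary from Lemma~\ref{lem:red prod} applied to Lemma~\ref{lem:red} together with the trivial reflexivity $\finitedomain d \gamered \finitedomain d$. The copycat argument you give for reflexivity is the obvious one and fills in the only detail the paper leaves implicit.
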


\begin{theorem}\label{thm:red ord to well-ord}
Let $\R$ be a computable field and $\WW, \WW'$ two relational structures.
If $\WW \gamered \WW'$ then
\probname{Ideal-memb in $\polyring \R \WW$} reduces to
\probname{Ideal-memb in $\polyring \R {\WW'}$}.
\end{theorem}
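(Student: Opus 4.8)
The plan is to exhibit a computable many-one reduction. An instance of \probname{Ideal-memb in $\polyring\R\WW$} is a pair $(f,\B)$ with $f\in\polyring\R\WW$ and $\B\subseteqfin\polyring\R\WW$, the question being whether $f\in\eqidealgen\B$. From such a pair I would produce a pair $(f',\B')$ over $\WW'$ with the same answer, using only the \emph{first round} of a fixed computable winning strategy of \Reducer in $\game\WW{\WW'}$; the second round will enter only in the correctness proof.

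First I would set $S:=\vars f\cup\bigcup_{g\in\B}\vars g\in\age\WW$ and have \Spoiler open the play with source $S$ and target $S$ (permitted, since source and target may coincide). \Reducer's strategy returns $S',T'\in\age{\WW'}$ together with bijections $\sigma:S'\to S$ and $\tau:S\to T'$ witnessing \eqref{eq:win1}. The reduction then outputs $\B':=\setof{\sigma^{-1}(g)}{g\in\B}$ and $f':=\tau(f)$, where $\sigma^{-1}(g)$ and $\tau(f)$ denote the polynomials obtained by the indicated renamings of variables; these are well defined since $\vars g\subseteq S$ and $\vars f\subseteq S$, and the whole computation is effective once elements of $\WW,\WW'$ are finitely represented and \Reducer's strategy is computable.

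For correctness I would prove $f\in\eqidealgen\B$ iff $f'\in\eqidealgen{\B'}$, both implications by the same device. Assume $f=\sum_{i=1}^n h_i\cdot\iota_i(g_i)$ with $g_i\in\B$, $h_i\in\polyring\R\WW$, $\iota_i\in\E$, and let $\overline T\in\age\WW$ collect $S$ together with all variables occurring in $f$, the $h_i$, and the $\iota_i(g_i)$. Continuing the play, \Spoiler enlarges the $\WW$-target to $\overline T$; \Reducer, winning, answers with $\overline{T'}\supseteq T'$ and $\overline\tau:\overline T\to\overline{T'}$ extending $\tau$ and satisfying \eqref{eq:win2}. For each $i$, the restriction $\restr{\iota_i}{\vars{g_i}}$ is a partial bijection $S\to\overline T$ that extends to an embedding, hence lies in $\prof\WW S{\overline T}$, so by \eqref{eq:win2} the partial bijection $\overline\tau\circ(\restr{\iota_i}{\vars{g_i}})\circ\sigma$ lies in $\prof{\WW'}{S'}{\overline{T'}}$ and extends to some $\iota'_i\in\embed{\WW'}$. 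Applying Lemma \ref{lem:restr} on both sides gives $\iota'_i(\sigma^{-1}(g_i))=\overline\tau(\iota_i(g_i))$, so applying $\overline\tau$ to the decomposition of $f$ yields $f'=\overline\tau(f)=\sum_i \overline\tau(h_i)\cdot\iota'_i(\sigma^{-1}(g_i))\in\eqidealgen{\B'}$. The converse is entirely symmetric: from a decomposition of $f'$ over $\WW'$ with all variables in some finite $\overline{T'}\supseteq T'$, have \Spoiler enlarge the $\WW'$-target to $\overline{T'}$, let \Reducer supply $\overline T\supseteq S$ and $\overline\tau$, pull each $\restr{\iota'_i}{\vars{\sigma^{-1}(g_i)}}\in\prof{\WW'}{S'}{\overline{T'}}$ back through \eqref{eq:win2} to an embedding $\iota_i$ of $\WW$, and apply $\overline\tau^{-1}$ to obtain $f=\sum_i\overline\tau^{-1}(h'_i)\cdot\iota_i(g_i)\in\eqidealgen\B$.

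The main obstacle I anticipate is precisely this correctness argument rather than the construction: a witness to ideal membership lives in a polynomial ring over an \emph{a priori unbounded} finite set of auxiliary ``target'' variables, and one must see that the second round of the reduction game --- in which \Spoiler may enlarge the target in either structure while \Reducer keeps the $(S,\overline T)$- and $(S',\overline{T'})$-profiles identified via an extension $\overline\tau$ of $\tau$ --- is exactly the mechanism that transports such a witness in both directions, in conjunction with Lemma \ref{lem:restr}, which reduces the action of a renaming on a polynomial to its restriction to that polynomial's variables. The remaining work --- tracking the several partial bijections ($\sigma$ on sources, $\tau$ and $\overline\tau$ on targets, restrictions to $\vars{g_i}$) and checking that each composition lands in the advertised profile --- is routine but must be done carefully.
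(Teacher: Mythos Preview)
Your proposal is correct and follows essentially the same approach as the paper: use \Reducer's first-round strategy to rename variables and produce the instance $(f',\B')$, and use the second round (on the appropriate side) to transport a membership witness back and forth via the profile-preserving bijections $\sigma,\overline\tau$. The only cosmetic difference is that you take $S=T=\vars f\cup\vars\B$ whereas the paper takes $S=\vars\B$ and $T=\vars f$; your choice of restricting $\iota_i$ to $\vars{g_i}$ rather than to all of $S$ is in fact slightly cleaner, since it guarantees the image lands in $\overline T$ without having to enlarge $\overline T$ further.
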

\begin{proof}
Given an instance of \probname{Ideal-memb in $\polyring \R \WW$}, i.e., 
a polynomial $f$ and a finite basis $\B$,
we compute an instance $f'$, $\B'$ of  \probname{Ideal-memb in $\polyring \R {\WW'}$} 
relying on \Reducer's strategy in the first round of $\game \WW {\WW'}$.
We let \Spoiler play $S:=\vars \B$, the set of variables appearing in $\B$, and
$T:=\vars f$.
From the \Reducer's answer we get two subsets $S', T'\subseteq \W'$ and bijections
$\sigma : S'\to S$ and $\tau : T\to T'$.
We use this bijections to rename variables in $f$ and $\B$, thus obtaining $f'$ and $\B'$.

Correctness of the reduction is shown using \Reducer's strategy in the second round.
In one direction,
suppose $f$ belongs to the ideal generated by $\B$ in $\polyring \R \WW$, i.e.,
\begin{align} \label{eq:ideal witness}
f \ = \ h_1 \cdot \iota_1(g_1) \ + \ \ldots \ + \ h_m \cdot \iota_m(g_m),
\end{align}
for polynomials $g_1, \ldots, g_m \in \B$, polynomials $h_1, \ldots, h_m \in \polyring \R \WW$ and
embeddings $\iota_1, \ldots, \iota_m \in E$.
%By renaming variables in polynomials $g_1, \ldots, g_m \in \B$ using $\sigma_1$ we get
%polynomials $g'_1, \ldots, g'_m \in \B'$.
We let \Spoiler play $\overline T$ containing all variables appearing on the right in equality \eqref{eq:ideal witness}.
Hence $T \subseteq \overline T$.
From \Reducer's response we get  $\overline{T'}$ and a bijection
$\overline\tau : \overline T \to \overline{T'}$ that extends $\tau$. 
We use $\sigma$ and $\overline\tau$ to rename all variables 
in polynomials $h_1, \ldots, h_m$ and $g_1, \ldots, g_m$ appearing in the decomposition 
\eqref{eq:ideal witness}, which
results in polynomials $h'_1, \ldots, h'_m \in \polyring \R {\WW'}$ and
$g'_1, \ldots, g'_m \in \B'$.
Furthermore, for every $i \in\setfromto 1 m$, we apply
the equalities \eqref{eq:win1}, \eqref{eq:win2} %implication $(E)\implies(E')$ 
to the restriction 
$\pi_i = \restr {\iota_i} S$ to deduce that
$\overline\tau \circ \pi_i \circ \sigma$ extends to some embedding
$\iota'_i \in E'$.
Therefore, we get:
\begin{align} \label{eq:ideal witness'}
f' \ = \ h'_1 \cdot \iota'_1(g'_1) \ + \ \ldots \ + \ h'_m \cdot \iota'_m(g'_m).
\end{align}
The reverse direction is shown similarly: assuming a decomposition \eqref{eq:ideal witness'},
we let \Spoiler choose $\WW'$ in the second round, and
use the bijections $\sigma$ and $\tau$ obtained from \Reducer's response 
to get a decomposition \eqref{eq:ideal witness}.
\end{proof}
\vspace{-1.5mm}
Using Lemmas \ref{lem:prod}, \ref{lem:lex comp}, and \ref{lem:red prod}, 
Theorems \ref{thm:ideal memb decid},
\ref{thm:red ord to well-ord},  
and Corollary \ref{cor:red d} we get reduction
of \probname{Ideal-memb in $\polyring \R {\QQ\domprod \finitedomain d}$} to
\probname{Ideal-memb in $\polyring \R {\omega^2 \domprod \finitedomain d}$}.
%
%\begin{theorem}\label{thm:red times d}
%For every $d\geq 1$,
%\probname{Ideal-memb in $\polyring \R {\QQ\domprod \finitedomain d}$} reduces to
%\probname{Ideal-memb in $\polyring \R {\omega^2 \domprod \finitedomain d}$}
%and is therefore decidable.
%\end{theorem}
%
We easily generalize this reduction by invoking Lemma \ref{lem:red prod} several times.
For conciseness, we write $\domexp \WW \ell$ for the $\ell$-fold product:
$
\domexp \WW \ell \ := \  \underbrace{\WW \domprod \ldots \domprod  \WW}_\ell.
$

\begin{corollary}\label{cor:red times d l}
For every $d,\ell\geq 1$,
\probname{Ideal-memb in $\polyring \R {\domexp \QQ \ell \domprod \finitedomain d}$} reduces to
\probname{Ideal-memb in $\polyring \R {\domexp{(\omega^2)} \ell \domprod \finitedomain d}$}
and is therefore decidable.
\end{corollary}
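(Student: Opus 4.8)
The plan is to derive the corollary by chaining the reduction-game results established above and then reading off decidability of the target problem from Theorem~\ref{thm:ideal memb decid}. Concretely: (i) lift $\QQ \gamered \omega^2$ to the $\ell$-fold product with a trailing finite factor; (ii) convert this into a reduction of the two ideal-membership problems via Theorem~\ref{thm:red ord to well-ord}; and (iii) check that $\domexp{(\omega^2)}\ell \domprod \finitedomain d$ satisfies the hypotheses of Theorem~\ref{thm:ideal memb decid}.

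For step~(i) I would first record the trivial fact that $\WW \gamered \WW$ for every $\WW$: in $\game \WW \WW$, \Reducer copies \Spoiler's moves, so all the bijections are identities and the winning conditions \eqref{eq:win1}, \eqref{eq:win2} hold at once; in particular $\finitedomain d \gamered \finitedomain d$. Then, starting from $\QQ \gamered \omega^2$ (Lemma~\ref{lem:red}) and applying Lemma~\ref{lem:red prod} repeatedly --- $\ell$ times to multiply in further copies of $\QQ$ against copies of $\omega^2$, and once more to append $\finitedomain d$ against $\finitedomain d$, using associativity of $\domprod$ --- I obtain
\[
\domexp \QQ \ell \domprod \finitedomain d \ \gamered \ \domexp{(\omega^2)}\ell \domprod \finitedomain d .
\]
(Alternatively Corollary~\ref{cor:red d} disposes of the two trailing factors in one shot, and Lemma~\ref{lem:red} of the remaining $\ell-1$ copies.) Step~(ii) is then immediate: Theorem~\ref{thm:red ord to well-ord} turns the displayed relation into a reduction of \probname{Ideal-memb in $\polyring \R {\domexp \QQ \ell \domprod \finitedomain d}$} to \probname{Ideal-memb in $\polyring \R {\domexp{(\omega^2)}\ell \domprod \finitedomain d}$}.

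For step~(iii) I would verify the three assumptions of Theorem~\ref{thm:ideal memb decid} for $\WW' = \domexp{(\omega^2)}\ell \domprod \finitedomain d$. The ordinal $\omega^2$ is well ordered, is well structured --- hence $\omega$-well structured --- by Lemma~\ref{lem:wpo}, and is computable because it is an effective ordinal (finitely represented in Cantor Normal Form, with computable order and addition). The finite domain $\finitedomain d$ is trivially well ordered, well structured, and computable. Iterating Lemma~\ref{lem:prod} keeps the product well ordered and $\omega$-well structured, and iterating Lemma~\ref{lem:lex comp} keeps it computable; both iterations are legitimate by associativity of $\domprod$. Hence Theorem~\ref{thm:ideal memb decid} applies, so \probname{Ideal-memb in $\polyring \R {\domexp{(\omega^2)}\ell \domprod \finitedomain d}$} is decidable, and composing with the reduction of step~(ii) gives decidability of \probname{Ideal-memb in $\polyring \R {\domexp \QQ \ell \domprod \finitedomain d}$}.

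I do not expect a genuine obstacle here: the corollary is a bookkeeping composition of Lemmas~\ref{lem:red}, \ref{lem:red prod}, \ref{lem:prod}, \ref{lem:lex comp}, \ref{lem:wpo} and Theorems~\ref{thm:red ord to well-ord}, \ref{thm:ideal memb decid}. The only points that deserve a moment of care are the trivial identity $\WW \gamered \WW$ (without it the finite tail $\finitedomain d$ would not survive Lemma~\ref{lem:red prod}) and the fact that all three hypotheses of Theorem~\ref{thm:ideal memb decid} --- being well ordered, being $\omega$-well structured, and being computable --- are closed under lexicographic product, which is exactly the content of Lemmas~\ref{lem:prod} and \ref{lem:lex comp}.
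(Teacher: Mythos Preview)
Your proposal is correct and follows essentially the same route as the paper: the corollary is obtained by iterating Lemma~\ref{lem:red prod} (together with the trivial reflexivity $\finitedomain d \gamered \finitedomain d$, or equivalently Corollary~\ref{cor:red d}) to lift Lemma~\ref{lem:red} to the $\ell$-fold product, applying Theorem~\ref{thm:red ord to well-ord}, and then invoking Theorem~\ref{thm:ideal memb decid} via Lemmas~\ref{lem:wpo}, \ref{lem:prod}, and~\ref{lem:lex comp}. Your explicit remark that the identity strategy witnesses $\WW \gamered \WW$ is the one small detail the paper leaves implicit.
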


%\sla{TODO extend to sth like to deal with nested data:\\
% for every well structured domain $\WW$,\\ 
%\probname{Ideal-memb in $\polyring \Qf {\OO\domprod \WW$}} reduces to
%\probname{Ideal-memb in $\polyring \Qf {\omega^2 \domprod \WW}$}
%}
%
%\arka{Why not also replace $\OO$ with an arbitrary domain of variables?}

% !TEX root = equivariant_ideals.tex

\section{Application: reversible Petri nets with data} \label{sec:pn}

In this section we show that Theorem \ref{thm:ideal memb decid} is useful for showing
decidability of the reachability problem for \emph{reversible} Petri nets whose tokens carry data
from a given domain
(Theorem \ref{thm:red reach}).

%\para{Monomial rewriting}
\para{Petri nets with data}
Among many presentations of Petri nets, we choose multiset rewriting, except that in place of
finite multisets over $\WW$ we conveniently take monomials $\monomials \WW$.

Let $\WW = (\W, \ldots)$ be a fixed relational structure.
Any set of rewriting rules of the form
\[
\transitions \subseteq \monomials \WW \times \monomials \WW
\]
we call \emph{monomial rewriting over $\WW$}.
Rewriting rules extend to \emph{transitions} $\trans \ \subseteq \monomials \WW \times \monomials \WW$ as follows:
for every $g\in\monomials \WW$ and $(h, h') \in \transitions$, we have a transition
\[
g \cdot h \trans g \cdot h'.
\]
The \emph{reachability} relation is the transitive closure $\trans^*$.
In the sequel we always assume that $\transitions$ is orbit-finite, namely
\[
\transitions \ = \ \E(T) \ \defeq \ \setof{(\iota(h), \iota(h'))}{\iota \in \E, \ (h,h')\in T}
\]
for some finite set $T\subseteqfin \monomials \WW \times \monomials \WW$.
The set $T$ serves as a \emph{presentation} of $\transitions$ when it is input to algorithms.
A \emph{Petri net with data $\WW$} is just a monomial rewriting over
$\WW\domprod \finitedomain d$ for some $d\in\N$:

\begin{definition}
A Petri net with data $\WW$, with $d$ places, is an orbit-finite monomial rewriting $\transitions$ 
over $\WW\domprod \finitedomain d$.
\end{definition}

\begin{slexample}
In particular, monomial rewriting over $\finitedomain d$ defines a plain (data-less) Petri net with 
$d$ places $\setto {d-1}$. 
The only embedding is identity, and hence $\transitions$ is finite.
Monomial rewriting over $\QQ\domprod \finitedomain d$,
where $\QQ=(\Q, \leq_\QQ)$ is the rational order,  defines a Petri net with ordered data.
According to Lemma \ref{lem:prod embed},
embeddings of $\QQ \domprod \finitedomain d$ are induced by some embedding 
$\iota$ of $\QQ$ (i.e., a strictly increasing function $\Q\to\Q$), 
and preserve the second component, i.e., preserve places
(for conciseness, instead of $(\w,i)$ we write $\w i$):
\[
\w i \mapsto \iota(\w) i.
\]
%\end{slexample}
%
%\begin{slexample}
Classical graphical presentation of Petri nets is extendable to Petri nets with data.
For instance, consider a Petri net $\transitions = \embed \QQ(T)$ with $d=2$ places, 
and with 2 transition rules $T = \set{t_1, t_2}$, where
\[
t_1 = (\zeromon, \ \w0 \cdot \v0) \qquad
t_2 = (\v0 \cdot \v1, \ \w0 \cdot \ww1)
\]
for some $a<b<c\in\Q$ ($\zeromon$ is the unit monomial).
Petri net $\transitions$ is presented symbolically,
using \emph{variables} $\w,\v,\v', \ww$ ranging over $\Q$:
% depicted on Fig.~\ref{fig-udpn}.

%\begin{figure}[htbp]
%\centering
\begin{center}
\scalebox{.85}{
\begin{tikzpicture}
[auto,place/.style={circle,draw=black!50,fill=black!20,thick,inner sep=2mm},
transition/.style={rectangle,draw=blue!50,fill=blue!20,thick,rounded corners=0.5pt,inner sep=1mm}]
 
\node (P) [place] {}
  [children are tokens]
  child{node{0}};
\node (Q) [place,right=\picunit of P] {}
  [children are tokens]
  child{node{1}};
\node (T) [transition, left=\halfunit of P,label=above:\ ] {$t_1$}
  edge[post] node{$\w,\v$}(P);
\node (S) [transition, right=\halfunit of P,label=above:\ ] {$t_2$}
  edge[pre,bend right,above] node{$\v$}(P)
  edge[pre,bend left,above] node{$\v'$}(Q)
  edge[post,bend left,below] node{$\w$}(P)
  edge[post,bend right,below] node{$\ww$}(Q);
\node[rectangle,draw] at (-1.60 ,-1.25) (phi1) {$\w < \v$};
\draw  [-] (T.south)      -- (phi1);
\node[rectangle,draw] at (2.15 ,-1.25) (phi2) {$\w < \v = \v' < \ww$};
\draw  [-] (S.south)      -- (phi2);
\end{tikzpicture}
}
%\caption{A Petri net with equality atoms with places $P = \set{p_1, p_2}$ and transitions $T = \set{t_1, t_2}$. 
%Different atoms are depicted through differently colored tokens.
%}
%\label{fig-udpn}
\end{center}
%\end{figure}

Transition rule $t_1$ outputs two tokens with arbitrary but distinct rational values onto place $0$. 
Transition $t_2$ inputs two tokens with the same value, say $\v$, one from $0$ and one from $1$, and outputs two tokens:  one token $\w<\v$ onto $0$ and the other $\ww>\v$ onto $1$.
\end{slexample}

$\transitions$ is \emph{reversible} if it is closed under inverse:
$c\trans^* c'$ implies $c' \trans^* c$ for every $c,c'\in\monomials \WW$.
In such case we can assume, w.l.o.g., that $T$ is closed under inverse:
$(h, h')\in T \implies (h',h) \in T$.
%Clearly, if a presentation $T$ is reversible then Petri net $\transitions=\E(T)$ is also so.

\para{Reachability problem}
The \emph{reachability problem} asks, given a Petri net $\transitions$ %with data $\WW$
and two monomials, source and target $s, t\in\monomials {\WW\domprod \finitedomain d}$,
whether $s \trans^* t$.

\begin{theorem}\label{thm:red reach}
For every $d\geq 1$,
the reachability problem in reversible Petri nets with $d$ places,
with data $\WW$, 
reduces to
\probname{Ideal-memb in $\polyring \Qf {\WW\domprod \finitedomain d}$}
(and also to
\probname{Ideal-memb in $\polyring \Z {\WW\domprod \finitedomain d}$}).
\end{theorem}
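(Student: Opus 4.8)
The plan is to translate the reachability instance into the ideal-membership instance given by the binomial differences of the transition rules, and to show that reachability is equivalent to membership. Write $\M := \monomials{\WW \domprod \finitedomain d}$ for the commutative monoid of configurations; by assumption $\transitions = \E(T)$ for a finite $T \subseteqfin \M \times \M$ which, by reversibility, we take (w.l.o.g.) closed under inverse. Given source and target $s, t \in \M$, I would output the finite set of binomials $B := \setof{h - h'}{(h,h') \in T} \subseteq \polyring R {\WW \domprod \finitedomain d}$ together with the polynomial $s - t$, and prove
\[
s \trans^* t \quad\Longleftrightarrow\quad s - t \in \eqidealgen B .
\]
This map is plainly computable, so the theorem follows once the equivalence is established. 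The whole argument uses only that the coefficient ring $R$ is a nonzero commutative ring, so it applies verbatim to $R = \Qf$ and to $R = \Z$.

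The equivalence is proved through a monoid congruence. Let $\equiv$ be the least equivalence relation on $\M$ that is compatible with multiplication and contains all pairs $(\iota(h), \iota(h'))$ for $\iota \in \E$ and $(h,h') \in T$. Using the standard one-step characterisation of a congruence together with the fact that $T$ is closed under inverse (so the generating relation is symmetric), one checks directly that $p \equiv q$ if and only if $p \trans^* q$: every elementary step $w \cdot \iota(h) \equiv w \cdot \iota(h')$ is exactly a transition $w \cdot \iota(h) \trans w \cdot \iota(h')$, and conversely. Hence it remains to prove the \emph{binomial-ideal identity} $p - q \in \eqidealgen B \iff p \equiv q$ for all $p, q \in \M$. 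The direction $\Leftarrow$ is telescoping: along a chain $p = u_0 \trans \dots \trans u_k = q$ one has $u_j - u_{j+1} = w_j \cdot \iota_j(h_j - h_j')$ for suitable $w_j \in \M$, $\iota_j \in \E$, $(h_j, h_j') \in T$, and each summand lies in $\eqidealgen B$ since $\eqidealgen B$ is equivariant and absorbs multiplication; summing gives $p - q \in \eqidealgen B$. Applied to a firing sequence from $s$ to $t$, this already proves the $\Rightarrow$ direction of the main equivalence.

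For the direction $\Rightarrow$ of the binomial-ideal identity I would pass to the monoid algebra. The ring $\polyring R {\WW \domprod \finitedomain d}$ is exactly the monoid $R$-algebra $R[\M]$, and there is a surjective $R$-algebra homomorphism $\phi : R[\M] \twoheadrightarrow R[\M/{\equiv}]$ sending each monomial to its $\equiv$-class; it is well defined and multiplicative because $\equiv$ is a monoid congruence. I claim $\ker \phi = \eqidealgen B$. The inclusion $\supseteq$ is immediate, since each generator $\iota(h) - \iota(h')$ of $\eqidealgen B$ lies in $\ker\phi$ (as $\iota(h) \equiv \iota(h')$) and $\ker\phi$ is an ideal. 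For $\subseteq$, given $x \in \ker\phi$, grouping its monomials by $\equiv$-classes shows that the coefficients within each class sum to zero, so $x$ is an $R$-linear combination of differences $p - p'$ with $p \equiv p'$, each of which is in $\eqidealgen B$ by the telescoping argument. Hence $R[\M]/\eqidealgen B \cong R[\M/{\equiv}]$, which is free as an $R$-module on $\M/{\equiv}$; since $R \neq 0$, distinct basis elements are distinct, so $s - t \in \eqidealgen B$ forces $s$ and $t$ to lie in the same $\equiv$-class, i.e.~$s \equiv t$, i.e.~$s \trans^* t$. Chaining the three facts yields the equivalence, and the identical computation with $R = \Z$ gives the reduction to \probname{Ideal-memb in $\polyring \Z {\WW \domprod \finitedomain d}$}.

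The main obstacle is the $\Rightarrow$ direction of the binomial-ideal identity, i.e.~reconstructing a firing sequence from a purely algebraic certificate $s - t = \sum_k r_k f_k \cdot \iota_k(h_k - h_k')$; the computation of $\ker\phi$ in the monoid algebra is the clean way to bypass it. The one point requiring attention beyond the classical correspondence between reversible Petri nets and binomial ideals is to verify that the ideal must be generated \emph{equivariantly} by the finite set $B$, i.e.~by the full orbit $\setof{\iota(h - h')}{\iota \in \E,\ (h,h') \in T}$, which matches $\transitions = \E(T)$ exactly by the definition of an equivariant basis.
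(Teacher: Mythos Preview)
Your proposal is correct and follows the same fundamental idea as the paper: the Mayr--Meyer correspondence between reachability in a reversible Petri net and membership of the binomial $s-t$ in the (equivariant) ideal generated by the transition binomials. The paper's proof is essentially a citation --- it states the three-way equivalence (reachability $\Leftrightarrow$ membership over $\Z$ $\Leftrightarrow$ membership over $\Qf$) and defers the two nontrivial implications to Lemmas~1 and~2 of \cite{MayrMeyer82}. You instead give a self-contained argument via the monoid-algebra quotient $R[\M] \twoheadrightarrow R[\M/{\equiv}]$, identifying $\eqidealgen B$ with the kernel; this is exactly the conceptual content of the Mayr--Meyer lemmas, and your version has the mild advantage of working uniformly for any nonzero commutative ring $R$, so the $\Qf$ and $\Z$ cases come out simultaneously rather than via the chain $(1){\Rightarrow}(2){\Rightarrow}(3){\Rightarrow}(1)$.
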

%
%Let $P = \setfromto 1 d$ be places of a net.
%A multiset over $\Q \times P$ whose nonzero values are 
%\[
%c(q_1, p_1) = m_1 \quad
%c(q_2, p_2) = m_2 \quad
%\ldots \quad
%c(q_n, p_n) = m_n,
%\]
%where $q_1, \ldots, q_n\in\Q$ and $p_1, \ldots, p_n \in P$,
%can be identified with the following monomial:
%\[
%(q_1, p_1)^{m_1} \cdot
%(q_2, p_2)^{m_2} \cdot
%\ldots \cdot
%(q_n, p_n)^{m_n} \in \monomials {\QQ\domprod \finitedomain d}.
%\]
%
\begin{proof}
The fundamental idea of the proof comes from \cite{MayrMeyer82}.
% and is based on an encoding of configurations of a Petri net as monomials
%in $ \monomials  {\QQ\domprod \finitedomain d}$.
Let $\transitions = \embed \WW(T)$ be a Petri net with data $\WW$, with $d$ places,
where
$T=\set{(\c_1, \d_1), \ldots, (\c_n, \d_n)}$.
Let $s,t\in\monomials  {\WW\domprod \finitedomain d}$ be source and target monomial.
The reduction transformes $s, t$ and $T$ to
the binomial
$(t-s)\in\polyring \Z  {\WW\domprod \finitedomain d}$ and the finite set of binomials
$\B=\set{(\d_1 - \c_1), \ldots, (\d_n - \c_n)} \subseteq \polyring \Z  {\WW\domprod \finitedomain d}$.
Correctness follows due to equivalence
of the following three conditions:
\begin{enumerate}
\item $\c\trans^*\d$;
\item $(\d-\c)$ belongs to the equivariant ideal generated by $\B$ in $\polyring \Z  {\WW\domprod \finitedomain d}$;
\item $(\d-\c)$ belongs to the equivariant ideal generated by $\B$ in $\polyring \Qf  {\WW\domprod \finitedomain d}$;
\end{enumerate}
The implications $(1){\implies} (2)$ and $(3) {\implies} (1)$ are shown by %literally 
reusing the proofs of Lemmas 1 and 2 in \cite{MayrMeyer82}. 
%\sla{to be checked}
%\sla{TODO: write the 2 proofs in the appendix}
The remaining implication $(2){\implies} (3)$ is trivial.
\end{proof}

Reduction of Theorem %\ref{thm:red times d} and 
\ref{thm:red reach} is uniform in the number $d$ of places. 
So is also the construction of \Gr basis underlying the proof of
Theorem \ref{thm:ideal memb decid}, when 
instantiated to domains of the form $\WW \domprod \finitedomain d$.
We thus deduce decidability of 
the reachability problem for reversible Petri nets  for a wide range of 
domains $\WW$, including $\QQ$, equality data
(a sub-case of $\QQ$), and lexicographic
products %$\domexp \QQ \ell$ 
thereof (on the basis of Corollary \ref{cor:red times d l}):
\begin{theorem}\label{thm:red reach l}
For every $\ell\geq 1$,
the reachability problem is decidable in reversible Petri nets with 
$\ell$-fold nested ordered data $\domexp \QQ \ell$. 
%\[
%\domexp \QQ \ell \ := \ \underbrace{\QQ \domprod \ldots \domprod \QQ}_\ell.
%\]
\end{theorem}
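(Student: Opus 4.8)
The plan is to obtain the statement by composing the reduction of Theorem~\ref{thm:red reach} with the decidability result of Corollary~\ref{cor:red times d l}. Fix $\ell \geq 1$ and consider a reversible Petri net with data $\domexp \QQ \ell$; by definition this is an orbit-finite monomial rewriting over $\domexp \QQ \ell \domprod \finitedomain d$ for some number $d \geq 1$ of places, together with source and target monomials $s, t$. First I would invoke Theorem~\ref{thm:red reach} with $\WW = \domexp \QQ \ell$: the reachability instance reduces, computably and uniformly in $d$, to an instance of \probname{Ideal-memb in $\polyring \Qf {\domexp \QQ \ell \domprod \finitedomain d}$}, namely to deciding whether the binomial $t-s$ lies in the equivariant ideal generated by the binomials $\d_i - \c_i$ extracted from the transition rules.

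It then remains to note that this ideal-membership problem is decidable, which is precisely Corollary~\ref{cor:red times d l} instantiated at the computable field $\R = \Qf$. For completeness I would recall why that corollary holds: since $\QQ$ is not well ordered, Theorem~\ref{thm:ideal memb decid} does not apply to $\domexp \QQ \ell \domprod \finitedomain d$ directly; instead one uses the reduction game $\QQ \gamered \omega^2$ (Lemma~\ref{lem:red}), lifts it to $\domexp \QQ \ell \domprod \finitedomain d \gamered \domexp{(\omega^2)} \ell \domprod \finitedomain d$ via repeated application of Lemma~\ref{lem:red prod}, and transports decidability along the game by Theorem~\ref{thm:red ord to well-ord}. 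The target domain $\domexp{(\omega^2)} \ell \domprod \finitedomain d$ is a lexicographic product of the effective ordinal $\omega^2$ (taken $\ell$ times) with the finite ordinal $\finitedomain d$, hence $\omega$-well structured (Lemmas~\ref{lem:wpo} and~\ref{lem:prod}), well ordered (Lemma~\ref{lem:prod}, since ordinals are well ordered), and computable (effective ordinals are computable, together with Lemma~\ref{lem:lex comp}); so Theorem~\ref{thm:ideal memb decid} applies to it, and decidability propagates back to \probname{Ideal-memb in $\polyring \Qf {\domexp \QQ \ell \domprod \finitedomain d}$}.

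Composing the two computable reductions yields a decision procedure for reachability in reversible Petri nets with data $\domexp \QQ \ell$. There is no real obstacle at this level: the only genuinely delicate ingredient is the implication $(3)\Rightarrow(1)$ inside the proof of Theorem~\ref{thm:red reach} --- that membership of $t-s$ in the ideal \emph{over the rationals} already forces a firing sequence in the net, obtained by reusing Lemma~2 of~\cite{MayrMeyer82} and reversibility --- but this is already handled there and requires no further work here. Everything else is routine bookkeeping, and the uniformity in $d$ of both the reduction of Theorem~\ref{thm:red reach} and the \Gr-basis construction underlying Theorem~\ref{thm:ideal memb decid} means the number of places needs no special treatment.
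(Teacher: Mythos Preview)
Your proposal is correct and follows essentially the same approach as the paper: compose the reduction of Theorem~\ref{thm:red reach} (instantiated at $\WW=\domexp\QQ\ell$) with the decidability provided by Corollary~\ref{cor:red times d l}, noting uniformity in $d$. The paper's own justification is just the short paragraph preceding the theorem statement; your write-up is more detailed but matches it in substance.
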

\begin{remark} \rm
Without reversiblity restriction, the reachability problem for Petri nets with ordered data 
is undecidable, while for Petri nets with equality data its status is still unknown.
\end{remark}

% !TEX root = equivariant_ideals.tex

\section{Application: Orbit-finite systems of linear equations} 
\label{sec:eq}

We demonstrate that
Theorem \ref{thm:ideal memb decid} is applicable to solving orbit-finite systems of linear equations,
%(Theorem \ref{thm:eq}),
thus providing a simple proof of a generalisation of the main result of \cite{GHL22}.
%, again without proving complexity bound (Theorem \ref{thm:eq}).
In the following, $\R$ is a field and $\WW = (\W, \ldots)$ an $\omega$-well structured 
relational structure.

%\para{Solvability of systems of linear equations}

Solving a system of linear equations amount to checking if a given target vector $b$ 
is a linear combination of a given set $M$ of vectors.
We generalise this setting to orbit-finite sets, by assuming that $b\in \lin{\otu \W d}$
and that $M$ is an equivariant orbit-finite family of vectors $M\subseteq\lin{\otu \W d}$.
For convenience we assume that the family $M$  
is \emph{indexed} by a set $\otu \W n$ for some $n\in\N$, 
namely 
\[
M = (M_t\in\lin{\otu \W d})_{t \in\otu \W n}
%where $A_t\in\lin{\otu \WW d}.
\]
(this is w.l.o.g.~for equality and rational order domain \cite[Sect.6]{GHL22}).
Such a family $M$ is equivariant if for every $\iota\in\E$ and $t\in\otu \W n$, 
\[
\iota (M_t) = M_{\iota(t)},
\]
i.e., $M_\_$ commutes with the action of embeddings, as the function
$\otu \W n \to \lin{\otu \W d}$.
Alternatively, $M$ is an infinite $\otu \W d \times \otu \W n$ matrix
over $\R$,
whose every column $M_t$, being a vector in $\lin{\otu \W d}$, 
has finitely many non-zero entries.
A linear combination of $M$ is of the form
\begin{align} \label{eq:lincomb}
r_1 \cdot M_{t_1} + \ldots + r_k \cdot M_{t_k},
%\sum_{i=1}^k r_i \cdot a_{t_i},
\end{align}
where $k\in\N$, $t_1, \ldots, t_k \in \otu \W n$, and $r_1, \ldots, r_k\in\R$.
We investigate the following problem:

\smallskip
\begin{description}                                                                                                                                                               
\item[\probname{Lin-solv($\R, \WW$)}:]           
\item[\bf Input: ] \ \ \ \ \ \ \ a vector $b\in\lin{\otu \W d}$ and an equivariant family
\item[] \ \ \ \ \ \ \ \ \ \ \ \ \ \ \ \  $M = (M_t\in \lin{\otu \W d})_{t\in\otu \W n}$ of vectors.
\item[\bf Question: ] Is $b$ a linear combination of $M$?
\end{description}                                                                                                                                                                 
%\smallskip
%
%\begin{slexample}
%Let $\R=\Rf$, $\WW = (\Q,<)$ the rational order,
%$d=1$ and $n=2$, and consider
%an equivariant family of vectors $$M = (M_{(a,b)} \in \lin{\Q})_{(a,b)\in\Q^2}$$ 
%defined as the following formal sums:
%\[
%M_{(a,b)} =
%\begin{cases}
%2a + b & \text{ when } a < b \\
%a + b & \text{ when } a > b \\
%0 & \text{ when } a = b \ .
%\end{cases}
%\]
%\sla{+ is confusing, one can think that + is addition of rationals. Could we have example with equality domain?}
%For every $c\in\Q$, the singleton formal sum $c\in\lin{\Q}$ is presentable as a linear combination of vectors from
%$M$:
%\[
%c = \frac{1}{4}((2c + d) + (2c + d') - (d + d'))
%= \frac{1}{4}
%\left(M_{(c,d)} + M_{(c,d')} - M_{(d,d')}\right) 
%\]
%where $c < d' < d$. 
%\end{slexample}
%\arka{added example below}
%
\begin{slexample}
Let $\R=\Rf$, $\WW = (\W,=)$ the equality domain,
$d=1$ and $n=2$, and consider
an equivariant family of vectors $M = (M_{(a,b)} \in \lin{\W})_{(a,b)\in\W^2}$
defined as the following formal sums:
\[
M_{(a,b)} =
\begin{cases}
a + 2\cdot b & \text{ when } a \neq b \\
0 & \text{ when } a = b \ .
\end{cases}
\]
For every $\w\in\W$, the singleton formal sum $\w\in\lin{\W}$ 
is a linear combination of $M$:
\[
\w
%= \frac{1}{2}((c + d) + (c + d') - (d + d'))
\ = \ \frac 1 2 \cdot M_{(\w,\v)} \ + \ \frac 1 2 \cdot M_{(\w,\ww)} \ - \ 
\frac 1 3 \cdot M_{(\v,\ww)} \ - \ \frac 1 3 \cdot M_{(\ww,\v)},
\]
for any distinct $\v, \ww\in \W$ different than $\w$.
%where $c \neq d' \neq d \neq c$. 
\end{slexample}
%
%\noindent
%We easily get decidability of the problem, due to:
%
\begin{theorem} \label{thm:eq}
\probname{Lin-solv($\R, \WW$)} 
reduces to
\probname{Ideal-memb in $\polyring \R \WW$}, 
for every computable field $\R$ and relational structure $\WW$.
\end{theorem}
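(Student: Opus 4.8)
The plan is to reuse the degree-homogeneous polynomial encoding $\enc{\_}:\lin{\otu\W d}\to\polyring\R\WW$ of Section~\ref{sec:lin}, which maps a generator $(\w_1,\dots,\w_d)\in\otu\W d$ to the monomial $\w_1^{1}\cdot\w_2^{2}\cdots\w_d^{2^{d}}$. We use four of its properties: it is $\R$-linear, injective, it commutes with the action of $\E$ (so $\iota(\enc v)=\enc{\iota(v)}$ for every $\iota\in\E$), and every monomial occurring in a polynomial $\enc v$ has the same total degree $D=2^{d+1}-1$.

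Since $M$ is equivariant and orbit-finite, it is represented on input by a finite presentation $\Bs\subseteqfin\otu\W n$ of its index set together with the values $M_t$ for $t\in\Bs$, and it satisfies $M_{\iota(t)}=\iota(M_t)$. The reduction maps an instance $(b,M)$ of \probname{Lin-solv($\R,\WW$)} to the instance $(\enc b,\B)$ of \probname{Ideal-memb in $\polyring\R\WW$} where $\B=\setof{\enc{M_t}}{t\in\Bs}$; this is clearly computable, being a purely syntactic rewriting. By equivariance of $M$ and of $\enc{\_}$ one has $\E(\B)=\setof{\enc{M_s}}{s\in\otu\W n}$, so $\eqidealgen\B$ is the ordinary ideal generated by the encodings of all vectors of $M$. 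It remains to show that $b$ is a linear combination of $M$ if and only if $\enc b\in\eqidealgen\B$.

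For the forward direction, given $b=\sum_{i=1}^{k} r_i\cdot M_{t_i}$ with $t_i\in\otu\W n$, choose $s_i\in\Bs$ and $\iota_i\in\E$ with $t_i=\iota_i(s_i)$; then $M_{t_i}=\iota_i(M_{s_i})$, so $\enc b=\sum_i r_i\cdot\iota_i(\enc{M_{s_i}})\in\eqidealgen\B$ by linearity and equivariance of $\enc{\_}$. For the converse, write $\enc b=\sum_i r_i\cdot h_i\cdot\iota_i(\enc{M_{s_i}})$ with $h_i\in\monomials\WW$, $r_i\in\R$, $\iota_i\in\E$, $s_i\in\Bs$, and repeat verbatim the degree-counting step from the proof of Theorem~\ref{thm:lin}: every monomial of $\enc b$ and of each $\iota_i(\enc{M_{s_i}})=\enc{\iota_i(M_{s_i})}$ has degree exactly $D$, hence any summand with $\degmon{h_i}>0$ contributes only monomials of degree $>D$, which must all cancel; dropping those summands and absorbing the remaining unit monomials $h_i$ into the scalars yields $\enc b=\enc{\sum_i r_i\cdot M_{\iota_i(s_i)}}$, so by injectivity $b=\sum_i r_i\cdot M_{\iota_i(s_i)}$ is a linear combination of $M$.

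The proof has essentially no obstacle: the only slightly subtle point, the cancellation of the higher-degree monomials in the converse direction, is precisely the mechanism already isolated for Theorem~\ref{thm:lin}, and it is exactly what makes membership in the polynomial ideal collapse to a bare linear combination; the rest is a routine translation.
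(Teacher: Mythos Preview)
Your proof is correct and follows essentially the same route as the paper: the same encoding $\enc{\_}$, the same finite basis $\B=\setof{\enc{M_t}}{t\in\Bs}$, and the same degree-counting argument (borrowed from the proof of Theorem~\ref{thm:lin}) to discard the non-unit monomials $h_i$ in the converse direction. The paper's presentation differs only cosmetically.
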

Theorem \ref{thm:eq} yields decidability of \probname{Lin-solv($\R, \WW$)}, 
thus generalising \cite[Theorem 4.4]{GHL22} which only considers
equality structure. % $\WW = (\W, =)$. 
On the other hand, \cite{GHL22} provides an upper complexity bound.
%and considers slightly more general form of orbit-finite sets than $\otu \WW n$.
%\arka{I am slightly unsure whether we should mention this since it is WLOG}
%
\begin{proof}[Proof of Theorem \ref{thm:eq}]
The reduction uses the same encoding $\enc \_ : \lin{\otu \W d} \to \polyring \R \WW$ as in 
Section \ref{sec:lin}.
Consider an instance of \probname{Lin-solv($\R, \WW$)}, namely 
$M = (M_t)_{t\in\otu \W n}$ and $b$.
Relying on Lemma \ref{lem:of} we assume that 
$\otu \W n$ is represented by a finite set $\Bs\subseteqfin \otu \W n$ such that
$\W^n = \E(\Bs) \defeq \setof{\iota(t)}{\iota\in\E, t\in\Bs}$.

We construct an instance of \probname{Ideal-memb in $\polyring \R \WW$} by
taking the polynomial $f = \enc{b}\in\polyring \R \WW$, and the equivariant $\I$ ideal generated
by the finite set
\[
\setof{\enc {M_t}}{t\in\Bs} \ \subseteq \ \polyring \R \WW.
\]
As encoding commutes with the action of embeddings,
$\I$ is generated classically by the orbit-finite set
$\setof{\enc {M_t}}{t\in\W^n}$.
We need to prove:
\begin{claim} \label{claim:enc-corr}
$b$ is a linear combination of $M$ if and only if $f\in \I$.
\end{claim}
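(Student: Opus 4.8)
The plan is to derive everything from the four properties of the encoding $\enc{\_} : \lin{\otu \W d} \to \polyring \R \WW$ already recorded in Section~\ref{sec:lin}: it is $\R$-linear, it commutes with the action of embeddings (so $\iota(\enc{M_t}) = \enc{\iota(M_t)} = \enc{M_{\iota(t)}}$), it is injective, and — crucially — there is a fixed degree $D\in\N$, namely $D = 1 + 2 + \ldots + 2^d$ exactly as in Section~\ref{sec:lin}, such that every monomial occurring in the encoding $\enc{v}$ of any vector $v\in\lin{\otu \W d}$ has total degree exactly $D$. Given these, the argument runs parallel to the claim inside the proof of Theorem~\ref{thm:lin}.

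For the easy implication, if $b = r_1\cdot M_{t_1} + \ldots + r_k\cdot M_{t_k}$ then linearity of $\enc{\_}$ gives $f = \enc{b} = r_1\cdot\enc{M_{t_1}} + \ldots + r_k\cdot\enc{M_{t_k}}$, which lies in the ideal generated classically by $\setof{\enc{M_t}}{t\in\W^n}$, i.e.\ in $\I$. For the converse, I would start from $f\in\I$; since $\I$ is generated classically by $\setof{\enc{M_t}}{t\in\W^n}$, there are $s_1,\ldots,s_m\in\W^n$ and polynomials $h_1,\ldots,h_m\in\polyring\R\WW$ with $f = \sum_{i=1}^m h_i\cdot\enc{M_{s_i}}$. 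Now write each $h_i = c_i + h_i'$, with $c_i\in\R$ its constant term and $h_i'$ the sum of its monomials of positive degree. Then $h_i'\cdot\enc{M_{s_i}}$ is supported in degrees $\ge D+1$ (or is $\zeropol$), while $c_i\cdot\enc{M_{s_i}}$ and $f=\enc{b}$ are supported in degree exactly $D$. Comparing the degree-$D$ homogeneous components on both sides therefore yields $f = \sum_{i=1}^m c_i\cdot\enc{M_{s_i}} = \enc{\sum_{i=1}^m c_i\cdot M_{s_i}}$, again by linearity, and injectivity of $\enc{\_}$ gives $b = \sum_{i=1}^m c_i\cdot M_{s_i}$, a linear combination of $M$. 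The degenerate case $b=\zeropol$ (hence $f=\zeropol\in\I$) is covered by the empty linear combination.

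The point I expect to need the most care — and the heart of the argument — is the homogeneity bookkeeping: one must be sure that $f=\enc{b}$ is itself homogeneous of degree $D$, and that multiplying an encoding by a polynomial with a nontrivial higher-degree part cannot, after summation, create or cancel anything in degree $D$. Once the multipliers $h_i$ are split by degree as above, both facts are immediate, so the only genuine content is the degree-$D$ restriction of the membership witness; everything else is a transcription of properties of $\enc{\_}$ from Section~\ref{sec:lin}. Finally, to complete the proof of Theorem~\ref{thm:eq} I would note that the reduction is effective: the presentation $\Bs$ is obtained from Lemma~\ref{lem:of} and the encoding is computable, so the resulting instance of \probname{Ideal-memb in $\polyring \R \WW$} can indeed be fed to the algorithm of Theorem~\ref{thm:ideal memb decid} when $\WW$ meets the computability assumptions.
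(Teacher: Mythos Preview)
Your proof is correct and follows essentially the same approach as the paper: both directions hinge on linearity of $\enc{\_}$, and the converse uses the fixed-degree-$D$ homogeneity of encodings to strip the polynomial multipliers down to scalars, followed by injectivity. The only cosmetic difference is that you work directly with the classical generating set $\setof{\enc{M_t}}{t\in\W^n}$ (which the paper notes just before the claim), whereas the paper keeps the embeddings explicit and writes generators as $\iota_i(\enc{M_{t_i}})$ with $t_i\in\Bs$; your explicit splitting $h_i = c_i + h_i'$ is exactly the degree bookkeeping that the paper abbreviates by referring back to the proof of Theorem~\ref{thm:lin}.
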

The 'only if' direction is immediate: $\enc \_$, as a linear function, maps a linear combination
of vectors to a linear combination of their encoding polynomials. 
For the converse implication, suppose 
$f = \enc b$ belongs to $\I$, i.e.,
\[
\enc b  \ = \  r_1 \cdot h_1 \cdot \iota_1(\enc{M_{t_1}})  +  \ldots  + 
  r_1 \cdot h_k \cdot \iota_k(\enc{M_{t_k}}) 
\]
for some $r_1, \ldots, r_k\in\R$, $h_1, \ldots, h_k \in \monomials \WW$,
$\iota_1, \ldots, \iota_k \in \E$ and $t_1, \ldots, t_k\in \Bs$.
As all monomials appearing in polynomials $\enc{M_{t_i}}$ and $\enc b$ have the same degree,
similarly as in the proof of Theorem \ref{thm:lin}  we can assume, w.l.o.g., 
that all monomials
$h_i$ are unit monomials, i.e.,
\[
\enc b  \ = \  r_1 \cdot \iota_1(\enc{M_{t_1}})  +  \ldots  +  r_1 \cdot \iota_k(\enc{M_{t_k}}).
\]
As encoding commutes with the action of embeddings and with scalar multiplication, we get
\[
\enc b  \ = \  \enc{r_1 \cdot \iota_1(M_{t_1})}  +  \ldots  +  \enc{r_1 \cdot \iota_k(M_{t_k})},
\]
and by injectivity of encoding and equivariance of $M$ we obtain 
\[
 b  \ = \  r_1 \cdot M_{\iota_1(t_1)}  +  \ldots  +  r_1 \cdot M_{\iota_k(t_k)},
\]
i.e.~$b$ is a linear combination of $M$, as required.
\end{proof}

%%
%% The acknowledgments section is defined using the "acks" environment
%% (and NOT an unnumbered section). This ensures the proper
%% identification of the section in the article metadata, and the
%% consistent spelling of the heading.

\begin{acks}
We thank Lorenzo Clemente for posing the questions about
equivariant Noetherianity for equality domain.
\end{acks}

\newpage

%%
%% The next two lines define the bibliography style to be used, and
%% the bibliography file.

\bibliographystyle{ACM-Reference-Format}
\bibliography{bib}

%%% -*-BibTeX-*-
%%% Do NOT edit. File created by BibTeX with style
%%% ACM-Reference-Format-Journals [18-Jan-2012].

\begin{thebibliography}{27}

%%% ====================================================================
%%% NOTE TO THE USER: you can override these defaults by providing
%%% customized versions of any of these macros before the \bibliography
%%% command.  Each of them MUST provide its own final punctuation,
%%% except for \shownote{}, \showDOI{}, and \showURL{}.  The latter two
%%% do not use final punctuation, in order to avoid confusing it with
%%% the Web address.
%%%
%%% To suppress output of a particular field, define its macro to expand
%%% to an empty string, or better, \unskip, like this:
%%%
%%% \newcommand{\showDOI}[1]{\unskip}   % LaTeX syntax
%%%
%%% \def \showDOI #1{\unskip}           % plain TeX syntax
%%%
%%% ====================================================================

\ifx \showCODEN    \undefined \def \showCODEN     #1{\unskip}     \fi
\ifx \showDOI      \undefined \def \showDOI       #1{#1}\fi
\ifx \showISBNx    \undefined \def \showISBNx     #1{\unskip}     \fi
\ifx \showISBNxiii \undefined \def \showISBNxiii  #1{\unskip}     \fi
\ifx \showISSN     \undefined \def \showISSN      #1{\unskip}     \fi
\ifx \showLCCN     \undefined \def \showLCCN      #1{\unskip}     \fi
\ifx \shownote     \undefined \def \shownote      #1{#1}          \fi
\ifx \showarticletitle \undefined \def \showarticletitle #1{#1}   \fi
\ifx \showURL      \undefined \def \showURL       {\relax}        \fi
% The following commands are used for tagged output and should be
% invisible to TeX
\providecommand\bibfield[2]{#2}
\providecommand\bibinfo[2]{#2}
\providecommand\natexlab[1]{#1}
\providecommand\showeprint[2][]{arXiv:#2}

\bibitem[Aschenbrenner and Hillar(2007)]%
        {AH07}
\bibfield{author}{\bibinfo{person}{Matthias Aschenbrenner} {and}
  \bibinfo{person}{Christopher Hillar}.} \bibinfo{year}{2007}\natexlab{}.
\newblock \showarticletitle{Finite generation of symmetric ideals}.
\newblock \bibinfo{journal}{\emph{Trans. Amer. Math. Soc.}}
  \bibinfo{volume}{359.11} (\bibinfo{year}{2007}), \bibinfo{pages}{5171--5192}.
\newblock


\bibitem[Aschenbrenner and Hillar(2008)]%
        {AH08}
\bibfield{author}{\bibinfo{person}{Matthias Aschenbrenner} {and}
  \bibinfo{person}{Christopher~J. Hillar}.} \bibinfo{year}{2008}\natexlab{}.
\newblock \showarticletitle{An algorithm for finding symmetric Grobner bases in
  infinite dimensional rings}. In \bibinfo{booktitle}{\emph{Proc.~{ISSAC}}}.
  \bibinfo{publisher}{{ACM}}, \bibinfo{pages}{117--124}.
\newblock


\bibitem[Benedikt et~al\mbox{.}(2017)]%
        {Ben17}
\bibfield{author}{\bibinfo{person}{Michael Benedikt}, \bibinfo{person}{Timothy
  Duff}, \bibinfo{person}{Aditya Sharad}, {and} \bibinfo{person}{James
  Worrell}.} \bibinfo{year}{2017}\natexlab{}.
\newblock \showarticletitle{Polynomial automata: Zeroness and applications}. In
  \bibinfo{booktitle}{\emph{Proc.~{LICS} 2017}}. \bibinfo{publisher}{{IEEE}
  Computer Society}, \bibinfo{pages}{1--12}.
\newblock


\bibitem[Boja{\'n}czyk(2019)]%
        {atombook}
\bibfield{author}{\bibinfo{person}{Miko{\l}aj Boja{\'n}czyk}.}
  \bibinfo{year}{2019}\natexlab{}.
\newblock \bibinfo{title}{Slightly Infinite Sets}.  (\bibinfo{year}{2019}).
\newblock
\urldef\tempurl%
\url{https://www.mimuw.edu.pl/~bojan/paper/atom-book}
\showURL{%
\tempurl}


\bibitem[Bojanczyk et~al\mbox{.}(2014)]%
        {BKL14}
\bibfield{author}{\bibinfo{person}{Mikolaj Bojanczyk}, \bibinfo{person}{Bartek
  Klin}, {and} \bibinfo{person}{Slawomir Lasota}.}
  \bibinfo{year}{2014}\natexlab{}.
\newblock \showarticletitle{Automata theory in nominal sets}.
\newblock \bibinfo{journal}{\emph{Log. Methods Comput. Sci.}}
  \bibinfo{volume}{10}, \bibinfo{number}{3} (\bibinfo{year}{2014}).
\newblock


\bibitem[Boja{\'n}czyk et~al\mbox{.}(2021)]%
        {BKM21}
\bibfield{author}{\bibinfo{person}{Miko{\l}aj Boja{\'n}czyk},
  \bibinfo{person}{Bartek Klin}, {and} \bibinfo{person}{Joshua Moerman}.}
  \bibinfo{year}{2021}\natexlab{}.
\newblock \showarticletitle{Orbit-Finite-Dimensional Vector Spaces and Weighted
  Register Automata}. In \bibinfo{booktitle}{\emph{Proc.~{LICS}}}.
  \bibinfo{publisher}{{IEEE}}, \bibinfo{pages}{1--13}.
\newblock


\bibitem[Brouwer and Draisma(2011)]%
        {BD11}
\bibfield{author}{\bibinfo{person}{Andries~E. Brouwer} {and}
  \bibinfo{person}{Jan Draisma}.} \bibinfo{year}{2011}\natexlab{}.
\newblock \showarticletitle{Equivariant {Gr{\"o}bner} Bases and The Gaussian
  Two-factor Model}.
\newblock \bibinfo{journal}{\emph{Math. Comp.}} \bibinfo{volume}{80},
  \bibinfo{number}{274} (\bibinfo{year}{2011}), \bibinfo{pages}{1123--1133}.
\newblock


\bibitem[Buchberger and Winkler(1998)]%
        {Bu-app98}
\bibfield{editor}{\bibinfo{person}{B. Buchberger} {and} \bibinfo{person}{F.
  Winkler}} (Eds.). \bibinfo{year}{1998}\natexlab{}.
\newblock \bibinfo{booktitle}{\emph{Gr{\"o}bner Bases and Applications}}.
  \bibinfo{publisher}{Cambridge University}.
\newblock


\bibitem[Cameron(1990)]%
        {Cameron}
\bibfield{author}{\bibinfo{person}{Peter~J. Cameron}.}
  \bibinfo{year}{1990}\natexlab{}.
\newblock \bibinfo{booktitle}{\emph{Oligomorphic Permutation Groups}}.
\newblock \bibinfo{publisher}{Cambridge University Press}.
\newblock


\bibitem[Cohen(1967)]%
        {Cohen67}
\bibfield{author}{\bibinfo{person}{Daniel~E. Cohen}.}
  \bibinfo{year}{1967}\natexlab{}.
\newblock \showarticletitle{On the Laws of a Metabelian Variety}.
\newblock \bibinfo{journal}{\emph{Journal of Algebra}}  \bibinfo{volume}{5}
  (\bibinfo{year}{1967}), \bibinfo{pages}{267--273}.
\newblock


\bibitem[Cohen(1987)]%
        {Cohen87}
\bibfield{author}{\bibinfo{person}{Daniel~E. Cohen}.}
  \bibinfo{year}{1987}\natexlab{}.
\newblock \showarticletitle{Closure Relations, {Buchberger's} Algorithm, and
  Polynomials in Infinitely Many Variables}. In
  \bibinfo{booktitle}{\emph{Computation Theory and Logic}}
  \emph{(\bibinfo{series}{Lecture Notes in Computer Science},
  Vol.~\bibinfo{volume}{270})}. \bibinfo{publisher}{Springer},
  \bibinfo{pages}{78--87}.
\newblock


\bibitem[Cox et~al\mbox{.}(2015)]%
        {Cox15}
\bibfield{author}{\bibinfo{person}{David~A. Cox}, \bibinfo{person}{John
  Little}, {and} \bibinfo{person}{Donal O'Shea}.}
  \bibinfo{year}{2015}\natexlab{}.
\newblock \bibinfo{booktitle}{\emph{Ideals, Varieties, and Algorithm}}.
\newblock \bibinfo{publisher}{Springer}.
\newblock


\bibitem[Diestel(2017)]%
        {diestel}
\bibfield{author}{\bibinfo{person}{Reinhard Diestel}.}
  \bibinfo{year}{2017}\natexlab{}.
\newblock \bibinfo{booktitle}{\emph{Graph Theory}}.
\newblock


\bibitem[Emmott(1987)]%
        {Emmott87}
\bibfield{author}{\bibinfo{person}{Phillip Emmott}.}
  \bibinfo{year}{1987}\natexlab{}.
\newblock \emph{\bibinfo{title}{Some decision problems in group theory and ring
  theory}}.
\newblock \bibinfo{thesistype}{Ph.\,D. Dissertation}. \bibinfo{school}{Queen
  Mary College, University of London}.
\newblock


\bibitem[Ghosh et~al\mbox{.}(2022)]%
        {GHL22}
\bibfield{author}{\bibinfo{person}{Arka Ghosh}, \bibinfo{person}{Piotr Hofman},
  {and} \bibinfo{person}{Slawomir Lasota}.} \bibinfo{year}{2022}\natexlab{}.
\newblock \showarticletitle{Solvability of orbit-finite systems of linear
  equations}. In \bibinfo{booktitle}{\emph{Proc. {LICS}'22}}.
  \bibinfo{publisher}{{ACM}}, \bibinfo{pages}{11:1--11:13}.
\newblock


\bibitem[Ghosh and Lasota(2024)]%
        {GL24-arxiv}
\bibfield{author}{\bibinfo{person}{Arka Ghosh} {and}
  \bibinfo{person}{S{\l}awomir Lasota}.} \bibinfo{year}{2024}\natexlab{}.
\newblock \bibinfo{title}{Equivariant ideals of polynomials}.
\newblock
\newblock
\showeprint[arxiv]{2402.17604}~[cs.LO]


\bibitem[Higman(1952)]%
        {Hig52}
\bibfield{author}{\bibinfo{person}{G. Higman}.}
  \bibinfo{year}{1952}\natexlab{}.
\newblock \showarticletitle{Ordering by divisibility in abstract algebras}.
\newblock \bibinfo{journal}{\emph{Proc. London Math. Soc.}}
  \bibinfo{volume}{3} (\bibinfo{year}{1952}), \bibinfo{pages}{326--336}.
\newblock
Issue 2.


\bibitem[Hillar et~al\mbox{.}(2018)]%
        {HKL18}
\bibfield{author}{\bibinfo{person}{Christopher~J. Hillar},
  \bibinfo{person}{Robert Krone}, {and} \bibinfo{person}{Anton Leykin}.}
  \bibinfo{year}{2018}\natexlab{}.
\newblock \showarticletitle{Equivariant {Gr{\"o}bner} bases}.
\newblock \bibinfo{journal}{\emph{Advanced Studies in Pure Mathematics}}
  \bibinfo{volume}{77} (\bibinfo{year}{2018}), \bibinfo{pages}{129--154}.
\newblock


\bibitem[Hillar and Sullivant(2012)]%
        {HS12}
\bibfield{author}{\bibinfo{person}{Christopher~J. Hillar} {and}
  \bibinfo{person}{Seth Sullivant}.} \bibinfo{year}{2012}\natexlab{}.
\newblock \showarticletitle{Finite Gr{\"o}bner bases in infinite dimensional
  polynomial rings and applications}.
\newblock \bibinfo{journal}{\emph{Advances in Mathematics}}
  \bibinfo{volume}{229}, \bibinfo{number}{1} (\bibinfo{year}{2012}),
  \bibinfo{pages}{1--25}.
\newblock


\bibitem[Hodges(1997)]%
        {H97}
\bibfield{author}{\bibinfo{person}{Wilfrid Hodges}.}
  \bibinfo{year}{1997}\natexlab{}.
\newblock \bibinfo{booktitle}{\emph{A shorter model theory}}.
\newblock \bibinfo{publisher}{Cambridge University Press}.
\newblock


\bibitem[Kriz and Sgall(1991)]%
        {KS91}
\bibfield{author}{\bibinfo{person}{I. Kriz} {and} \bibinfo{person}{J. Sgall}.}
  \bibinfo{year}{1991}\natexlab{}.
\newblock \showarticletitle{Well-quasi-ordering depends on labels}.
\newblock \bibinfo{journal}{\emph{Acta Scientarium Mathematicarum}}
  \bibinfo{volume}{55} (\bibinfo{year}{1991}), \bibinfo{pages}{55--69}.
\newblock


\bibitem[Lasota(2016)]%
        {Lasota16}
\bibfield{author}{\bibinfo{person}{Slawomir Lasota}.}
  \bibinfo{year}{2016}\natexlab{}.
\newblock \showarticletitle{Decidability Border for {Petri} Nets with Data:
  {WQO} Dichotomy Conjecture}. In \bibinfo{booktitle}{\emph{Proc.~{Petri}
  {Nets} 2016}} \emph{(\bibinfo{series}{Lecture Notes in Computer Science},
  Vol.~\bibinfo{volume}{9698})}. \bibinfo{publisher}{Springer},
  \bibinfo{pages}{20--36}.
\newblock


\bibitem[Lazic et~al\mbox{.}(2007)]%
        {LNORW07}
\bibfield{author}{\bibinfo{person}{Ranko Lazic},
  \bibinfo{person}{Thomas~Christopher Newcomb}, \bibinfo{person}{Jo{\"{e}}l
  Ouaknine}, \bibinfo{person}{A.~W. Roscoe}, {and} \bibinfo{person}{James
  Worrell}.} \bibinfo{year}{2007}\natexlab{}.
\newblock \showarticletitle{Nets with Tokens Which Carry Data}. In
  \bibinfo{booktitle}{\emph{Proc.~{ICATPN} 2007}}
  \emph{(\bibinfo{series}{Lecture Notes in Computer Science},
  Vol.~\bibinfo{volume}{4546})}. \bibinfo{publisher}{Springer},
  \bibinfo{pages}{301--320}.
\newblock


\bibitem[Mayr and Meyer(1982)]%
        {MayrMeyer82}
\bibfield{author}{\bibinfo{person}{Ernst~W. Mayr} {and}
  \bibinfo{person}{Albert~R. Meyer}.} \bibinfo{year}{1982}\natexlab{}.
\newblock \showarticletitle{The Complexity of the Word Problems for Commutative
  Semigroups and Polynomial Ideals}.
\newblock \bibinfo{journal}{\emph{Advances in Mathematics}}
  \bibinfo{volume}{46} (\bibinfo{year}{1982}), \bibinfo{pages}{305--329}.
\newblock


\bibitem[Nash-Williams(1965)]%
        {nash-williams}
\bibfield{author}{\bibinfo{person}{C.~St. J.~A. Nash-Williams}.}
  \bibinfo{year}{1965}\natexlab{}.
\newblock \showarticletitle{On well-quasi-ordering transfinite sequences}.
\newblock \bibinfo{journal}{\emph{Mathematical Proceedings of the Cambridge
  Philosophical Society}}  \bibinfo{volume}{61} (\bibinfo{year}{1965}),
  \bibinfo{pages}{33--39}.
\newblock
Issue 1.


\bibitem[Pouzet(2020)]%
        {Pouzet20}
\bibfield{author}{\bibinfo{person}{Maurice Pouzet}.}
  \bibinfo{year}{2020}\natexlab{}.
\newblock \showarticletitle{Well quasi ordering and embeddability of relational
  structures}. In \bibinfo{booktitle}{\emph{Proc.~{ALGOS} 2020}}.
\newblock
\urldef\tempurl%
\url{https://hal.science/hal-02918958/document}
\showURL{%
\tempurl}


\bibitem[Str{\'{\i}}brn{\'{a}}(1997)]%
        {Stribrna97}
\bibfield{author}{\bibinfo{person}{Jitka Str{\'{\i}}brn{\'{a}}}.}
  \bibinfo{year}{1997}\natexlab{}.
\newblock \showarticletitle{Decidability of strong bisimulation of {Basic
  Parallel Processes} using {Hilbert's} basis theorem}. In
  \bibinfo{booktitle}{\emph{Proc.~Infinity 1997}}
  \emph{(\bibinfo{series}{Electronic Notes in Theoretical Computer Science},
  Vol.~\bibinfo{volume}{9})}, \bibfield{editor}{\bibinfo{person}{Faron Moller}}
  (Ed.). \bibinfo{publisher}{Elsevier}, \bibinfo{pages}{44}.
\newblock


\end{thebibliography}

% !!!!!!!!!!!!!!!!!!!!!!!!!!!!!!!!!!!

%%
%% If your work has an appendix, this is the place to put it.
\appendix

% !TEX root = equivariant_ideals.tex

\end{document}